\documentclass[11pt,a4paper]{article}

\usepackage{amsfonts,amsmath,amssymb,amsthm}
\usepackage{latexsym,amscd}
\usepackage{amsbsy}
\usepackage{CJK}
\usepackage{indentfirst,mathrsfs}
\usepackage{latexsym,amscd}
\usepackage{amsbsy}
\usepackage[noadjust]{cite}
\usepackage{color}
\usepackage{multirow}
\usepackage{makecell}
\usepackage{float}
\usepackage{booktabs}
\usepackage{multirow}
\usepackage{authblk}
%\newtheorem{thm}[theorem]{Lemma}
%\newtheorem{lem}[theorem]{Lemma}
%\newtheorem{cor}[theorem]{Corollary}
%\newtheorem{conj}[theorem]{Conjecture}
%\newtheorem{prop}[theorem]{Proposition}
%\newtheorem{prob}[theorem]{Problem}
%\newtheorem{defn}[theorem]{Definition}
%\newtheorem{example}[theorem]{Example}
%\newtheorem{remark}[theorem]{Remark}

%\numberwithin{equation}{section}
%\newtheorem{thm}{Theorem}
\newtheorem{theorem}{Theorem}
\newtheorem{lem}{Lemma}
\newtheorem{cor}{Corollary}

\newtheorem{example}{Example}
\newtheorem{remark}{Remark}

\setlength{\oddsidemargin}{0cm} \setlength{\evensidemargin}{0cm}
\setlength{\textwidth}{16cm} \setlength{\textheight}{21cm}

\newcommand{\fq}{{\mathbb F}_{q}}
\newcommand{\fp}{{\mathbb F}_{p}}
\newcommand{\fqm}{{\mathbb F}_{q^m}}
\newcommand{\Tr}{{\rm {Tr}}}

\newcommand{\C}{{\mathcal{C}}}

\begin{document}

\title{Several classes of linear codes with few weights derived from \\Weil sums}
%\author{Zhao Hu, Mingxiu Qiu, Nian Li, Xiaohu Tang, Liwei Wu
%\thanks{
%%Z. Hu and N. Li are with the Key Laboratory of Intelligent Sensing System and Security (Hubei University), Ministry of Education, the Hubei Provincial Engineering Research Center of Intelligent Connected Vehicle Network Security, School of Cyber Science and Technology, Hubei University, Wuhan 430062, China.
%Z. Hu and N. Li are with the School of Cyber Science and Technology, Hubei University, Wuhan 430062, China; Key Laboratory of Intelligent Sensing System and Security (Hubei University), Ministry of Education, the Hubei Provincial Engineering Research Center of Intelligent Connected Vehicle Network Security, Hubei University, Wuhan 430062, China. N. Li is also with the State Key Laboratory of Integrated Service Networks, Xi'an 710071, China. 
%M. Qiu is with the Hubei Key Laboratory of Applied Mathematics, Faculty of Mathematics and Statistics, Hubei University, Wuhan, 430062, China.
%X. Tang is with the Information Coding $\&$ Transmission Key Lab of Sichuan Province, CSNMT Int. Coop. Res. Centre (MoST), Southwest Jiaotong University, Chengdu, 610031, China.
%L. Wu is with the Wuhan Maritime Communication Research Institute.
%Email: zhao.hu@aliyun.com, mingxiu.qiu@aliyun.com, nian.li@hubu.edu.cn, xhutang@swjtu.edu.cn, 2777642@qq.com}
%}

\author{Zhao Hu
\thanks{Z. Hu and N. Li are with the Key Laboratory of Intelligent Sensing System and Security (Hubei University), Ministry of Education, the Hubei Provincial Engineering Research Center of Intelligent Connected Vehicle Network Security, School of Cyber Science and Technology, Hubei University, Wuhan 430062. N. Li is also with the State Key Laboratory of Integrated Service Networks, Xi'an 710071, China. Email: zhao.hu@aliyun.com, nian.li@hubu.edu.cn},
Mingxiu Qiu
\thanks{M. Qiu is with the Hubei Key Laboratory of Applied Mathematics, Faculty of Mathematics and Statistics, Hubei University, Wuhan, 430062, China. Email: mingxiu.qiu@aliyun.com},
Nian Li, Xiaohu Tang
\thanks{X. Tang is with the Information Coding $\&$ Transmission Key Lab of Sichuan Province, CSNMT Int. Coop. Res. Centre (MoST), Southwest Jiaotong University, Chengdu, 610031, China. Email: xhutang@swjtu.edu.cn}
, Liwei Wu
\thanks{
L. Wu is with the Wuhan Maritime Communication Research Institute.
Email: 2777642@qq.com}
}
\date{}
\maketitle
\vspace{-4mm}
\begin{abstract}
  Linear codes with few weights have applications in secret sharing, authentication codes, association schemes and strongly regular graphs. In this paper, several classes of $t$-weight linear codes over $\fq$ are presented with the defining sets given by the intersection, difference and union of two certain sets, where $t=3,4,5,6$ and $q$ is an odd prime power. By using Weil sums and Gauss sums, the parameters and weight distributions of these codes are determined completely. Moreover, three classes of optimal codes meeting the Griesmer bound are obtained, and computer experiments show that many (almost) optimal codes can be derived from our constructions.
\end{abstract}
\noindent\textbf{Keywords:} Linear code $\cdot$ Weight distribution $\cdot$ Weil sum $\cdot$ Gauss sum.\\
\noindent\textbf{Mathematics Subject Classification:} 94A60,  14G50, 11T71

\section{Introduction}
Let $\fqm$  be the finite field with $q^m$ elements, where $q$ is a power of an odd prime $p$ and $m$ is a positive integer. An $[n,k,d]$ linear code $\C$ over $\fq$ is a $k$-dimensional subspace of $\fq^n$ with minimum Hamming distance $d$. An $[n,k,d]$ linear code $\C$ is called optimal (resp. almost optimal) if its parameters $n$, $k$ and $d$ (resp. $d+1$) meet any bound on linear codes \cite{HP}. The weight enumerator of a code $\C$ with length $n$ is the polynomial $1+A_1z+A_2z^2+\cdots+A_nz^n$, where $A_i$ denotes the number of codewords with Hamming weight $i$ in $\C$. The weight distribution $(1,A_1,\ldots,A_n)$ is a significant research topic in coding theory, which contains important information for estimating the error correcting capability and the probability of error detection and correction with respect to some algorithms \cite{KT}. A code $\C$ is said to be a $t$-weight code if the number of nonzero $A_i$ in the sequence $(1,A_1,\dots,A_n)$ is equal to $t$. Linear codes with few weights can be used in secret sharing schemes \cite{ADHK,CDY,YD}, association schemes \cite{CG}, strongly regular graphs \cite{CK} and authentication codes \cite{DW}.

In 2007, Ding and Niederreiter \cite{DN} introduced a nice and generic way to construct linear codes via trace functions. Let $D=\{d_1,d_2,\ldots,d_n\}\subset\fqm$ and define
\begin{align}\label{000}
\C_D=\{(\Tr(bd_1),\Tr(bd_2),\ldots,\Tr(bd_n)):b\in \fqm\},
\end{align}
where $\Tr(\cdot)$ is the trace function from $\fqm$ to $\fq$. Then $\C_D$ is a linear code of length $n$ over $\fq$. The set $D$ is called the defining set of $\C_D$. How to select $D$ such that $\C_D$ has
good parameters is an interesting research problem and many attempts have been made
in this direction to obtain good linear codes, see, for example, \cite{akl,d,hlz,hwlz,kk,ykt,zlfh,DD,WLDL,SQCY,SQRT,MOS}. For more related results, the reader is referred to the survey papers \cite{LS} and \cite{S0}. %If the defining set is well chosen, some good linear codes with few weights can be obtained. This construction technique was used in \cite{akl,d,hlz,hwlz,kk,ykt,zlfh} and \cite{DD} for obtaining linear codes with few weight.
Notably, Heng et al. \cite{HCXL} recently introduced a new defining set to construct linear codes by using the intersection and difference of two sets $\{x\in \mathcal{S}:\Tr(x^2)=t_1\}$ and $\{x\in \mathcal{S}:\Tr(x)=t_2\}$, where
$t_1,t_2\in \fq$ and $\mathcal{S}$ is defined by $\fqm^*=\fq^*\mathcal{S}=\{yz:y\in\fq^*, z\in \mathcal{S}\}$ with $z_1/z_2\notin\fq^*$ for distinct $z_1,z_2\in\mathcal{S}$.
In their work, several new classes of projective two-weight or three-weight linear codes were presented and their parameters and weight distributions were determined completely.
 %constructed two classes of $q$-ary linear codes ${\C}_D$ of the form \eqref{000} by the intersection and difference of sets $\{x\in S:\Tr(x^2)=t_1\}$ and $\{x\in S:\Tr(x)=t_2\}$, where $t_1\in\{0,\pm1\}\in\fq$, $t_2=0\in\fq$ and $S$ is a set consisting of the representative elements of the cosets of $\fqm^*$.
 %In 2017, Wang et al. \cite{WLDL} constructed $p$-ary linear code $\C_D$ of the form \eqref{000} by studying the defining set $D=\{x\in\fq^*:\Tr^q_p(x^{p^e+1})=0\}$, where $e$ is any natural number and $\Tr^q_p(\cdot)$ is the trace function from $\fq$ to $\fp$. They presented several classes of two-weight or three-weight linear codes, and determined their parameters and weight distributions.

In this paper, inspired by the previous works, we construct $q$-ary linear codes of the form \eqref{000} from the intersection, difference and union of sets $E_1=\{x\in \fqm^*:\Tr(x^{q^e+1})=u\}$ and $E_2=\{x\in \fqm^*:\Tr(x)=v\}$, where $u,v\in \fq$ and $e$ is a positive integer. Combining the sets $E_1$ and $E_2$, we define the following defining sets:
\begin{align}\nonumber\label{001}
{D}_{1}&=E_1 \cap E_2=\{x\in \mathbb{F}^*_{q^m}:{\Tr}(x^{q^e+1})=u,{\Tr}(x)=v\};\\
D_{2}&=E_1 \setminus E_2=\{x\in \mathbb{F}^*_{q^m}:{\Tr}(x^{q^e+1})=u,{\Tr}(x)\neq v\};\\\nonumber
D_{3}&=E_1 \cup E_2=\{x\in \mathbb{F}^*_{q^m}:{\Tr}(x^{q^e+1})=u\,\,{\rm or}\,\,{\Tr}(x)=v\}.
\end{align}
We mainly investigate the codes $\C_{D_1}$, $\C_{D_2}$ and $\C_{D_3}$ defined by \eqref{000} and \eqref{001} in this paper.
Note that if $m/\gcd(m,e)$ is odd, we have $\gcd(q^e+1,q^m-1)=2$ and it leads to $\{x^{q^e+1}:x\in \fqm^*\}=\{x^2:x\in \fqm^*\}$. Thus, for the case that $m/\gcd(m,e)$ is odd, the code $\C_{D_1}$ is the same as the codes in the nonprojective case constructed in \cite{HCXL}, and $\C_{D_2}$ and $\C_{D_3}$ can be studied similarly to the results in \cite{HCXL}. Therefore, we focus on the case that $m/\gcd(m,e)$ is even for our codes $\C_{D_1}$, $\C_{D_2}$ and $\C_{D_3}$.
%Notice that $\C_{D_1}$ defined by \eqref{000} and \eqref{001} is reduced to the linear codes in {\rm\cite{HCXL}} if $e=0,u\in\{0,\pm1\}$ and $v=0$. %When $u\in\fq^*$, it is difficult to determine the weight distribution of code $C_{D_2},C_{D_3}$, so we only consider the weight distribution of code $C_{D_2},C_{D_3}$ for $u=0$.
Through finer calculations on certain exponential sums, we obtain several classes of $t$-weight linear codes over $\fq$ from our constructions, where $t=3,4,5,6$. The parameters and weight distributions of these codes are completely determined by using Weil sums and Gauss sums. Notably, we obtain three classes of optimal codes meeting the Griesmer bound. Moreover, computer experiments using MAGMA programs show that many (almost) optimal linear codes can be derived from our constructions, as shown in Table \ref{tabb}.

The remainder of this paper is organized as follows. Section \ref{s2} introduces some basic concepts and auxiliary results on Gauss sums and Weil sums. Section \ref{s3} presents several classes of linear codes over $\fq$, and the parameters and weight distributions of these codes are determined completely. Section \ref{s4} provides the proofs of our main results. Section \ref{s5} concludes this paper.

\section{Preliminaries}\label{s2}
In this section, we present some preliminaries which will be used to prove our main results. From now on we fix the following notation:\begin{itemize}
  \item  Let $q$ be a power of an odd prime $p$.
  \item  Let $m=2\ell>0$ be an even integer.
  \item  Let $\fqm$ be the finite field with $q^m$ elements and $\fqm^*=\fqm\setminus\{0\}$.
  \item  Let $m_p \in \fp$ such that $m_p  = m \mod p$.
  \item  Let $e$ be a positive integer and $\alpha=\gcd(m,e)$.
  \item  $T=\{b\in \fqm:X^{q^{2e}}+X=-b^{q^e}$ is solvable in $\fqm$\}.
  \item  Let $\gamma\in\fqm$ denote a solution of the equation $X^{q^{2e}}+X=-b^{q^e}$ if $b\in T$.
  \item  Let $\Tr(\cdot)$ (resp. $\Tr^q_{p}(\cdot)$) be the trace function from $\fqm$ to $\fq$ (resp. $\mathbb{F}_{q}$ to $\mathbb{F}_{p}$).
  \item  Let $\chi$ and $\mathbb{\chi}_{1}$ be the canonical additive characters of $\fqm$ and $\fq$,  respectively.
  \item  Let $\eta$ and $\mathbb{\eta}_{1}$ be the quadratic multiplicative characters of $\fqm$ and $\fq$, respectively. We extend these quadratic characters by letting $\eta(0)=0$ and $\mathbb{\eta}_{1}(0)=0$.
\end{itemize}

It's known that for any $x\in\fq^*$ we have$$\eta(x):=\left\{\begin{array}{ll}
1, & \mbox{ if $m$ is even};\\
\eta_1(x), & \mbox{ if $m$ is odd}.
\end{array}\right.$$
The quadratic Gauss sum $G(\eta,\chi)$ over $\fqm$ is defined by $$G(\eta)=G(\eta,\chi)=\sum_{x\in\mathbb{F}_{q^m}^{*}}\mathbb\chi(x)\eta(x).$$
The explicit value of $G(\eta,\chi)$ is given as follows.
\begin{lem}{\rm\cite[Theorem 5.15] {LN}}\label{m1}
Let $q^m=p^s$, where $p$ is an odd prime and $s$ is a positive integer. Then we have $$G(\eta,\chi)=(-1)^{s-1}(-1)^{\frac{s(p-1)^2}{8}}p^{\frac{s}{2}}.$$
\end{lem}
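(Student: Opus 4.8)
The plan is to reduce the computation over $\fqm$, where $q^m=p^s$, to the prime field $\fp$ via the Davenport--Hasse lifting relation for Gauss sums, and then to invoke Gauss's classical evaluation of the quadratic Gauss sum over $\fp$. Throughout I write $\psi$ for the canonical additive character of $\fp$ and $\eta_0$ for the (extended) quadratic multiplicative character of $\fp$, and set $g=G(\eta_0,\psi)=\sum_{c\in\fp^*}\eta_0(c)\psi(c)$.

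First I would record the routine facts. Expanding $G(\eta,\chi)\overline{G(\eta,\chi)}$ and using orthogonality of additive characters gives $|G(\eta,\chi)|^2=q^m=p^s$; and the substitution $x\mapsto cx$ in $G(\eta,\chi)$ together with $\eta$ being quadratic gives $G(\eta,\chi)^2=\eta(-1)q^m$. (The same two computations over $\fp$ show $g^2=\eta_0(-1)p=(-1)^{(p-1)/2}p$, so $g=\pm\sqrt p$ if $p\equiv 1\pmod 4$ and $g=\pm i\sqrt p$ if $p\equiv 3\pmod 4$; the entire content of the lemma is the removal of these ambiguous signs.) For the reduction, note that the norm map $N\colon\fqm^*\to\fp^*$ is surjective, so $\eta_0\circ N$ is a nontrivial character of $\fqm^*$ whose square is trivial, whence $\eta=\eta_0\circ N$; likewise $\chi=\psi\circ\Tr_{\fqm/\fp}$ by the definition of the canonical additive character. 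The Davenport--Hasse theorem on lifted Gauss sums then yields
$$G(\eta,\chi)=(-1)^{s-1}g^{\,s},$$
which reduces the lemma to the determination of $g$.

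It remains to pin down the sign of $g$, which is the one genuinely deep ingredient. Using $\#\{c\in\fp:c^2=a\}=1+\eta_0(a)$ one rewrites $g=\sum_{c=0}^{p-1}\zeta^{c^2}$ with $\zeta=e^{2\pi i/p}$, and observes that this is the trace of the matrix $V=(\zeta^{jk})_{0\le j,k\le p-1}$. Since $V^2=pP$ for the permutation matrix $P$ of $k\mapsto -k$ on $\mathbb{Z}/p$ (so $P^2=I$), every eigenvalue of $V$ lies in $\{\sqrt p,-\sqrt p,i\sqrt p,-i\sqrt p\}$; the fixed-point/transposition structure of $P$ fixes the combined multiplicities of $\{\sqrt p,-\sqrt p\}$ and of $\{i\sqrt p,-i\sqrt p\}$, and then comparing $\mathrm{tr}(V)=g$ and the Vandermonde value of $\det V$ with the remaining multiplicities forces $g=\sqrt p$ when $p\equiv 1\pmod 4$ and $g=i\sqrt p$ when $p\equiv 3\pmod 4$ (this is Schur's argument; alternatively one may use the transformation law of the Jacobi theta function or Kronecker's contour integral). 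Substituting into $G(\eta,\chi)=(-1)^{s-1}g^{\,s}$ and separating the two residue classes of $p$ modulo $4$ — using that $(p-1)^2/8$ is an even integer when $p\equiv 1\pmod 4$, while for $p\equiv 3\pmod 4$ the symbol $(-1)^{s(p-1)^2/8}$ is understood as $i^s$ (equivalently $i^{\,s(p-1)^2/4}$) — gives exactly $G(\eta,\chi)=(-1)^{s-1}(-1)^{s(p-1)^2/8}p^{s/2}$. The main obstacle is this sign determination for $g$; the Davenport--Hasse reduction and the modulus computations are entirely standard.
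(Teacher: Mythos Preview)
The paper does not supply its own proof of this lemma; it is simply quoted as \cite[Theorem~5.15]{LN}, so there is nothing in the paper to compare against. Your argument is the standard one: reduce to the prime field via the Davenport--Hasse lifting relation $G(\eta,\chi)=(-1)^{s-1}g^{s}$, and then invoke Gauss's sign determination $g=\sqrt{p}$ for $p\equiv 1\pmod 4$ and $g=i\sqrt{p}$ for $p\equiv 3\pmod 4$. This is exactly the route Lidl--Niederreiter themselves take (the Davenport--Hasse relation is their Theorem~5.14, the base-field sign is their Theorem~5.15 combined with the classical evaluation), so your proposal is aligned with the cited source.

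Two small remarks. First, your sketch of Schur's eigenvalue argument is accurate in outline but compressed at the crucial step: after noting that the multiplicities of $\pm\sqrt{p}$ sum to $(p+1)/2$ and those of $\pm i\sqrt{p}$ to $(p-1)/2$, one still needs both the exact value of $\det V$ (a Vandermonde computation giving a specific power of $i$) \emph{and} the already-known constraint $g^2=(-1)^{(p-1)/2}p$ to pin down all four multiplicities; you should make explicit that these two pieces of information together, not either alone, eliminate the remaining sign ambiguity. Second, you correctly flag that the expression $(-1)^{s(p-1)^2/8}$ is only literally well-defined when $p\equiv 1\pmod 4$; for $p\equiv 3\pmod 4$ it is a half-integer exponent and must be read as $i^{\,s(p-1)^2/4}=i^{s}$. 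That convention is the one intended in \cite{LN} and in the paper, so your handling of it is appropriate.
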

Weil sums are defined by $\sum_{x\in\mathbb{F}_{q^m}}\chi(f(x))$ where $f(x)\in\fqm[x]$. If $f(x)$ is a quadratic polynomial and $q$ is odd, Weil sums have an interesting relationship with quadratic Gauss sums.
\begin{lem}{\rm\cite[Theorem 5.33] {LN}}\label{m2}
Let $f(x)={a}_{2}x^2+{a}_{1}x+{a}_{0}\in\fqm[x]$ with ${a}_{2}\neq0$. Then
$$\sum_{x\in\fqm}\chi(f(x))=\chi({a}_{0}-{a}_{1}^2(4{a}_{2})^{-1})\eta({a}_{2})G(\eta,\chi).$$
\end{lem}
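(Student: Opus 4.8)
The plan is to reduce the Weil sum of a general quadratic polynomial to the ``pure square'' sum $\sum_{y\in\fqm}\chi(a_2y^2)$ by completing the square, and then to identify this square sum with $\eta(a_2)G(\eta,\chi)$.

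First I would complete the square: since $q$ is odd, $2a_2$ is invertible in $\fqm$, and
$$f(x)=a_2\bigl(x+a_1(2a_2)^{-1}\bigr)^2+\bigl(a_0-a_1^2(4a_2)^{-1}\bigr).$$
As $y\mapsto y+a_1(2a_2)^{-1}$ is a bijection of $\fqm$ and $\chi$ is additive, this gives
$$\sum_{x\in\fqm}\chi(f(x))=\chi\bigl(a_0-a_1^2(4a_2)^{-1}\bigr)\sum_{y\in\fqm}\chi(a_2y^2).$$

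Next I would evaluate $S:=\sum_{y\in\fqm}\chi(a_2y^2)$. The key point is that for each $c\in\fqm$ the number of $y\in\fqm$ with $y^2=c$ equals $1+\eta(c)$ (with the convention $\eta(0)=0$ this is $1$ for $c=0$, $2$ for a nonzero square, and $0$ for a nonsquare). Hence
$$S=\sum_{c\in\fqm}\bigl(1+\eta(c)\bigr)\chi(a_2c)=\sum_{c\in\fqm}\chi(a_2c)+\sum_{c\in\fqm}\eta(c)\chi(a_2c).$$
The first sum is $0$ because $a_2\neq0$ makes $c\mapsto a_2c$ a bijection of $\fqm$ and $\chi$ is a nontrivial additive character. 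In the second sum, put $z=a_2c$ and use $\eta(c)=\eta(a_2)^{-1}\eta(z)=\eta(a_2)\eta(z)$ (as $\eta(a_2)^2=1$) to obtain $\eta(a_2)\sum_{z\in\fqm}\eta(z)\chi(z)=\eta(a_2)G(\eta,\chi)$, the term $z=0$ contributing nothing. Thus $S=\eta(a_2)G(\eta,\chi)$, and combining this with the displayed factorization proves the identity.

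This is a routine computation; the only things requiring a little care are the counting identity $\#\{y\in\fqm:y^2=c\}=1+\eta(c)$ together with the bookkeeping of the $c=0$ and $z=0$ terms under the convention $\eta(0)=0$, and the observation that oddness of $q$ is exactly what makes completing the square possible. There is no substantive obstacle here: notably, the closed form of $G(\eta,\chi)$ from Lemma \ref{m1} is not even needed for this argument.
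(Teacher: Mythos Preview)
Your proof is correct and is precisely the standard argument for this classical identity. The paper does not prove this lemma at all; it simply quotes it as \cite[Theorem 5.33]{LN}, and your completing-the-square reduction followed by the counting identity $\#\{y:y^2=c\}=1+\eta(c)$ is exactly the textbook proof given there.
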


For $a\in\mathbb{F}_{q^m}^{*}, b\in\fqm$, define the Weil sum $S(a,b)$ by $$S(a,b)=\sum_{x\in\fqm}\chi(ax^{q^e+1}+bx).$$
The following lemmas will be used to prove our main results.
\begin{lem}{\rm\cite[Theorem 2] {RSC}}\label{m4}
If ${\frac{m}{\alpha}}$ be even, where $\alpha=\gcd(m,e)$, then $$S(a,0)=\left\{\begin{array}{ll}
q^{\ell}, & {\rm if} \,\, \frac{\ell}{\alpha} \,\,{\rm is\,\, even}\,\, {\rm and}\,\, a^{\frac{q^m-1}{q^\alpha+1}}\neq(-1)^\frac{\ell}{\alpha} ;\\
-q^{\ell+\alpha}, & {\rm if} \,\, \frac{\ell}{\alpha} \,\, {\rm is \,\, even} \,\,  {\rm and}\,\, a^{\frac{q^m-1}{q^\alpha+1}}=(-1)^\frac{\ell}{\alpha} ;\\
-q^{\ell}, & {\rm if} \,\, \frac{\ell}{\alpha} \,\,{\rm is \,\, odd} \,\,{\rm and}\,\, a^{\frac{q^m-1}{q^\alpha+1}}\neq(-1)^\frac{\ell}{\alpha};\\
q^{\ell+\alpha}, & {\rm if} \,\, \frac{\ell}{\alpha} \,\,{\rm is \,\, odd }\,\, {\rm and}\,\, a^{\frac{q^m-1}{q^\alpha+1}}=(-1)^\frac{\ell}{\alpha} .\\
\end{array}\right.$$
\end{lem}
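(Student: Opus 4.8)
The plan is to recognise $S(a,0)$ as the additive character sum of a quadratic form over $\fq$ and to read off its value from the rank and discriminant of that form. Since $\chi=\chi_1\circ\Tr$, we have $S(a,0)=\sum_{x\in\fqm}\chi_1(Q_a(x))$, where $Q_a(x)=\Tr(ax^{q^e+1})$ is a quadratic form on the $m$-dimensional $\fq$-vector space $\fqm$. Diagonalising $Q_a$ over $\fq$ makes the sum factor into one-variable Gaussian sums, which are evaluated by Lemma~\ref{m2} (over $\fq$) and Lemma~\ref{m1}, using $G(\eta_1,\chi_1)^2=\eta_1(-1)q$; this gives the familiar evaluation: if $Q_a$ has even rank $r$, then $\sum_{x}\chi_1(Q_a(x))=q^{\,m-r/2}\,\eta_1\!\left((-1)^{r/2}\Delta\right)$, where $\Delta$ is the discriminant of the non-degenerate part of $Q_a$. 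So the proof breaks into three tasks: compute $r$; decide which of its possible values occurs; and pin down the sign $\eta_1((-1)^{r/2}\Delta)$.

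For the first two tasks, the symmetric bilinear form attached to $Q_a$ is $B_a(x,y)=\Tr(yL_a(x))$ with $L_a(x)=ax^{q^e}+a^{q^{m-e}}x^{q^{m-e}}$, so the radical of $Q_a$ equals $\ker L_a$. Applying the Frobenius $(\cdot)^{q^e}$ turns $L_a(x)=0$ into $x^{q^{2e}}+a^{1-q^e}x=0$, whose nonzero roots satisfy $x^{q^{2e}-1}=-a^{1-q^e}$; there are either none of these or $\gcd(q^{2e}-1,q^m-1)$ of them. Since $m/\alpha$ even forces $e/\alpha$ odd and hence $\gcd(2e,m)=2\alpha$, this gcd equals $q^{2\alpha}-1$, so $\dim_{\fq}\ker L_a\in\{0,2\alpha\}$ and $r\in\{m,\,m-2\alpha\}$, which already yields the magnitudes $q^{\ell}$ and $q^{\ell+\alpha}$ respectively. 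The degenerate case $r=m-2\alpha$ occurs exactly when $-a^{1-q^e}$ is a $(q^{2e}-1)$-th power in $\fqm^*$, i.e.\ when $(-a^{1-q^e})^{(q^m-1)/(q^{2\alpha}-1)}=1$. Here $(-1)^{(q^m-1)/(q^{2\alpha}-1)}=(-1)^{\ell/\alpha}$ because $\sum_{i=0}^{\ell/\alpha-1}q^{2\alpha i}\equiv\ell/\alpha\pmod 2$; and, using $q^{2\alpha}-1=(q^\alpha-1)(q^\alpha+1)$ together with $(q^\alpha-1)\mid(q^e-1)$, one finds $a^{(1-q^e)(q^m-1)/(q^{2\alpha}-1)}=\left(a^{(q^m-1)/(q^\alpha+1)}\right)^{-N}$ with $N=\tfrac{q^e-1}{q^\alpha-1}\equiv1\pmod{q^\alpha+1}$ (as $e/\alpha$ is odd). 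Since $a^{(q^m-1)/(q^\alpha+1)}$ is a $(q^\alpha+1)$-th root of unity, the condition collapses to $a^{(q^m-1)/(q^\alpha+1)}=(-1)^{\ell/\alpha}$, matching the case split in the statement.

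For the signs, observe first that replacing $x$ by $cx$ shows $S(a,0)$ depends only on the coset of $a$ modulo $(\fqm^*)^{q^e+1}=(\fqm^*)^{q^\alpha+1}$; there are $q^\alpha+1$ such cosets, exactly one of which is the degenerate one, so $S(a,0)$ takes only two values on $\fqm^*$ — say $v_1=\pm q^{\ell+\alpha}$ on the degenerate coset and $v_0=\pm q^{\ell}$ on each of the other $q^\alpha$ cosets. The orthogonality relation $\sum_{a\in\fqm^*}S(a,0)=\sum_{x\in\fqm}\sum_{a\in\fqm^*}\chi(ax^{q^e+1})=0$ then forces $v_1+q^\alpha v_0=0$. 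Thus the whole problem reduces to a single sign: the value of $\eta_1((-1)^{\ell}\Delta)$ for a non-degenerate $Q_a$. This is the main obstacle, and I would settle it by an explicit discriminant computation — using $\{x^{q^e+1}:x\in\fqm\}=\{x^{q^\alpha+1}:x\in\fqm\}$ to reduce to $e=\alpha$, writing $x^{q^\alpha+1}$ in coordinates adapted to the tower $\fqm\supset\fq^{2\alpha}\supset\fq^{\alpha}\supset\fq$ so that $Q_a$ becomes an orthogonal sum of scaled trace forms, and then evaluating $\eta_1$ of the resulting discriminant via Lemma~\ref{m1}. This should give $v_0=(-1)^{\ell/\alpha}q^{\ell}$, whence $v_1=-q^\alpha v_0=(-1)^{\ell/\alpha+1}q^{\ell+\alpha}$, and assembling the four cases completes the proof. (Alternatively, since $q^{\alpha}\equiv-1\pmod{q^\alpha+1}$ the relevant Gaussian periods are of semiprimitive type, with classically known values; the sign bookkeeping remains the crux.)
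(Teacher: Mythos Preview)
The paper itself does not prove this lemma; it is quoted verbatim from Coulter~\cite[Theorem~2]{RSC} and used as a black box, so there is no in-paper argument to compare against. Your quadratic-form strategy is in fact the line taken in the original reference: Coulter also identifies $Q_a(x)=\Tr(ax^{q^e+1})$ as an $\fq$-quadratic form, computes its radical via the linearised polynomial $a^{q^e}X^{q^{2e}}+aX$ (this is exactly Lemma~\ref{m5} here), and reads off $|S(a,0)|$ from the rank. Your derivation of the rank dichotomy and of the condition $a^{(q^m-1)/(q^\alpha+1)}=(-1)^{\ell/\alpha}$ is correct and cleanly done.

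Two remarks, one minor and one substantive. Minor: when you assert that $S(a,0)$ ``takes only two values on $\fqm^*$'', the rank dichotomy by itself yields at most four (two magnitudes, each with a possible $\pm$), and constancy on $(q^e{+}1)$-th-power cosets does not immediately rule out sign variation among the $q^\alpha$ non-degenerate cosets. The fix is implicit in what you already have: if $n_{\pm}$ counts non-degenerate cosets with value $\pm q^\ell$, then your orthogonality relation becomes $v_1+(n_+-n_-)q^\ell=0$ with $|v_1|=q^{\ell+\alpha}$, forcing $|n_+-n_-|=q^\alpha=n_++n_-$ and hence $n_+n_-=0$. So the reduction to a single sign is valid, but deserves this one extra line.

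Substantive: the proof is not complete, and you are candid about it. Everything collapses to the sign $\eta_1\!\left((-1)^\ell\Delta\right)$ for a non-degenerate $Q_a$, and here you only outline a plan (reduce to $e=\alpha$; decompose along the tower $\fqm\supset\mathbb{F}_{q^{2\alpha}}\supset\mathbb{F}_{q^{\alpha}}\supset\fq$; evaluate via Lemma~\ref{m1}) without executing it. That step is exactly the heart of Coulter's paper, and it does require genuine work --- either the explicit discriminant computation you sketch, or the semiprimitive Gauss-sum evaluation you mention as an alternative. Until that step is written out, what you have is a correct reduction rather than a complete proof.
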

\begin{lem}{\rm\cite[Theorem 4.1] {RSC}}\label{m5}
The equation $$a^{q^e}X^{q^{2e}}+aX=0$$ is solvable over $\mathbb{F}_{q^m}^{*}$ if and only if $\frac{m}{\alpha}$ is even and $a^{\frac{q^m-1}{q^\alpha+1}}=(-1)^\frac{\ell}{\alpha}$, where $\alpha=\gcd(m,e)$. There are $q^{2\alpha}-1$ non-zero solutions.
\end{lem}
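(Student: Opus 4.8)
The plan is to factor out $X$ and reduce the linearized equation to a single power equation in the cyclic group $\fqm^*$, which can then be counted by elementary means. First I would write
\[
a^{q^e}X^{q^{2e}}+aX=X\bigl(a^{q^e}X^{q^{2e}-1}+a\bigr),
\]
so that the nonzero roots are precisely the solutions in $\fqm^*$ of $X^{q^{2e}-1}=-a^{1-q^e}=:c$ with $c\in\fqm^*$. I would then use two standard facts: $\gcd(q^{2e}-1,q^m-1)=q^{\gcd(2e,m)}-1$, and in a cyclic group of order $n$ the equation $X^{k}=c$ has exactly $\gcd(k,n)$ solutions if $c^{\,n/\gcd(k,n)}=1$ and none otherwise. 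Writing $m=\alpha m_0$, $e=\alpha e_0$ with $\gcd(m_0,e_0)=1$ gives $\gcd(2e,m)=\alpha\gcd(2,m_0)$, so this equals $\alpha$ when $m/\alpha$ is odd and $2\alpha$ when $m/\alpha$ is even; crucially, in the even case $e_0=e/\alpha$ must be odd, being coprime to the even number $m_0$. Set $n=q^m-1$ and $d=\gcd(q^{2e}-1,n)$; the whole statement reduces to evaluating $c^{\,n/d}$.

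If $m/\alpha$ is odd, then $d=q^\alpha-1$ and $n/d=1+q^\alpha+\cdots+q^{\alpha(m_0-1)}\equiv m_0\equiv1\pmod2$, so $(-1)^{n/d}=-1$; moreover $q^\alpha-1\mid q^e-1$, so $(1-q^e)\,n/d$ is an integer multiple of $n$ and $a^{(1-q^e)n/d}=1$. Hence $c^{\,n/d}=(-1)^{n/d}a^{(1-q^e)n/d}=-1\neq1$, so there is no nonzero root. If $m/\alpha$ is even, then $d=q^{2\alpha}-1=(q^\alpha-1)(q^\alpha+1)$ and $n/d=1+q^{2\alpha}+\cdots+q^{2\alpha(\ell/\alpha-1)}\equiv\ell/\alpha\pmod2$, so $(-1)^{n/d}=(-1)^{\ell/\alpha}$. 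Factoring $q^e-1=(q^\alpha-1)M$ with $M=1+q^\alpha+\cdots+q^{\alpha(e_0-1)}$ gives $(q^e-1)\,n/d=M\cdot n/(q^\alpha+1)$, so
\[
c^{\,n/d}=(-1)^{\ell/\alpha}\Bigl(a^{\,n/(q^\alpha+1)}\Bigr)^{-M}.
\]
Now $q^\alpha\equiv-1\pmod{q^\alpha+1}$ and $e_0$ is odd, so $M\equiv1\pmod{q^\alpha+1}$; since $a^{\,n/(q^\alpha+1)}$ is a $(q^\alpha+1)$-th root of unity this collapses to $c^{\,n/d}=(-1)^{\ell/\alpha}a^{-n/(q^\alpha+1)}$. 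Therefore the equation is solvable over $\fqm^*$ exactly when $a^{(q^m-1)/(q^\alpha+1)}=(-1)^{\ell/\alpha}$ (using $(-1)^{-1}=-1$), and in that case it has exactly $d=q^{2\alpha}-1$ nonzero solutions: all solutions are nonzero since $c\neq0$, and, as $2\alpha\mid m$, the group of $(q^{2e}-1)$-th roots of unity in $\fqm^*$ equals $\mathbb{F}_{q^{2\alpha}}^{*}$, so the solution set is the coset $X_0\,\mathbb{F}_{q^{2\alpha}}^{*}$ of any fixed solution $X_0$.

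The computations above are routine; the one delicate point is the observation that $m/\alpha$ even forces $e/\alpha$ odd, without which the reduction $M\equiv1\pmod{q^\alpha+1}$, and hence the clean form of the exponent $\tfrac{q^m-1}{q^\alpha+1}$, would break down — together with some care about signs, recalling that $-1$ is its own inverse. A shorter route for the case $m/\alpha$ even is to identify the root set of $a^{q^e}X^{q^{2e}}+aX=0$ with the radical of the bilinear form $B(x,y)=\Tr\bigl(a(x^{q^e}y+xy^{q^e})\bigr)$ attached to the quadratic form $x\mapsto\Tr(ax^{q^e+1})$ (after a Frobenius twist the defining relation of the radical becomes exactly our equation), so that Lemma \ref{m4}, via $|S(a,0)|=q^{(m+r)/2}$ with $r$ the radical dimension, yields directly that the radical has $q^{2\alpha}$ elements precisely when $a^{(q^m-1)/(q^\alpha+1)}=(-1)^{\ell/\alpha}$ (the case $|S(a,0)|=q^{\ell+\alpha}$) and is trivial otherwise; the case $m/\alpha$ odd would still be handled by the elementary computation above.
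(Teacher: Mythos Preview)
Your argument is correct. The paper does not prove this lemma at all: it is quoted verbatim from Coulter \cite[Theorem~4.1]{RSC} and used as a black box, so there is no ``paper's own proof'' to compare against. What you have supplied is a clean, self-contained elementary proof via the cyclic-group solvability criterion, and every step checks out --- in particular the key parity observation that $m/\alpha$ even forces $e/\alpha$ odd, which is exactly what makes $M\equiv 1\pmod{q^\alpha+1}$ work, and the sign bookkeeping leading to $c^{n/d}=(-1)^{\ell/\alpha}\bigl(a^{(q^m-1)/(q^\alpha+1)}\bigr)^{-1}$. The alternative quadratic-form route you sketch (identifying the solution set with the radical of the bilinear form attached to $x\mapsto\Tr(ax^{q^e+1})$ and reading off its size from $|S(a,0)|$ via Lemma~\ref{m4}) is essentially Coulter's own perspective in \cite{RSC}, so your main argument is genuinely more elementary than the cited source, at the cost of a few careful congruence computations.
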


From \cite[Theorem 1] {RSCF}, we have the following result on $S(a,b)$ for $\frac{m}{\alpha}$ even.
\begin{lem}{\rm\cite[Theorem 1] {RSCF}}\label{m6}
Suppose $f(X)=a^{q^e}X^{q^{2e}}+aX$ is a permutation polynomial over $\fqm$. Let ${x}_{0}$ be the unique solution of the equation $f(X)=-b^{q^e}$ and $\alpha=\gcd(m,e)$.
If $\frac{m}{\alpha}$ is even, then $a^{\frac{q^m-1}{q^\alpha+1}}\neq(-1)^\frac{\ell}{\alpha}$ and $$S(a,b)=(-1)^{\frac{\ell}{\alpha}}q^{\ell}\chi(-a{x}_{0}^{q^e+1}).$$
%The evaluation of $S(a,b)$ partitions into the following two cases:
%\begin{enumerate}
%  \item If $\frac{m}{\alpha}$ is odd, then$$S(a,b)=G(\eta)\eta(a)\chi(-a{x}_{0}^{q^e+1})=\left\{\begin{array}{ll}
%(-1)^{m-1}q^{\frac{m}{2}}\eta(a)\chi(-a{x}_{0}^{q^e+1}), & {\rm if} \,\, q \equiv1\pmod{4} ,\\
%(-1)^{m-1}(-1)^{\frac{m}{2}}q^{\frac{m}{2}}\eta(a)\chi(-a{x}_{0}^{q^e+1}), & {\rm if} \,\, q \equiv3\pmod{4} .
%\end{array}\right.$$
%  \item If $\frac{m}{\alpha}$ is even, then $a^{\frac{q^m-1}{q^\alpha+1}}\neq(-1)^\frac{\ell}{\alpha}$ and $$S(a,b)=(-1)^{\frac{\ell}{\alpha}}q^{\ell}\chi(-a{x}_{0}^{q^e+1}).$$
%\end{enumerate}
\end{lem}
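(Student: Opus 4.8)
The plan is to evaluate $S(a,b)$ by a completing-the-square type substitution that reduces it to $S(a,0)$, and then to invoke Lemmas \ref{m4} and \ref{m5}. Concretely, I would replace $x$ by $x+\gamma$ in $S(a,b)=\sum_{x\in\fqm}\chi(ax^{q^e+1}+bx)$, which is legitimate since $x\mapsto x+\gamma$ permutes $\fqm$. Using $(x+\gamma)^{q^e+1}=x^{q^e+1}+\gamma^{q^e}x+\gamma x^{q^e}+\gamma^{q^e+1}$ and the invariance $\chi(w)=\chi(w^{q^{m-e}})$ (so that $\chi(a\gamma x^{q^e})=\chi(a^{q^{m-e}}\gamma^{q^{m-e}}x)$), one obtains
$$S(a,b)=\chi\bigl(a\gamma^{q^e+1}+b\gamma\bigr)\sum_{x\in\fqm}\chi\Bigl(ax^{q^e+1}+\bigl(a\gamma^{q^e}+a^{q^{m-e}}\gamma^{q^{m-e}}+b\bigr)x\Bigr).$$
Raising the bracketed coefficient to the $q^e$-th power shows it vanishes precisely when $a^{q^e}\gamma^{q^{2e}}+a\gamma=-b^{q^e}$, i.e.\ when $f(\gamma)=-b^{q^e}$; since $f$ is a permutation polynomial this equation has the unique root $\gamma=x_0$, and with that choice $S(a,b)=\chi\bigl(ax_0^{q^e+1}+bx_0\bigr)\,S(a,0)$.

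Next I would pin down $S(a,0)$ using Lemma \ref{m4}. The permutation hypothesis means $a^{q^e}X^{q^{2e}}+aX=0$ has no nonzero solution, so Lemma \ref{m5} rules out the simultaneous occurrence of $\frac{m}{\alpha}$ even and $a^{\frac{q^m-1}{q^\alpha+1}}=(-1)^{\ell/\alpha}$; since $\frac{m}{\alpha}$ is assumed even, we must have $a^{\frac{q^m-1}{q^\alpha+1}}\neq(-1)^{\ell/\alpha}$, which is exactly the auxiliary assertion in the statement. The first and third cases of Lemma \ref{m4} then give $S(a,0)=q^\ell$ when $\ell/\alpha$ is even and $S(a,0)=-q^\ell$ when $\ell/\alpha$ is odd, i.e.\ $S(a,0)=(-1)^{\ell/\alpha}q^\ell$.

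Finally I would simplify the phase $\chi(ax_0^{q^e+1}+bx_0)$. The coefficient equation with $\gamma=x_0$ reads $b=-ax_0^{q^e}-a^{q^{m-e}}x_0^{q^{m-e}}$, hence $bx_0=-ax_0^{q^e+1}-a^{q^{m-e}}x_0^{q^{m-e}+1}$; applying the Frobenius $w\mapsto w^{q^e}$ inside the trace gives $\Tr(a^{q^{m-e}}x_0^{q^{m-e}+1})=\Tr(ax_0^{q^e+1})$, so $\Tr(bx_0)=-2\Tr(ax_0^{q^e+1})$ and therefore $\chi(ax_0^{q^e+1}+bx_0)=\chi(-ax_0^{q^e+1})$. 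Combining the three steps yields $S(a,b)=(-1)^{\ell/\alpha}q^\ell\chi(-ax_0^{q^e+1})$. I expect the only point requiring real care to be the bookkeeping of the Frobenius exponents $q^{\pm e}$ when moving terms across $\Tr$ and $\chi$ — this is precisely what makes the linear coefficient coincide with $f(\gamma)+b^{q^e}$ and what collapses the phase — together with the clean use of the permutation hypothesis to discard the two exceptional cases of Lemma \ref{m4}.
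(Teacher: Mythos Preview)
The paper does not supply its own proof of this lemma; it is quoted verbatim as \cite[Theorem~1]{RSCF} and used as a black box. Your argument is correct and is in fact the standard proof (essentially Coulter's original): the shift $x\mapsto x+x_0$ collapses $S(a,b)$ to $\chi(ax_0^{q^e+1}+bx_0)\,S(a,0)$, the permutation hypothesis combined with Lemma~\ref{m5} forces $a^{(q^m-1)/(q^\alpha+1)}\neq(-1)^{\ell/\alpha}$, Lemma~\ref{m4} then yields $S(a,0)=(-1)^{\ell/\alpha}q^\ell$, and the Frobenius identity $\Tr(a^{q^{m-e}}x_0^{q^{m-e}+1})=\Tr(ax_0^{q^e+1})$ simplifies the phase to $\chi(-ax_0^{q^e+1})$.
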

\begin{lem}{\rm\cite[Theorem 2] {RSCF}}\label{m7}
Suppose $f(X)=a^{q^e}X^{q^{2e}}+aX$ is not a permutation polynomial over $\fqm$, then $S(a,b)=0$ unless the equation $f(X)=-b^{q^e}$ is solvable. If the equation $f(X)=-b^{q^e}$ is solvable, with some solution ${x}_{0}$, then $$S(a,b)=-(-1)^{\frac{\ell}{\alpha}}q^{\ell+\alpha}\chi(-a{x}_{0}^{q^e+1}).$$
\end{lem}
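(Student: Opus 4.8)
The plan is to treat the two assertions separately, reducing $S(a,b)$ to the one-variable sum $S(a,0)$ of Lemma~\ref{m4} in the solvable case and using a Parseval-type count for the vanishing case. Assume first that $x_0\in\fqm$ satisfies $f(x_0)=-b^{q^e}$. Substituting $x=x_0+t$ into $S(a,b)$ and expanding $(x_0+t)^{q^e+1}=x_0^{q^e+1}+x_0^{q^e}t+x_0t^{q^e}+t^{q^e+1}$, I would collect the part linear in $t$; moving a Frobenius around via $\Tr(ax_0t^{q^e})=\Tr\big((ax_0t^{q^e})^{q^{m-e}}\big)=\Tr\big(a^{q^{m-e}}x_0^{q^{m-e}}t\big)$ turns it into $\Tr\big((ax_0^{q^e}+a^{q^{m-e}}x_0^{q^{m-e}}+b)\,t\big)$. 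Raising the identity $f(x_0)=-b^{q^e}$ to the $q^{m-e}$-th power gives $ax_0^{q^e}+a^{q^{m-e}}x_0^{q^{m-e}}=-b$, so the linear coefficient vanishes and $S(a,b)=\chi\big(ax_0^{q^e+1}+bx_0\big)S(a,0)$. Using $f(x_0)=-b^{q^e}$ once more, $\Tr(bx_0)=\Tr(b^{q^e}x_0^{q^e})=-\Tr\big((a^{q^e}x_0^{q^{2e}}+ax_0)x_0^{q^e}\big)=-2\Tr(ax_0^{q^e+1})$, so $\Tr(ax_0^{q^e+1}+bx_0)=-\Tr(ax_0^{q^e+1})$ and hence $S(a,b)=\chi(-ax_0^{q^e+1})S(a,0)$. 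Since $f$ is not a permutation polynomial, Lemma~\ref{m5} forces $m/\alpha$ to be even and $a^{(q^m-1)/(q^\alpha+1)}=(-1)^{\ell/\alpha}$; feeding this into Lemma~\ref{m4} (its second line when $\ell/\alpha$ is even, its fourth line when $\ell/\alpha$ is odd) yields $S(a,0)=-(-1)^{\ell/\alpha}q^{\ell+\alpha}$ in both parities, giving the stated formula.

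It remains to show $S(a,b)=0$ whenever $f(X)=-b^{q^e}$ is unsolvable. Here I would count: by orthogonality of additive characters,
\[
\sum_{b\in\fqm}\big|S(a,b)\big|^2=\sum_{x,y\in\fqm}\chi\big(a(x^{q^e+1}-y^{q^e+1})\big)\sum_{b\in\fqm}\chi\big(b(x-y)\big)=q^{2m}.
\]
The map $X\mapsto f(X)$ is $\fq$-linear with kernel of cardinality $q^{2\alpha}$ by Lemma~\ref{m5}, so $|\mathrm{Im}(f)|=q^{m-2\alpha}$; since $\mathrm{Im}(f)$ is a subspace (hence stable under negation) and $b\mapsto b^{q^e}$ is a bijection of $\fqm$, exactly $q^{m-2\alpha}$ values of $b$ make $f(X)=-b^{q^e}$ solvable. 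For each such $b$ the first paragraph gives $\big|S(a,b)\big|^2=\big|S(a,0)\big|^2=q^{2\ell+2\alpha}=q^{m+2\alpha}$, and these contributions alone already total $q^{m-2\alpha}\cdot q^{m+2\alpha}=q^{2m}$. As every summand is nonnegative, $S(a,b)$ must be $0$ for the remaining $b$.

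I expect the delicate step to be the reduction $S(a,b)=\chi(-ax_0^{q^e+1})S(a,0)$: one has to track the Frobenius exponents carefully in the substitution $x=x_0+t$ so that the cross term truly cancels (this is where $f(x_0)=-b^{q^e}$, raised to the $q^{m-e}$-th power, enters), and then use the defining relation of $x_0$ a second time to collapse the residual phase $\Tr(ax_0^{q^e+1}+bx_0)$ to $-\Tr(ax_0^{q^e+1})$. The rest is bookkeeping on top of Lemmas~\ref{m4} and~\ref{m5}.
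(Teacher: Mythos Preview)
Your argument is correct. The paper does not supply its own proof of this lemma but cites Coulter~\cite{RSCF}; your substitution $x\mapsto x_0+t$ to reduce $S(a,b)$ to $\chi(-ax_0^{q^e+1})S(a,0)$, followed by Lemma~\ref{m4}, is exactly Coulter's original derivation, and your Parseval count for the vanishing case is a clean variant of the same bookkeeping.
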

\begin{lem}{\rm\cite[Lemma 13] {JLF}}\label{mm1}
If $\frac{\ell}{\alpha}$ is even, where $\alpha=\gcd(m,e)$, then $$|\{b\in\fqm:X^{q^{2e}}+X=b^{q^e}\,\,{\rm is\,\, solvable \,\,in}\,\, \fqm\}|=q^{m-2\alpha}.$$
\end{lem}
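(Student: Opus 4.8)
The plan is to reduce the counting problem to a kernel/image computation for a linearized polynomial and then quote Lemma~\ref{m5}. First I would observe that the Frobenius $x\mapsto x^{q^e}$ permutes $\fqm$, so the set
$\{b\in\fqm: X^{q^{2e}}+X=b^{q^e}\text{ is solvable in }\fqm\}$ is exactly the preimage under this permutation of $\mathrm{Im}(f)$, where $f\colon\fqm\to\fqm$ is the map $f(X)=X^{q^{2e}}+X$. Hence its cardinality equals $|\mathrm{Im}(f)|$, and it suffices to compute this.

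Next I would note that $f$ is $\fq$-linear: each monomial $X^{q^j}$ is additive, and it commutes with scalar multiplication by $c\in\fq$ since $c^{q^j}=c$. Therefore $\ker(f)$ and $\mathrm{Im}(f)$ are $\fq$-subspaces of $\fqm$, and by rank--nullity $|\mathrm{Im}(f)|=q^m/|\ker(f)|$. So the task reduces to determining $|\ker(f)|=|\{x\in\fqm:x^{q^{2e}}+x=0\}|$.

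For the kernel I would apply Lemma~\ref{m5} with $a=1$, so that the equation there becomes $X^{q^{2e}}+X=0$. Its hypotheses require $m/\alpha$ even and $1^{(q^m-1)/(q^\alpha+1)}=(-1)^{\ell/\alpha}$, i.e.\ $\ell/\alpha$ even; the latter is our assumption, and it also forces $\alpha\mid\ell$, whence $m/\alpha=2\ell/\alpha$ is even, so both conditions hold automatically. Lemma~\ref{m5} then yields exactly $q^{2\alpha}-1$ nonzero solutions, so $|\ker(f)|=q^{2\alpha}$ (consistent with $2\alpha\mid m$). Combining, $|\mathrm{Im}(f)|=q^m/q^{2\alpha}=q^{m-2\alpha}$, which is the claimed count.

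The computation is essentially routine once the reduction is set up; the only point requiring care is the bookkeeping with the two hypotheses of Lemma~\ref{m5} and checking that they are automatically satisfied when $\ell/\alpha$ is even. An alternative, more self-contained route would bypass Lemma~\ref{m5} by solving $x^{q^{2e}}=-x$ directly: squaring the Frobenius gives $x^{q^{4e}}=x$, placing $x$ in $\mathbb{F}_{q^{\gcd(4e,m)}}$, after which one counts solutions via the order of $-1$ in the relevant cyclic group. This works but is more delicate, so I would favour the kernel/image argument above.
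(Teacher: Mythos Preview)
Your argument is correct. The reduction via the Frobenius permutation to $|\mathrm{Im}(f)|$, the rank--nullity step, and the invocation of Lemma~\ref{m5} with $a=1$ all check out; the verification that $\ell/\alpha$ even forces both hypotheses of Lemma~\ref{m5} is handled properly.

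As for comparison: the paper does not prove this lemma at all---it is quoted verbatim as \cite[Lemma~13]{JLF} and used as a black box. Your proof is therefore a genuine addition rather than a reproduction, and it has the virtue of being self-contained modulo results already stated in the paper (Lemma~\ref{m5}). The alternative direct route you sketch at the end (working with $x^{q^{4e}}=x$ and the order of $-1$) is indeed more fiddly with the $\gcd$ bookkeeping, so your preference for the kernel/image argument is well placed.
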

\begin{lem}{\rm\cite[Lemma 10] {JLF}}\label{n1}
If $\frac{m}{\alpha}$ is even, where $\alpha=\gcd(m,e)$, then for $x\in\mathbb{F}_{q}^{*}$, $$x^{\frac{q^m-1}{q^\alpha+1}}=1.$$
\end{lem}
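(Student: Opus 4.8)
The plan is to reduce the assertion to an elementary divisibility fact about integers. Since every element $x\in\fq^*$ satisfies $x^{q-1}=1$, it is enough to show that $q-1$ divides the exponent $N:=\frac{q^m-1}{q^\alpha+1}$; once this is known, writing $N=(q-1)N'$ gives $x^N=(x^{q-1})^{N'}=1$ for all $x\in\fq^*$.

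To prove $(q-1)\mid N$, I would first use the hypothesis that $m/\alpha$ is even. Since $\alpha=\gcd(m,e)$ divides $m$, we may write $m=2s\alpha$ for some positive integer $s$, and put $Q=q^\alpha$, so that $q^m-1=Q^{2s}-1$. The standard factorization of a geometric sum yields
$$Q^{2s}-1=(Q^2-1)\bigl(Q^{2s-2}+Q^{2s-4}+\cdots+Q^2+1\bigr),$$
which in particular confirms that $q^\alpha+1=Q+1$ divides $q^m-1$ (so $N$ is indeed an integer), and dividing through by $Q+1$ gives
$$N=\frac{q^m-1}{q^\alpha+1}=(Q-1)\bigl(Q^{2s-2}+Q^{2s-4}+\cdots+Q^2+1\bigr).$$
Finally, $Q-1=q^\alpha-1=(q-1)(q^{\alpha-1}+q^{\alpha-2}+\cdots+1)$ is a multiple of $q-1$, hence so is $N$, and the argument is complete.

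I do not expect any real obstacle here: the only thing to get right is the choice of factorization $q^m-1=(q^\alpha)^{2s}-1=\bigl((q^\alpha)^2-1\bigr)(\cdots)$ that simultaneously cancels the factor $q^\alpha+1$ and leaves behind a visible multiple of $q^\alpha-1$. One could instead try to work modulo $q-1$ directly, noting $N(q^\alpha+1)=q^m-1\equiv 0\pmod{q-1}$ together with $q^\alpha+1\equiv 2\pmod{q-1}$, but since $q-1$ is even this route only yields $\tfrac{q-1}{2}\mid N$ without further argument, so the explicit factorization above is the cleaner path.
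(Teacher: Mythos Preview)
Your argument is correct: writing $m=2s\alpha$ and $Q=q^\alpha$, the factorization $Q^{2s}-1=(Q-1)(Q+1)(Q^{2s-2}+Q^{2s-4}+\cdots+1)$ shows that $N=(q^m-1)/(q^\alpha+1)$ is a multiple of $q^\alpha-1$, hence of $q-1$, and therefore $x^N=1$ for every $x\in\fq^*$. There is nothing missing.

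As for the comparison: the paper does not actually prove this lemma but simply quotes it from \cite{JLF}, so there is no in-paper argument to set yours against. Your self-contained divisibility proof is exactly the kind of justification one would supply if asked to fill in the citation, and it is the natural approach; the alternative you mention (working modulo $q-1$ directly) indeed stalls at $\tfrac{q-1}{2}\mid N$ because $\gcd(2,q-1)=2$, so the explicit factorization is the right choice.
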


\section{Main results} \label{s3}
In this section, we present several classes of linear codes over $\fq$ and determine their parameters and weight distributions. The proofs of these results will be given in Section \ref{s4}.
%%Define ${E}_{1}=\{x\in \mathbb{F}^*_{q^m}:{\Tr}(x^{q^u+1})=a\}$, ${E}_{2}=\{x\in \mathbb{F}^*_{q^m}:{\Tr}(x)=c\}$, where $a,c\in {F}_{q}$.\\

We first recall the defining sets in \eqref{001} given by the following:
\begin{align}\nonumber
{D}_{1}&=\{x\in \mathbb{F}^*_{q^m}:{\Tr}(x^{q^e+1})=u,{\Tr}(x)=v\};\\  \nonumber
D_{2}&=\{x\in \mathbb{F}^*_{q^m}:{\Tr}(x^{q^e+1})=u,{\Tr}(x)\neq v\};\\ \nonumber
D_{3}&=\{x\in \mathbb{F}^*_{q^m}:{\Tr}(x^{q^e+1})=u\,\,{\rm or}\,\,{\Tr}(x)=v\},
\end{align}
where $u,v\in \fq$. We will focus on the case that $m/\gcd(m,e)$ is even in this paper.

Let $m/\gcd(m,e)$ be even in the sequel. Recalling the $m=2\ell$, define the notation
\begin{equation}\label{e1}
%\delta=0 \,\,\mbox{if $l/\gcd(m,e)$ is odd}; \,\,t=\gcd(m,e) \,\,\mbox{if  $l/\gcd(m,e)$  is even}.
\epsilon:=\left\{\begin{array}{ll}
0, & \mbox{ if $\ell/\gcd(m,e)$ is odd};\\
\gcd(m,e), & \mbox{ if $\ell/\gcd(m,e)$ is even}.
\end{array}\right.
\end{equation}

\subsection{Linear codes from the defining set $D_1$} \label{sub3.1}
In this subsection, we will investigate the linear code $\C_{D_1}$ defined as in \eqref{000} and \eqref{001} by considering two cases: 1) $v=0$; 2) $v\in\fq^*$.
\begin{theorem}\label{t1}
Let $\C_{D_1}$ be defined by \eqref{000} and \eqref{001}, $m/\gcd(m,e)$ be even, $v=0$ and $\epsilon$ be given as in \eqref{e1}. Assume that $m>3$, and $m>2\epsilon+2$ if $m_p\neq0$. \\
1) If $u=0$ and $m_p\neq0$, then ${\C}_{{D}_{1}}$ is a $3$-weight $[{q^{m-2}-1},m-1,(q-1)(q^{m-3}-q^{\ell+\epsilon-2})]$ linear code over $\fq$ with the weight distribution
\begin{center}\footnotesize
\begin{tabular}{cc}
\hline
\rm Weight  & \rm Frequency                                     \\ \hline
$0$              & $1$               \\
$(q-1)q^{m-3}$              &$q^{m-1}-(q-1)q^{m-2\epsilon-2}-1$             \\
$(q-1)(q^{m-3}+q^{\ell+\epsilon-2})$                      &$\frac{1}{2}(q-1)(q^{m-2\epsilon-2}-q^{\ell-\epsilon-1})$      \\
$(q-1)(q^{m-3}-q^{\ell+\epsilon-2})$              &$\frac{1}{2}(q-1)(q^{m-2\epsilon-2}+q^{\ell-\epsilon-1})$               \\ \hline
\end{tabular}
\end{center}
2) If $u=0$ and $m_p=0$, then ${\C}_{{D}_{1}}$ is a $3$-weight $[q^{m-2}-(q-1)q^{\ell+\epsilon-1}-1,m-1,(q-1)(q^{m-3}-q^{\ell+\epsilon-1})]$ linear code over $\fq$ with the weight distribution
\begin{center}\footnotesize
\begin{tabular}{cc}\hline
 \rm Weight  &  \rm Frequency                                     \\ \hline
$0$              & $1$               \\
$(q-1)q^{m-3}$              &$q^{m-2\epsilon-3}-(q-1)q^{\ell-\epsilon-2}-1$             \\
$(q-1)(q^{m-3}-q^{\ell+\epsilon-1})$                      &$(q-1)(q^{m-2\epsilon-3}+q^{\ell-\epsilon-2})$      \\
$(q-1)(q^{m-3}+q^{\ell+\epsilon-2}-q^{\ell+\epsilon-1})$              &$q^{m-1}-q^{m-2\epsilon-2}$               \\ \hline
\end{tabular}
\end{center}
3) If $u\neq0$ and $m_p\neq0$, then ${\C}_{{D}_{1}}$ is a $3$-weight $[q^{m-2}-q^{\ell+\epsilon-1}{\eta}_{1}(-um_p),m-1]$ linear code over $\fq$ with the weight distribution
      \begin{center}\footnotesize
\begin{tabular}{cc}
\hline
 \rm Weight  &  \rm Frequency                                     \\ \hline
$0$              & $1$               \\
$(q-1)q^{m-3}$              &$q^{m-2\epsilon-2}-1$             \\
\multirow{2}{*}{$(q-1)q^{m-3}-q^{\ell+\epsilon-1}{\eta}_{1}(-um_p)+ q^{\ell+\epsilon-2}$}                      &$\frac{1}{2}(q^{m-1}-q^{m-2\epsilon-2}-q^{m-2\epsilon-1}{\eta}_{1}(-um_p)$\\&$+q^{m-1}{\eta}_{1}(-um_p)+q^{\ell-\epsilon}-q^{\ell-\epsilon-1})$  \\
\multirow{2}{*}{$(q-1)q^{m-3}-q^{\ell+\epsilon-1}{\eta}_{1}(-um_p)-q^{\ell+\epsilon-2}$}              &$\frac{1}{2}(q^{m-1}-q^{m-2\epsilon-2}+q^{m-2\epsilon-1}{\eta}_{1}(-um_p)$\\&$-q^{m-1}{\eta}_{1}(-um_p)-q^{\ell-\epsilon}+q^{\ell-\epsilon-1})$  \\ \hline
\end{tabular}
\end{center}
4) If $u\neq0$ and $m_p=0$, then ${\C}_{{D}_{1}}$ is a $3$-weight $[q^{m-2}+q^{\ell+\epsilon-1},m-1,(q-1)q^{m-3}]$ linear code over $\fq$ with the weight distribution
      \begin{center}\footnotesize
\begin{tabular}{cc}
\hline
 \rm Weight  &  \rm Frequency                                     \\ \hline
$0$              & $1$               \\
$(q-1)q^{m-3}$              &$q^{m-2\epsilon-2}-\frac{1}{2}(q-1)(q^{m-2\epsilon-3}+q^{\ell-\epsilon-2})-1$             \\
$(q-1)q^{m-3}+2q^{\ell+\epsilon-1}$                      &$\frac{1}{2}(q-1)(q^{m-2\epsilon-3}+q^{\ell-\epsilon-2})$      \\
$(q-1)(q^{m-3}+q^{\ell+\epsilon-2})$              &$q^{m-1}-q^{m-2\epsilon-2}$              \\ \hline
\end{tabular}
\end{center}
\end{theorem}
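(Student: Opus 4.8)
The plan is the standard additive-character analysis of trace-defined codes, with the Weil-sum formulas of Lemmas \ref{m4}--\ref{m7} doing the real work and the power-moment identities used at the end to extract the frequencies. Write $n=|D_1|$; for $b\in\fqm$ the codeword $c_b=(\Tr(bx))_{x\in D_1}$ has Hamming weight
\[
\mathrm{wt}(c_b)=\frac{(q-1)n}{q}-\frac1q\,T(b),\qquad T(b):=\sum_{z\in\fq^*}\sum_{x\in D_1}\chi(zbx).
\]
Expressing the two defining conditions $\Tr(x^{q^e+1})=u$ and $\Tr(x)=0$ by additive characters of $\fq$ and using $\chi_1(y\,\Tr(w))=\chi(yw)$, both $n$ and $T(b)$ reduce to $q^{-2}$ times sums over $y_1,y_2\in\fq$ of Weil sums $S(y_1,y_2+zb)$ (with $zb$ omitted in the case of $n$), apart from an elementary $y_1=0$ term that is nonzero precisely when $b\in\fq^*$ and a constant coming from excluding $x=0$.

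First I would evaluate $S(y_1,y_2+zb)$ for $y_1\in\fq^*$. The key structural observation is that $y_1^{(q^m-1)/(q^\alpha+1)}=1$ by Lemma \ref{n1}, so whether $f(X)=y_1^{q^e}X^{q^{2e}}+y_1X$ is a permutation polynomial over $\fqm$ is decided solely by the parity of $\ell/\alpha$, which is exactly what the quantity $\epsilon$ of \eqref{e1} records. Moreover, because $y_2,z\in\fq$, the companion equation $f(X)=-(y_2+zb)^{q^e}$ is, after an affine substitution absorbing the $\fq$-part, equivalent to $X^{q^{2e}}+X=-b^{q^e}$, i.e.\ to $b\in T$; a solution is then $x_0=\frac{z\gamma}{y_1}-\frac{y_2}{2y_1}$ with $\gamma$ as in the notation list. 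Lemmas \ref{m4}, \ref{m6}, \ref{m7} now give uniformly $S(y_1,y_2+zb)=-q^{\ell+\epsilon}\chi(-y_1x_0^{q^e+1})$ if $b\in T$, and $0$ otherwise. Expanding $-y_1x_0^{q^e+1}$, applying $\chi(cw)=\chi_1(c\,\Tr(w))$ for $c\in\fq$, and using the identities $\Tr(\gamma)=-\tfrac12\Tr(b)$ and $\Tr(\gamma^{q^e+1})=-\tfrac12\Tr(b\gamma)$ turns $\chi(-y_1x_0^{q^e+1})$ into $\chi_1$ of an explicit quadratic in $y_2$ whose coefficients involve only $y_1,z$ and the two invariants $A(b)=\Tr(\gamma^{q^e+1})$ and $B(b)=\Tr(\gamma)$ (which one checks do not depend on the choice of $\gamma$).

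Next I would carry out the summations. The $y_2$-sum is a quadratic Gauss sum over $\fq$, evaluated by the $\fq$-analogue of Lemma \ref{m2} (this is the point at which the split $m_p\neq0$ versus $m_p=0$ enters), and the subsequent $y_1$- and $z$-sums are handled with $G(\eta_1,\chi_1)^2=\eta_1(-1)q$, $\sum_{y\in\fq^*}\eta_1(y)=0$ and $\sum_{y\in\fq^*}\chi_1(ay)\eta_1(y)=\eta_1(a)\,G(\eta_1,\chi_1)$; the twist $\chi_1(-y_1u)$ present when $u\neq0$ is what produces the $\eta_1(-um_p)$-factors in parts 3)--4). This yields $n$ in all four cases and shows that $T(b)$ depends on $b$ only through whether $b\in T$ (of which there are $q^{m-2\epsilon}$ by Lemma \ref{mm1}) and, when $b\in T$, through the single quantity $C(b)=B(b)^2/m_p-A(b)$ when $m_p\neq0$ (resp.\ through $A(b),B(b)$ being zero or not when $m_p=0$). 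Tracking the subcases produces exactly the three nonzero weight values listed; in particular $b\in\fq^*$ always gives $c_b=0$, which together with $v=0$ forcing $D_1\subseteq\ker\Tr$ shows $\dim\C_{D_1}=m-1$, so $A_0=1$.

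Finally, the frequencies. With the weight set known I would use power moments: $\sum_{b\in\fqm}T(b)=0$ (since $0\notin D_1$) and $\sum_{b\in\fqm}T(b)^2=q^m\cdot\#\{(z,z',x,x')\in(\fq^*)^2\times D_1^2:zx+z'x'=0\}$, the last count being elementary because $-\lambda x\in D_1$ for $\lambda\in\fq^*$ forces $\lambda^2u=u$, so the count is $(q-1)^2n$ if $u=0$ and $2(q-1)n$ if $u\neq0$; this is exactly why the tables for $u=0$ and $u\neq0$ have different shapes. Rewriting these identities as $\sum_jA_{w_j}w_j$ and $\sum_jA_{w_j}w_j^2$ and adjoining $\sum_jA_{w_j}=q^{m-1}-1$ gives a $3\times3$ linear system, nonsingular since under $m>3$ and (for $m_p\neq0$) $m>2\epsilon+2$ the three weights are pairwise distinct; its solution is the claimed weight distribution, and these same hypotheses make all $A_i\ge0$, confirming that $\C_{D_1}$ is genuinely a $3$-weight code. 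I expect the delicate part to be the bookkeeping in the second and third steps: reconciling the permutation/non-permutation dichotomy with the sign conventions of Lemmas \ref{m6}--\ref{m7}, verifying that $x_0$ (hence $\chi(-y_1x_0^{q^e+1})$) and $C(b)$ are independent of the auxiliary choices, and keeping the four-way $(u,m_p)$ case analysis organised through the Gauss-sum evaluations; the counting in the last step is comparatively routine once the weight values are in hand.
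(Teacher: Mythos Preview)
Your plan is sound and tracks the paper's argument closely: the length and the weight values are obtained exactly as in the paper, by expanding the defining conditions with additive characters and evaluating the resulting Weil sums $S(y_1,y_2+zb)$ via Lemmas~\ref{m4}--\ref{m7} and~\ref{n1}, with the solution $x_0=y_1^{-1}(z\gamma-\tfrac12 y_2)$ and the ensuing quadratic-in-$y_2$ character sum over $\fq$ (this is precisely the content of the paper's Lemmas~\ref{lem1} and~\ref{n3}/\ref{n5}). Your observation that the permutation/non-permutation dichotomy for $y_1\in\fq^*$ is governed solely by the parity of $\ell/\alpha$, hence encoded by $\epsilon$, is exactly the mechanism the paper uses.

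The one genuine difference is in how the frequencies are extracted. The paper does \emph{not} use the second power moment: after listing the three possible nonzero weights, it computes one frequency directly by a separate character computation --- Lemma~\ref{n12}, which evaluates $|\{x\in\fqm:\Tr(x^{q^e+1})+c\,\Tr(x)^2=0\}|$, is used (together with Lemma~\ref{mm1} and the $q^{2\alpha}$-to-$1$ surjection $\gamma\mapsto b$) to count the $b$'s with $\Tr(\gamma)^2-m_p\Tr(\gamma^{q^e+1})=0$ --- and then solves the $2\times 2$ system coming from the zeroth and first Pless identities. Your route, computing $\sum_b T(b)^2$ from the scaling behaviour of $D_1$ (all of $\fq^*$ when $u=0$, only $\pm1$ when $u\neq0$) and solving the resulting $3\times3$ Vandermonde system, is a clean alternative that sidesteps Lemma~\ref{n12} entirely; it has the pleasant feature of handling parts 3)--4) without any new auxiliary count. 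Either way leads to the stated tables.
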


\begin{example}
 Let $(q,m,e)=(9,4,2)$ and $(u,v)=(0,0)$. Magma experiments show that ${\C}_{{D}_{1}}$ is a $[80,3,64]$ linear code over $\mathbb{F}_{9}$ with the weight enumerator $1+360z^{64}+80z^{72}+288z^{80}$, which is consistent with our result in Theorem \ref{t1}.
\end{example}
\begin{example}
 Let $(q,m,e)=(3,6,1)$ and $(u,v)=(0,0)$. Magma experiments show that ${\C}_{{D}_{1}}$ is a $[62,5,36]$ linear code over $\mathbb{F}_{3}$ with the weight enumerator $1+60z^{36}+162z^{42}+20z^{54}$, which is consistent with our result in Theorem \ref{t1}.
\end{example}
\begin{example}
Let $(q,m,e)=(9,4,2)$ and $(u,v)=(1,0)$. Magma experiments show that ${\C}_{{D}_{1}}$ is a $[72,3,62]$ linear code over $\mathbb{F}_{9}$ with the weight enumerator $1+288z^{62}+360z^{64}+80z^{72}$, which is consistent with our result in Theorem \ref{t1}. This code is almost optimal due to \cite{CT}.
\end{example}
\begin{example}
Let $(q,m,e)=(3,12,1)$ and $(u,v)=(1,0)$. Magma experiments show that ${\C}_{{D}_{1}}$ is a $[59778,11,39366]$ linear code over $\mathbb{F}_{3}$ with the weight enumerator $1+4346z^{39366}+170586z^{39852}+2214z^{40824}$, which is consistent with our result in Theorem \ref{t1}.
\end{example}

Notice that in Theorem \ref{t1}, when $m/\gcd(m,e)$ is even, $m_p\neq0$ and $m>3$, it is clear that $m=2\epsilon+2$ if and only if $m=4$ and $\epsilon=1$ (i.e., $e=1,3$). For this case, we present two classes of optimal linear codes over $\fq$ in the following. We omit the proof since it can be proved similarly to Theorem \ref{t1}.

\begin{cor}\label{b1}
Let $\C_{D_1}$ be defined by \eqref{000} and \eqref{001} and $\epsilon$ be given as in \eqref{e1}. Assume that $m=4$ and $\epsilon=1$ (i.e., $e=1,3$). \\
1) If $u=0$, then $\C_{D_1}$ is a $1$-weight $[q^2-1,2,q^2-q]$ optimal linear code meeting the Griesmer bound with the weight enumerator $1+(q^2-1)z^{(q-1)q}$.\\
2) If $u\neq0$ and $\eta_1(-um_p)=-1$, then $\C_{D_1}$ is a $2$-weight $[2q^2,3,2q^2-2q]$ optimal linear code meeting the Griesmer bound with the weight enumerator $1+(q^3-q)z^{2q(q-1)}+(q-1)z^{2q^2}$.
\end{cor}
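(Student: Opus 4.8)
The plan is to run the argument behind Theorem~\ref{t1} in the boundary regime $m=2\ell=4$, $\alpha=\gcd(m,e)=1$, $\epsilon=1$ (here $\ell/\alpha=2$ is even), in which the generic computation degenerates, and then to read off the resulting collapsed weight distributions and verify the Griesmer bound directly. Writing $c_{b}=(\Tr(bx))_{x\in D_{1}}$ for $b\in\fqm$, I would, as in the proof of Theorem~\ref{t1}, expand $n=|D_{1}|$ and $N_{b}=\#\{x\in D_{1}:\Tr(bx)=0\}$ by orthogonality of additive characters into the Weil sums $S(y_{1},\cdot)$, $y_{1}\in\fq$. For $y_{1}=a\in\fq^{*}$, Lemma~\ref{n1} gives $a^{(q^{m}-1)/(q^{\alpha}+1)}=1=(-1)^{\ell/\alpha}$, so $a^{q^{e}}X^{q^{2e}}+aX$ is never a permutation of $\fqm$ (Lemma~\ref{m5}); Lemmas~\ref{m4} and~\ref{m7} then give $S(a,0)=-q^{\ell+\alpha}=-q^{3}$ and $S(a,c)\in\{0\}\cup\{-q^{3}\chi(-ax_{0}^{q^{e}+1})\}$, the value $0$ occurring exactly when $a^{q^{e}}X^{q^{2e}}+aX=-c^{q^{e}}$ has no solution. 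Since here $|S(a,\cdot)|\in\{0,q^{m-1}\}$, the collapse is more drastic than in the case $m>2\epsilon+2$ of Theorem~\ref{t1}.

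It is cleanest to record this collapse structurally. For odd $q$ the map $Q(x)=\Tr(x^{q^{e}+1})$ is an $\fq$-quadratic form on $\fqm$ whose polar form has radical $\{x:x^{q^{2e}}+x=0\}$; as $m=4$ and $e$ is odd (so $x\mapsto x^{q^{2e}}$ acts on $\fqm$ as $x\mapsto x^{q^{2}}$) this radical is $R:=\{x\in\fqm:x^{q^{2}}=-x\}$, a $2$-dimensional $\fq$-subspace, and $R\subseteq H:=\ker\Tr$ because $x^{q^{2}}=-x$ forces $x+x^{q^{2}}=0$ and hence $\Tr(x)=(x+x^{q^{2}})+(x+x^{q^{2}})^{q}=0$. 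Since $\dim_{\fq}H=3>2=\dim_{\fq}R$, the form $Q|_{H}$ has rank exactly $1$: fixing $v_{0}\in H\setminus R$ one gets $Q(sv_{0}+r)=cs^{2}$ for $r\in R$, $s\in\fq$, where $c:=Q(v_{0})\in\fq^{*}$. A direct computation with $w_{0}:=v_{0}+v_{0}^{q^{2}}$ — a nonzero element of $\mathbb{F}_{q^{2}}$ with $w_{0}^{q}=-w_{0}$, hence $w_{0}\notin\fq$ and $w_{0}^{2}$ a nonsquare of $\fq$ — yields $c=-w_{0}^{2}$, so $\eta_{1}(c)=-\eta_{1}(-1)$. (This agrees with $S(a,0)=-q^{3}$: the reduction of $Q$ to $\fqm/R$ is the anisotropic binary form over $\fq$.)

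Granting this, part~1) ($u=v=0$) follows at once: $D_{1}=\{x\in H:Q(x)=0\}\setminus\{0\}=R\setminus\{0\}$, so $n=q^{2}-1$, and $\C_{D_{1}}$ is the code whose columns run over all nonzero vectors of the $2$-dimensional space $R$ (hence of dimension $2$); each of its $q^{2}-1$ nonzero codewords has weight $q^{2}-q=(q-1)q$, giving the enumerator $1+(q^{2}-1)z^{(q-1)q}$, and $\sum_{i=0}^{1}\lceil(q^{2}-q)/q^{i}\rceil=(q^{2}-q)+(q-1)=q^{2}-1$ confirms the Griesmer bound. For part~2) ($v=0$, $u\ne0$), $D_{1}=\{x\in H:Q(x)=u\}$ is nonempty iff $u/c$ is a nonzero square; since $\eta_{1}(c)=-\eta_{1}(-1)$ and $m_{p}=4$ is a square in $\fq$ (as $p$ is odd), this happens iff $\eta_{1}(-um_{p})=-1$, and then $D_{1}=(t_{0}v_{0}+R)\cup(-t_{0}v_{0}+R)$ with $t_{0}^{2}=u/c\ne0$, so $n=2q^{2}$ and $D_{1}$ still $\fq$-spans $H$, so $b\mapsto c_{b}$ has $1$-dimensional kernel. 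For $b\in\fqm^{*}$ one has $\Tr\big(b(\pm t_{0}v_{0}+r)\big)=\pm t_{0}\Tr(bv_{0})+\Tr(br)$, and $r\mapsto\Tr(br)$ is zero on $R^{\perp}$ and surjective off it, so $N_{b}$ equals $2q^{2}$, $0$, or $2q$ according as $b$ lies in $(\ker\Tr)^{\perp}$, in $R^{\perp}\setminus(\ker\Tr)^{\perp}$, or outside $R^{\perp}$; hence $\operatorname{wt}(c_{b})\in\{0,2q^{2},2q^{2}-2q\}$. These sets have $q$, $q^{2}-q$, and $q^{4}-q^{2}$ elements and $b\mapsto c_{b}$ is $q$-to-one, so $A_{2q^{2}}=q-1$ and $A_{2q^{2}-2q}=q^{3}-q$, the claimed enumerator; and $\sum_{i=0}^{2}\lceil(2q^{2}-2q)/q^{i}\rceil=(2q^{2}-2q)+(2q-2)+2=2q^{2}$ gives Griesmer optimality.

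The main obstacle is the square-class identity $\eta_{1}(c)=-\eta_{1}(-1)$, equivalently pinning down which square class of $\fq$ the rank-$1$ form $Q|_{H}$ represents (locating the line $H/R$ inside $\fqm/R$ precisely enough): this is exactly what converts the nonemptiness of $D_{1}$ in part~2) into the clean hypothesis $\eta_{1}(-um_{p})=-1$, and in part~1) it is what rules out $Q|_{H}\equiv 0$ (so that $D_{1}=R\setminus\{0\}$ and not all of $H\setminus\{0\}$). Once it is in place, the remaining work is the same bookkeeping as in the proof of Theorem~\ref{t1}, trivialised by the fact that every nonzero Weil sum in play has absolute value $q^{m-1}$; this is why the proof of Theorem~\ref{t1} can be followed almost verbatim.
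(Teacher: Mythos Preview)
Your argument is correct, and it is genuinely different from what the paper does. The paper omits the proof entirely, pointing back to the machinery of Theorem~\ref{t1}: there one would specialise Lemmas~\ref{lem1}, \ref{n3} and \ref{n5} to $m=4$, $\epsilon=1$, carry the Weil-sum values through the formula $w_H(c_b)=N-\tfrac{1}{q^3}(\Theta_b(u,v)+\Omega_b(v))$, and then observe that several of the generic weight values coalesce (for instance the weight $(q-1)(q^{m-3}-q^{\ell+\epsilon-2})$ collapses to $0$, enlarging the kernel beyond $\fq$). Your route bypasses all of this by reading the situation geometrically: $Q=\Tr(x^{q^e+1})$ has $2$-dimensional radical $R\subseteq H=\ker\Tr$, so $Q|_H$ has rank at most $1$; your computation $Q(v_0)=-w_0^2$ with $w_0=v_0+v_0^{q^2}$ a nonsquare of $\fq$ pins down that rank as exactly $1$ and identifies the square class, after which $D_1$ is literally $R\setminus\{0\}$ or a pair of $R$-cosets and the weight distribution is immediate linear algebra.

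What your approach buys is transparency: it explains \emph{why} the dimension drops from $m-1=3$ to $2$ in part~1) (the kernel is $R^{\perp}$, not just $\fq$) and why the $2$-weight structure in part~2) is forced (two parallel affine $R$-planes). The paper's approach, while mechanical, has the advantage that it requires no new idea beyond Theorem~\ref{t1} --- one just tracks which branches of Lemmas~\ref{n3}--\ref{n5} survive when $T=\mathbb{F}_{q^2}$ and $q^{\ell+\epsilon-2}=q^{m-3}$. Your identification of the square-class computation $\eta_1(Q(v_0))=-\eta_1(-1)$ as the crux is apt: in the paper's language this is what makes the frequency $A_{w_3}$ absorb the right number of $b$'s, and it is indeed the only step that is not pure bookkeeping.
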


\begin{example}
Let $(q,m,e)=(5,4,1)$ and $(u,v)=(0,0)$. Magma experiments show that ${\C}_{{D}_{1}}$ is a $[24,2,20]$ linear code over $\mathbb{F}_{5}$ with the weight enumerator $1+24z^{20}$, which is consistent with our result in Corollary \ref{b1}. This code is optimal due to \cite{CT}.
\end{example}
\begin{example}
Let $(q,m,e)=(3,4,1)$ and $(u,v)=(1,0)$. Magma experiments show that ${\C}_{{D}_{1}}$ is a $[18,3,12]$ linear code over $\mathbb{F}_{3}$ with the weight enumerator $1+24z^{12}+2z^{18}$, which is consistent with our result in Corollary \ref{b1}. This code is optimal due to \cite{CT}.
\end{example}

\begin{theorem}\label{t7}
Let $\C_{D_1}$ be defined by \eqref{000} and \eqref{001}, $m/\gcd(m,e)$ be even, $v\in\fq^*$ and $\epsilon$ be given as in \eqref{e1}. Assume that $m>3$, and $m>2\epsilon+2$ if $m_p\neq0$. \\
1) If $v^2-um_p\neq0$ and $m_p\neq0$, then ${\C}_{{D}_{1}}$ is a $6$-weight $[q^{m-2}-q^{\ell+\epsilon-1}{\eta}_{1}(v^2-um_p),m]$ linear code over $\fq$ with the weight distribution
\begin{center} \footnotesize
\begin{tabular}{cc}
\hline
 \rm Weight  &  \rm Frequency                                     \\ \hline
$0$              & $1$               \\
$(q-1)q^{m-3}$              &$q^{m-2\epsilon-2}-1$             \\
$(q-1)q^{m-3}-q^{\ell+\epsilon-2}{\eta}_{1}(v^2-um_p)$                      &$(q-1)(q^{m-2\epsilon-2}-q^{\ell-\epsilon-1}{\eta}_{1}(v^2-um_p)$      \\
$(q-1)q^{m-3}-q^{\ell+\epsilon-1}{\eta}_{1}(v^2-um_p)$              &$(q-1)(q^{m-2\epsilon-2}-1)$               \\
$q^{m-2}-q^{\ell+\epsilon-1}{\eta}_{1}(v^2-um_p)$              &$q-1$               \\
\multirow{2}{*}{$(q-1)q^{m-3}-q^{\ell+\epsilon-1}{\eta}_{1}(v^2-um_p)+q^{\ell+\epsilon-2}$} &$\frac{q-1}{2}(({\eta}_{1}(v^2-um_p)-1)(q^{\ell-\epsilon-1}+q^{m-2\epsilon-2})+q^{\ell-\epsilon})+\frac{1}{2}$\\&$(q^m({\eta}_{1}(v^2-um_p)+1)-q^{m-2\epsilon-1}-q^{m-2\epsilon}\eta_1(v^2-um_p))$    \\
\multirow{2}{*}{$(q-1)q^{m-3}-q^{\ell+\epsilon-1}{\eta}_{1}(v^2-um_p)-q^{\ell+\epsilon-2}$}   &$\frac{q-1}{2}(({\eta}_{1}(v^2-um_p)+1)(q^{\ell-\epsilon-1}-q^{m-2\epsilon-2})-q^{\ell-\epsilon})+\frac{1}{2}$\\&$(q^m(1-\eta_1(v^2-um_p))+q^{m-2\epsilon}\eta_1(v^2-um_p)-q^{m-2\epsilon-1})$      \\ \hline
\end{tabular}
\end{center}
2) If $m_p=0$, then ${\C}_{{D}_{1}}$ is a $4$-weight $[q^{m-2},m,(q-1)q^{m-3}-q^{\ell+\epsilon-2}]$ linear code over $\fq$ with the weight distribution
      \begin{center}\footnotesize
\begin{tabular}{cc}
\hline
\rm Weight  & \rm Frequency                                     \\ \hline
$0$              & $1$               \\
$(q-1)q^{m-3}$              &$q^m-q^{m-2\epsilon}+q^{m-2\epsilon-1}-q$             \\
$q^{m-2}$              &$q-1$               \\
$(q-1)q^{m-3}-q^{\ell+\epsilon-2}$              &$(q-1)(q^{m-2\epsilon-1}-q^{m-2\epsilon-2})$               \\
$(q-1)(q^{m-3}+q^{\ell+\epsilon-2})$              &$(q-1)q^{m-2\epsilon-2}$               \\ \hline
\end{tabular}
\end{center}
3) If $v^2-um_p=0$, then ${\C}_{{D}_{1}}$ is a $6$-weight $[q^{m-2},m,(q-1)(q^{m-3}-q^{\ell+\epsilon-2})]$ linear code over $\fq$ with the weight distribution
\begin{center}\footnotesize
\begin{tabular}{cc}
\hline
\rm Weight  & \rm Frequency                                     \\ \hline
$0$              & $1$               \\
$(q-1)q^{m-3}$              &$q^m-q^{m-2\epsilon}+q^{m-2\epsilon-1}-q$             \\
$(q-1)q^{m-3}-q^{\ell+\epsilon-2}$                      &$\frac{(q-1)^2}{2}(q^{m-2\epsilon-2}-q^{\ell-\epsilon-1})$      \\
$(q-1)q^{m-3}+q^{\ell+\epsilon-2}$              &$\frac{(q-1)^2}{2}(q^{m-2\epsilon-2}+q^{\ell-\epsilon-1})$               \\
$q^{m-2}$              &$q-1$               \\
$(q-1)(q^{m-3}+q^{\ell+\epsilon-2})$              &$\frac{q-1}{2}(q^{m-2\epsilon-2}-q^{\ell-\epsilon-1})$               \\
$(q-1)(q^{m-3}-q^{\ell+\epsilon-2})$              &$\frac{q-1}{2}(q^{m-2\epsilon-2}+q^{\ell-\epsilon-1})$               \\ \hline
\end{tabular}
\end{center}
\end{theorem}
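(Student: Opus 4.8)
I would follow the standard route for codes of the form \eqref{000}. First compute $n=|D_1|$; then for each $b\in\fqm^{*}$ compute the Hamming weight of $c_b=(\Tr(bx))_{x\in D_1}$ via $\mathrm{wt}(c_b)=n-N_0(b)$, where $N_0(b)=|\{x\in D_1:\Tr(bx)=0\}|$; and finally count how many $b$ realize each weight. Because $v\in\fq^{*}$, additive-character orthogonality (with the $x=0$ contributions vanishing) gives
\begin{align}\nonumber
n=\frac1{q^{2}}\sum_{y_1,y_2\in\fq}\chi_1(-y_1u-y_2v)\,S(y_1,y_2),\qquad
N_0(b)=\frac1{q^{3}}\sum_{y_1,y_2,z\in\fq}\chi_1(-y_1u-y_2v)\,S(y_1,zb+y_2).
\end{align}
For $y_1=0$ these reduce to elementary delta-function sums, whose $N_0$-contribution sees only whether $b\in\fq^{*}$. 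For $y_1\in\fq^{*}$ the ingredients from Section \ref{s2} are: by Lemmas \ref{m5} and \ref{n1}, $y_1^{q^{e}}X^{q^{2e}}+y_1X$ is a permutation polynomial exactly when $\epsilon=0$; in either case Lemmas \ref{m6}--\ref{m7} give $S(y_1,c)=-q^{\ell+\epsilon}\chi(-y_1\gamma^{q^{e}+1})$ whenever $X^{q^{2e}}+X=-(c/y_1)^{q^{e}}$ is solvable (with $\gamma$ any solution) and $S(y_1,c)=0$ otherwise; and one checks that $\fq\subseteq T$, that $T$ is an $\fq$-subspace with $|T|=q^{m-2\epsilon}$ (Lemma \ref{mm1} when $\epsilon=\gcd(m,e)$, and $T=\fqm$ when $\epsilon=0$), and that for $z\neq0$ one has $(zb+y_2)/y_1\in T\Longleftrightarrow b\in T$. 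For the argument $c=y_2\in\fq$ (solution $-y_2/2\in\fq$) this already yields $n$ once the inner $y_2$-sum is evaluated: a quadratic Gauss sum over $\fq$ when $m_p\neq0$ (Lemma \ref{m2}, with $G(\eta_1,\chi_1)^{2}=\eta_1(-1)q$), and a linear sum when $m_p=0$; this is precisely what splits off the three cases $v^{2}-um_p\neq0$, $v^{2}-um_p=0$, $m_p=0$.

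For the weights, with $y_1\neq0$ I would split further into $z=0$ (contributing a fixed multiple of $n$) and $z\neq0$ (which vanishes unless $b\in T$). In the surviving case $b\in T$, $z\neq0$, write $\gamma_b$ for a fixed solution of $X^{q^{2e}}+X=-b^{q^{e}}$; by $\fq$-linearity of $X\mapsto X^{q^{2e}}+X$ the solution attached to $(zb+y_2)/y_1$ may be taken to be $(z\gamma_b-y_2/2)/y_1$, and then $\chi(-y_1\gamma^{q^{e}+1})$ is made fully explicit using the two identities $\Tr(\gamma_b)=-\tfrac12\Tr(b)$ and $\Tr(b\gamma_b)=-2\Tr(\gamma_b^{q^{e}+1})$ (apply $\Tr$ to $\gamma_b^{q^{2e}}+\gamma_b=-b^{q^{e}}$; for the second, first raise to the $q^{-e}$ power and multiply by $\gamma_b$). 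This turns the $z\neq0$ part of $N_0(b)$ into a triple sum over $y_1,z\in\fq^{*}$, $y_2\in\fq$ whose exponent is a quadratic form in $(z,y_2)$ over $\fq$ with coefficients built from $\Tr(b)$, $\Tr(b\gamma_b)$, $m_p$, $u$, $v$. Evaluating it by iterated Gauss sums — first over $y_2$, then $z$, after which the remaining $y_1$-sum turns out to be linear (hence equals $-1$ or $q-1$) — shows that $\mathrm{wt}(c_b)$ depends on $b$ only through whether $b\in\fq^{*}$, whether $b\in T$, and the vanishing / quadratic character over $\fq$ of the discriminant-type quantities $2m_p\Tr(b\gamma_b)+\Tr(b)^{2}$ and $2\Tr(b\gamma_b)(v^{2}-um_p)-u\Tr(b)^{2}$. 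When $m_p=0$ the $y_2$-sum is only linear, which is why the six generic weights collapse to the four of part 2).

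For the frequencies I would count $b\in\fqm^{*}$ against this data. Three pieces are immediate: exactly $q-1$ elements $b\in\fq^{*}$, each with $\mathrm{wt}(c_b)=n$ since $\Tr(bx)=bv\neq0$ on $D_1$; exactly $q^{m}-q^{m-2\epsilon}$ elements $b\notin T$, all with weight $(q-1)n/q$; and, on $T$, the map $b\mapsto\Tr(b)$ is equidistributed because $\Tr(\gamma^{q^{2e}}+\gamma)=2\Tr(\gamma)$. The remaining input is the joint distribution of $\big(\Tr(b),\Tr(b\gamma_b)\big)$ on $T$, obtained from
\begin{align}\nonumber
\sum_{b\in T}\chi_1\!\big(\lambda\,\Tr(b\gamma_b)+\mu\,\Tr(b)\big)=\frac1{q^{2\epsilon}}\,S(-2\lambda,-2\mu)\qquad(\lambda,\mu\in\fq),
\end{align}
which is evaluated via Lemma \ref{m4} (using $S(a,0)=-q^{\ell+\epsilon}$ for $a\in\fq^{*}$, a consequence of Lemma \ref{n1}) and Lemma \ref{m2}; it shows $\Tr(b\gamma_b)$ is almost equidistributed on each fibre of $b\mapsto\Tr(b)$, with the familiar $\pm q^{\ell-\epsilon-1}$ bias governed by $\eta_1$. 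Assembling all counts and cross-checking against $\sum_iA_i=q^{m}$ and the first Pless power moment gives the three weight distributions; positivity of every listed nonzero weight (ensured by $m>3$, and $m>2\epsilon+2$ when $m_p\neq0$) yields $c_b\neq0$ for $b\neq0$, hence the dimension $m$. I expect the main obstacle to lie in these last two steps: propagating the nested quadratic Gauss sums (over $y_2$, then $z$, then $y_1$) without sign or normalisation errors, keeping the degenerate subcase $m_p=0$ separate, and — hardest — enumerating the simultaneous conditions $b\in T$, $b\in\fq^{*}$, $\Tr(b)=0$, $2m_p\Tr(b\gamma_b)+\Tr(b)^{2}=0$, $2\Tr(b\gamma_b)(v^{2}-um_p)-u\Tr(b)^{2}=0$, which is exactly what produces the two-line frequency entries in part 1).
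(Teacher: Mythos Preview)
Your proposal is correct and follows essentially the same route as the paper. The paper treats Theorem \ref{t7} by the template of Theorem \ref{t1}: the length comes from Lemma \ref{lem1}, the weight $w_H(c_b)$ is written as $N-\frac1{q^3}(\Theta_b(u,v)+\Omega_b(v))$, and $\Theta_b(u,v)$ is evaluated in Lemmas \ref{n10}--\ref{n11} in terms of whether $b\in T$ and the pair $(\Tr(\gamma),\Tr(\gamma^{q^e+1}))$; your invariants $(\Tr(b),\Tr(b\gamma_b))$ are exactly $(-2\Tr(\gamma),-2\Tr(\gamma^{q^e+1}))$, so your discriminant $2m_p\Tr(b\gamma_b)+\Tr(b)^2$ is $4A$ and your second discriminant is $-4(v^2-um_p)B$ in the paper's notation. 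The only organisational difference is in the frequency count: the paper extracts the needed fibre sizes from Lemma \ref{n12} (a direct count of $|\{\gamma:\Tr(\gamma^{q^e+1})+c\,\Tr(\gamma)^2=0\}|$) together with Pless power moments, whereas you propose to recover the full joint distribution of $(\Tr(b),\Tr(b\gamma_b))$ on $T$ from the character identity $\sum_{b\in T}\chi_1(\lambda\Tr(b\gamma_b)+\mu\Tr(b))=q^{-2\epsilon}S(-2\lambda,-2\mu)$. Both compute the same thing; your version is a touch more unified (one formula instead of Lemma \ref{n12} plus Pless), while the paper's version isolates the one count that is actually needed.
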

%\begin{remark}
%When $v\neq0$, $v^2-um_p=0$ implies $u\neq0$ and $m_p\neq0$.
%\end{remark}

\begin{example}
Let $(q,m,e)=(9,4,2)$ and $(u,v)=(0,2)$. Magma experiments show that ${\C}_{{D}_{1}}$ is a $[72,4,62]$ linear code over $\mathbb{F}_{9}$ with the weight enumerator $1+2016z^{62}+640z^{63}+3240z^{64}+576z^{71}+88z^{72}$, which is consistent with our result in Theorem \ref{t7}. This code is optimal due to \cite{CT}.
\end{example}
\begin{example}
Let $(q,m,e)=(3,6,1)$ and $(u,v)=(0,2)$. Magma experiments show that ${\C}_{{D}_{1}}$ is a $[81,6,51]$ linear code over $\mathbb{F}_{3}$ with the weight enumerator $1+324z^{51}+240z^{54}+162z^{60}+2z^{81}$, which is consistent with our result in Theorem \ref{t7}. This code is optimal due to \cite{CT}.
\end{example}
\begin{example}
Let $(q,m,e)=(3,4,2)$ and $(u,v)=(1,1)$. Magma experiments show that ${\C}_{{D}_{1}}$ is a $[9,4,4]$ linear code over $\mathbb{F}_{3}$ with the weight enumerator $1+12z^{4}+12z^{5}+24z^{6}+24z^{7}+6z^{8}+2z^{9}$, which is consistent with our result in Theorem \ref{t7}. This code is almost optimal due to \cite{CT}.
\end{example}

%{\color{blue}Notice that in Theorem \ref{t7}, when $m/\gcd(m,e)$ is even, $m_p\neq0$ and $m>3$, we have $m=2\epsilon+2$ if and only if $m=4$ and $\epsilon=1$ (i.e., $e=1,3$). For this case, we present two classes of linear codes over $\fq$ in the following, which can be proved similarly to Theorem \ref{t7} and thus we omit the proof.}

Similar to Corollary \ref{b1}, we have the following result.
%{\color{blue}As a special case for $v\neq0$, we have the following corollary, which gives a class of optimal linear codes and can be proved similarly to Theorem \ref{t7}.}
\begin{cor}\label{b3}
Let $\C_{D_1}$ be defined by \eqref{000} and \eqref{001} and $\epsilon$ be given as in \eqref{e1}. Assume that $m=4$ and $\epsilon=1$ (i.e., $e=1,3$). \\
1) If $v^2-um_p\neq0$ and $\eta_1(-um_p)=-1$, then $\C_{D_1}$ is a $3$-weight $[2q^2,4,q^2]$ linear code with the weight enumerator $1+2(q-1)z^{q^2}+(q^4-q^2)z^{2q(q-1)}+(q-1)^2z^{2q^2}$.\\
2) If $v^2-u m_p=0$, then  $\C_{D_1}$ is a $2$-weight $[q^2,3,q^2-q]$ optimal linear code meeting the Griesmer bound with the weight enumerator $1+(q^3-q)z^{(q-1)q}+(q-1)z^{q^2}$.
\end{cor}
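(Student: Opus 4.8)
The plan is to re-run the character-sum computation behind Theorem \ref{t7} in the single parameter regime left open by its hypothesis, namely $m=4$, $\epsilon=1$ (equivalently $\ell=2$, $\alpha=\gcd(m,e)=1$ with $e\in\{1,3\}$, and $\ell/\alpha=2$ even). Since $p$ is odd we always have $m_p=4\bmod p\neq0$, so only the alternatives $v^2-um_p\neq0$ and $v^2-um_p=0$ (both with $m_p\neq0$) occur, and these give parts 1) and 2) of the statement. Concretely, I would express, exactly as in the proof of Theorem \ref{t7}, the length $n=|D_1|$ and, for $b\in\fqm^*$, the quantity $Z(b)=|\{x\in D_1:\Tr(bx)=0\}|$, so that $\mathrm{wt}(c_b)=n-Z(b)$: realize the conditions $\Tr(x^{q^e+1})=u$, $\Tr(x)=v$ and $\Tr(bx)=0$ through sums of $\chi_1$ over $\fq$, and reduce everything to the Weil sums $S(a,b)$, whose values are supplied by Lemmas \ref{m4}, \ref{m6} and \ref{m7} together with the solvability and counting facts of Lemmas \ref{m5}, \ref{mm1} and \ref{n1}. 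Substituting $\ell=2$, $\alpha=\epsilon=1$ throughout makes the powers collapse, e.g.\ $q^{m-2\epsilon-2}=q^{\ell-\epsilon-1}=1$, $q^{\ell+\epsilon-1}=q^{m-2}=q^2$, $q^{\ell+\epsilon-2}=q^{m-2\epsilon-1}=q$, $q^{m-2\epsilon}=q^2$.

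The substance of the corollary is to identify what survives of the now-degenerate weight distributions. In part 2), where $v^2-um_p=0$, the small weight $(q-1)(q^{m-3}-q^{\ell+\epsilon-2})$ of the third case of Theorem \ref{t7} becomes $(q-1)(q-q)=0$, so $D_1$ lies in a hyperplane $\{x:\Tr(b_0x)=0\}$ for some $b_0\in\fqm^*$, the $\fq$-linear map $b\mapsto c_b$ has a one-dimensional kernel, $\dim\C_{D_1}=3$, and the frequencies computed over $b\in\fqm$ (which sum to $q^4$ and place mass $q$ on the zero word) must be divided by $q$; the only nonzero weights left are $q(q-1)$ and $q^2$, giving the two-weight $[q^2,3,q^2-q]$ code with enumerator $1+(q^3-q)z^{(q-1)q}+(q-1)z^{q^2}$. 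In part 1) the length is $q^{m-2}-q^{\ell+\epsilon-1}\eta_1(v^2-um_p)=q^2(1-\eta_1(v^2-um_p))$, hence equals the nondegenerate value $2q^2$ exactly in the sign regime singled out in the statement, where one also checks $\dim\C_{D_1}=m=4$; tracking the collapsed frequencies then leaves precisely the weights $q^2$, $2q(q-1)$, $2q^2$ with multiplicities $2(q-1)$, $q^4-q^2$, $(q-1)^2$, i.e.\ the $[2q^2,4,q^2]$ code claimed. For the optimality in part 2) I would verify the Griesmer bound directly: $\sum_{i=0}^{2}\lceil(q^2-q)/q^i\rceil=(q^2-q)+(q-1)+1=q^2=n$.

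I expect the only real obstacle to be the bookkeeping. As the six- and three-weight formulas of Theorem \ref{t7} degenerate, several frequencies become $0$ and some a priori distinct weights coincide, and one must be careful not to merge or drop a class by mistake; above all one must notice the dimension drop in part 2), recollect the frequencies so that they sum to $q^3$ rather than $q^4$, and confirm that the kernel is exactly one-dimensional. Cross-checking the outcome against the identity $\sum_i A_i=q^k$ (and against small cases such as $q=3$) is the safest way to pin the constants down.
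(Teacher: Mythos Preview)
Your proposal is correct and follows essentially the same route the paper (implicitly) takes: the paper offers no separate argument for Corollary~\ref{b3} beyond ``similar to Corollary~\ref{b1}'', which in turn is ``proved similarly to Theorem~\ref{t1}'', i.e.\ one specializes the character-sum computation underlying Theorem~\ref{t7} (via Lemmas~\ref{lem1}, \ref{n10}, \ref{n11}) to $m=4$, $\epsilon=1$. Your treatment is in fact more explicit than the paper's about the two points that make this case genuinely different from Theorem~\ref{t7}: (i) in part~2) the weight $(q-1)(q^{m-3}-q^{\ell+\epsilon-2})$ collapses to $0$ with multiplicity $q-1$, forcing a one-dimensional kernel and the division of all frequencies by $q$; and (ii) in part~1) several of the six nominal weights coincide, and one must aggregate their frequencies. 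One small caution: since Theorem~\ref{t7} formally excludes $m=2\epsilon+2$, you should phrase the substitution as being done in the underlying lemmas (which carry no such restriction) rather than in the theorem's tables; your wording ``re-run the character-sum computation behind Theorem~\ref{t7}'' already captures this. Also note that the length in part~1) equals $2q^2$ precisely when $\eta_1(v^2-um_p)=-1$; the paper's stated hypothesis $\eta_1(-um_p)=-1$ appears to be a misprint (Example~10 with $(q,u,v)=(5,1,1)$ satisfies the former but not the latter), and your computation correctly uses the right condition.
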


\begin{example}
Let $(q,m,e)=(5,4,1)$ and $(u,v)=(1,1)$. Magma experiments show that ${\C}_{{D}_{1}}$ is a $[50,4,25]$ linear code over $\mathbb{F}_{5}$ with the weight enumerator $1+8z^{25}+600z^{40}+16z^{50}$, which is consistent with our result in Corollary \ref{b3}.
According to \cite{CT}, the minimum distance of the best known linear codes over ${\mathbb F}_{5}$ with length $50$ and dimension $4$ is $38$.
\end{example}
\begin{example}
Let $(q,m,e)=(5,4,1)$ and $(u,v)=(1,2)$. Magma experiments show that ${\C}_{{D}_{1}}$ is a $[25,3,20]$ linear code over $\mathbb{F}_{5}$ with the weight enumerator $1+120z^{20}+4z^{25}$, which is consistent with our result in Corollary \ref{b3}. This code is almost optimal due to \cite{CT}.
\end{example}
\subsection{Linear codes from the defining set $D_2$}  \label{sub3.2}
In this subsection, we will investigate the linear code $\C_{D_2}$ defined as in \eqref{000} and \eqref{001} by considering two cases: 1) $v=0$; 2) $v\in\fq^*$.
%We only focus on the case $u=0$ because of the difficulty of the computation of weight distribution of the codes $\C_{D_2}$ in the case $u\ne 0$.

\begin{theorem}\label{t4}
Let $\C_{D_2}$ be defined by \eqref{000} and \eqref{001}, $m/\gcd(m,e)$ be even, $u=0$ , $v=0$ and $\epsilon$ be given as in \eqref{e1}. Assume that $m>3$, and $m>2\epsilon+2$ if $m_p\neq0$. \\
1) If $m_p\neq0$, then ${\C}_{{D}_{2}}$ is a $6$-weight $[(q-1)(q^{m-2}-q^{\ell+\epsilon-1}),m,(q-1)(q^{m-2}-q^{m-3}-q^{\ell+\epsilon-1}-q^{\ell+\epsilon-2})]$ linear code over $\fq$ with the weight distribution
      \begin{center}\footnotesize
\begin{tabular}{cc}
\hline
 \rm Weight  &  \rm Frequency                                     \\ \hline
$0$              & $1$               \\
$(q-1)(q^{m-2}-q^{\ell+\epsilon-1})$              &$q-1$             \\
$(q-1)(q^{m-2}-q^{m-3})$                      &$q^{m-2\epsilon-2}-1$      \\
$(q-1)(q^{m-2}-q^{m-3}-q^{\ell+\epsilon-2})$              &$(q-1)(q^{m-2\epsilon-2}-q^{\ell-\epsilon-1})$        \\
$(q-1)(q^{m-2}-q^{m-3}-q^{\ell+\epsilon-1})$                      &$(q-1)(q^{m-2\epsilon-2}-1)$      \\
$(q-1)(q^{m-2}-q^{m-3}-q^{\ell+\epsilon-1}-q^{\ell+\epsilon-2})$                      &$\frac{q-1}{2}(q^{m-2\epsilon-1}-q^{\ell-\epsilon}+2q^{\ell-\epsilon-1}-2q^{m-2\epsilon-2})$      \\
$(q-1)^2(q^{m-3}-q^{\ell+\epsilon-2})$                      &$q^m+\frac{1}{2}(q^{\ell-\epsilon+1}-q^{m-2\epsilon}-q^{m-2\epsilon-1}-q^{\ell-\epsilon})$      \\ \hline
\end{tabular}
\end{center}
2) If $m_p=0$, then ${\C}_{{D}_{2}}$ is a $4$-weight $[(q-1)q^{m-2},m,(q-1)(q^{m-2}-q^{m-3}-q^{\ell+\epsilon-2})]$ linear code over $\fq$ with the weight distribution
      \begin{center}\footnotesize
\begin{tabular}{cc}
\hline
 \rm Weight  &  \rm Frequency                                     \\ \hline
$0$              & $1$               \\
$(q-1)q^{m-2}$              &$q-1$             \\
$(q-1)^2(q^{m-3}+q^{\ell+\epsilon-2})$                      &$(q-1)q^{m-2\epsilon-2}$      \\
$(q-1)(q^{m-2}-q^{m-3}-q^{\ell+\epsilon-2})$              &$(q-1)^2q^{m-2\epsilon-2}$               \\
$(q-1)(q^{m-2}-q^{m-3})$                      &$q^m-q^{m-2\epsilon}+q^{m-2\epsilon-1}-q$      \\ \hline
\end{tabular}
\end{center}
\end{theorem}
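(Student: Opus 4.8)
\emph{Setup.} For $b\in\fqm$ write $c_b=(\Tr(bd))_{d\in D_2}\in\C_{D_2}$. By orthogonality of the additive characters of $\fq$,
\[
\mathrm{wt}(c_b)=|D_2|-\frac1q\sum_{y\in\fq}\sum_{d\in D_2}\chi(ybd)=\frac{q-1}{q}|D_2|-\frac1q\sum_{y\in\fq^{*}}\sum_{d\in D_2}\chi(ybd),
\]
so the whole task is to evaluate the inner sums over $D_2$. Using the indicators $\mathbf 1[\Tr(x^{q^e+1})=0]=\frac1q\sum_{y_1\in\fq}\chi(y_1x^{q^e+1})$ and $\mathbf 1[\Tr(x)\neq0]=1-\frac1q\sum_{y_2\in\fq}\chi(y_2x)$, one expands $\sum_{d\in D_2}\chi(zd)$ into a combination of Weil sums $S(y_1,\cdot)$. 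Equivalently, since $D_1$ (the set of Theorem~\ref{t1} with $u=v=0$) and $D_2$ partition $E_1=\{x\in\fqm^{*}:\Tr(x^{q^e+1})=0\}$, one has $\mathrm{wt}(c_b)=\mathrm{wt}_{\C_{E_1}}(c_b)-\mathrm{wt}_{\C_{D_1}}(c_b)$; the second term may be quoted from Theorem~\ref{t1}, and the first is governed by $\frac1q\sum_{y_1\in\fq}(S(y_1,yb)-1)$.

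\emph{Evaluating the Weil sums.} For $y_1=0$ one gets $S(0,yb)=0$ for every $b\neq0$, contributing only a constant. For $y_1\in\fq^{*}$, Lemma~\ref{n1} gives $y_1^{(q^m-1)/(q^\alpha+1)}=1$, so by Lemma~\ref{m5} the linearized polynomial $f(X)=y_1^{q^e}X^{q^{2e}}+y_1X=y_1(X^{q^{2e}}+X)$ is a permutation of $\fqm$ exactly when $\ell/\alpha$ is odd (i.e. $\epsilon=0$) and is not a permutation when $\ell/\alpha$ is even (i.e. $\epsilon=\gcd(m,e)$). Lemmas~\ref{m6} and~\ref{m7} then give $S(y_1,yb)=-q^{\ell+\epsilon}\chi(-y_1x_0^{q^e+1})$, where $x_0$ solves $y_1(X^{q^{2e}}+X)=-(yb)^{q^e}$, with the convention $S(y_1,yb)=0$ when $\epsilon=\gcd(m,e)$ and $b\notin T$. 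Here $x_0$ may be taken as $(y/y_1)\gamma(b)$ with $\gamma(b)$ as in Section~\ref{s2}, and combining with the identity $2\Tr(y_1x_0^{q^e+1})=-\Tr((yb)x_0)$ obtained from the equation of $x_0$ turns the character factor into $\chi_1\!\big(\tfrac{y^2}{2y_1}\,\Tr(b\gamma(b))\big)$. Hence the governing quantity is $s:=\Tr(b\gamma(b))\in\fq$, which one checks is independent of the chosen solution; summing $\chi_1(\tfrac{y^2s}{2y_1})$ over $y_1,y\in\fq^{*}$ telescopes to $(q-1)^2$ or $-(q-1)$ according as $s=0$ or $s\neq0$, so the $E_1$-part of $\mathrm{wt}(c_b)$ is a pure power-of-$q$ expression in $s$ and in the indicators $b\in T$, $b\in\fq$. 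The dependence on $m_p$ — and the split into $6$ versus $4$ weights — comes from the $D_1$-part: writing $b'=y_2+yb$ introduces an extra sum over $y_2\in\fq$ whose quadratic leading coefficient works out to $-m_p/(4y_1)$ (apply $\Tr^{}_{\fqm/\fq}$ to the equation of the solution at $b'=1$), so for $m_p\neq0$ Lemma~\ref{m2} over $\fq$ contributes a quadratic Gauss sum $G(\eta_1,\chi_1)$, while for $m_p=0$ the sum collapses to $q$ or $0$.

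\emph{Frequencies and dimension.} The above expresses $\mathrm{wt}(c_b)$ as an explicit function of $b$ through $s=\Tr(b\gamma(b))$ and the conditions $b\in T$ and $b\in\fq$ (plus, when $m_p\neq0$, one further quadratic-character condition). One then counts the $b\in\fqm$ in each class: $|\{b:\Tr(b\gamma(b))=0\}|$ is a quadratic character sum over $\fqm$ evaluated via the Gauss-sum value of Lemma~\ref{m1}, $|T|$ and $|\{b\in T:s=0\}|$ are handled with Lemma~\ref{mm1} (using Lemma~\ref{n1} for the $\epsilon$-bookkeeping), and the $q-1$ scalars $b\in\fq^{*}$ satisfy $\Tr(bd)=b\Tr(d)\neq0$ for all $d\in D_2$ and so give the $q-1$ full-weight codewords of weight $|D_2|$; once the full multiplicity table is pinned down with $A_0=1$ this also forces $\dim\C_{D_2}=m$. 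Reading off $|D_2|$ and collecting the weight values yields the two tables, the $m_p\neq0$ case having $6$ weights and the $m_p=0$ case $4$.

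\emph{Main difficulty.} The mathematics is essentially all in the bookkeeping: the argument must be pushed through the product of case splits ($\epsilon\in\{0,\gcd(m,e)\}$ versus $m_p=0$ or $m_p\neq0$, with sub-cases $b\in\fq$, $b\in T$, $s=0$), the signs from Lemmas~\ref{m4}--\ref{m7} and from the Gauss sums must be tracked exactly, the four nested summations ($x\in\fqm$, then $y_2\in\fq$, then $y\in\fq^{*}$, then the count over $b$) must be performed in the right order, and — the genuinely delicate point — the exact frequencies (not merely the distinct weights) must be determined, which requires this second layer of character-sum counting. Checking that the resulting multiplicities are non-negative integers summing to $q^m$ (equivalently, the first two or three Pless power moments), together with the worked examples, is the safeguard against arithmetic slips.
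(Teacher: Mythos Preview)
Your strategy is the paper's: write $D_2=E_1\setminus D_1$, so $\mathrm{wt}_{\C_{D_2}}(b)=\mathrm{wt}_{\C_{E_1}}(b)-\mathrm{wt}_{\C_{D_1}}(b)$, evaluate each piece through the Weil sums $S(y_1,\cdot)$ via Lemmas~\ref{m4}--\ref{m7} and~\ref{n1}, and case-split on $b\in\fq$, $b\in T$, and the values of $\Tr(\gamma^{q^e+1})$ and $\Tr(\gamma)$ (your $s=\Tr(b\gamma)$ is exactly $-2\Tr(\gamma^{q^e+1})$, so this is the same invariant).

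Two small points of divergence. First, ``may be quoted from Theorem~\ref{t1}'' is not quite right: the theorem gives only the weight \emph{distribution} of $\C_{D_1}$, whereas the subtraction must be performed $b$-by-$b$, so what you really need is the per-$b$ weight formula from its proof (equivalently Lemma~\ref{n3}); your later paragraph shows you understand this, but the phrasing invites a gap. Second, the paper does not try to count every frequency class by a direct character sum as you propose. It reads off $A_{w_1},\ldots,A_{w_4}$ (case $m_p\neq0$) from Lemmas~\ref{lem1} and~\ref{n12} (the latter giving $|\{\gamma:\Tr(\gamma)^2-m_p\Tr(\gamma^{q^e+1})=0\}|$) and then solves for the remaining $A_{w_5},A_{w_6}$ from the first two Pless power moments; similarly in the $m_p=0$ case it determines $A_{w_1},A_{w_2},A_{w_4}$ directly and gets $A_{w_3}$ from a moment identity. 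Your plan of computing all six frequencies by character sums and using the moments only as a check is feasible, but it requires precisely the count of Lemma~\ref{n12} (and an analogous count with an $\eta_1$ twist for the $\eta_1(A)=\pm1$ split), which you do not name; the paper's shortcut via Pless moments avoids the twisted count entirely.
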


\begin{example}
Let $(q,m,e)=(3,8,1)$ and $(u,v)=(0,0)$. Magma experiments show that ${\C}_{{D}_{2}}$ is a $[1296,8,756]$ linear code over $\mathbb{F}_{3}$ with the weight enumerator $1+72z^{756}+160z^{810}+6102z^{864}+144z^{918}+80z^{972}+2z^{1296}$, which is consistent with our result in Theorem \ref{t4}.
\end{example}
\begin{example}
Let $(q,m,e)=(3,6,1)$ and $(u,v)=(0,0)$. Magma experiments show that ${\C}_{{D}_{2}}$ is a $[162,6,102]$ linear code over $\mathbb{F}_{3}$ with the weight enumerator $1+324z^{102}+240z^{108}+162z^{120}+2z^{162}$, which is consistent with our result in Theorem \ref{t4}.
\end{example}

\begin{theorem}\label{t10}
Let $\C_{D_2}$ be defined by \eqref{000} and \eqref{001}, $m/\gcd(m,e)$ be even, $u=0$, $v\in\fq^*$ and $\epsilon$ be given as in \eqref{e1}. Assume that $m>3$, and $m>2\epsilon+2$ if $m_p\neq0$. \\
1) If $m_p\neq0$, then ${\C}_{{D}_{2}}$ is a $6$-weight $[(q-1)(q^{m-2}-q^{\ell+\epsilon-1})+q^{\ell+\epsilon-1}-1,m,(q-1)(q^{m-2}-q^{\ell+\epsilon-1})-q^{m-2}+q^{\ell+\epsilon-1}]$ linear code over $\fq$ with the weight distribution
      \begin{center}\footnotesize
\begin{tabular}{cc}
\hline
 \rm Weight  &  \rm Frequency                                     \\ \hline
$0$              & $1$               \\
$(q-1)(q^{m-2}-q^{\ell+\epsilon-1})-q^{m-2}+q^{\ell+\epsilon-1}$              &$q-1$             \\
$(q-1)^2(q^{m-3}-q^{\ell+\epsilon-2})$                      &$\frac{q-1}{2}(q^{\ell-\epsilon}-q^{m-2\epsilon-1})-q^{m-2\epsilon-1}+q^{m-2\epsilon}$      \\
$(q-1)(q^{m-2}-q^{m-3}-q^{\ell+\epsilon-1})+q^{\ell+\epsilon-1}$              &$(q-1)(q^{m-2\epsilon-2}-1)$        \\
$(q-1)(q^{m-2}-q^{m-3}-q^{\ell+\epsilon-1})+q^{\ell+\epsilon-1}+q^{\ell+\epsilon-2}$                      &$\frac{q-1}{2}(q^{m-2\epsilon-1}-2q^{m-2\epsilon-2}-q^{\ell-\epsilon}+2q^{\ell-\epsilon-1})+q^m-q^{m-2\epsilon}$      \\
$(q-1)^2q^{m-3}$                      &$q^{m-2\epsilon-2}-1$      \\
$(q-1)^2q^{m-3}+q^{\ell+\epsilon-2}$                      &$(q-1)(q^{m-2\epsilon-2}-q^{\ell-\epsilon-1})$      \\ \hline
\end{tabular}
\end{center}
2) If $m_p=0$, then ${\C}_{{D}_{2}}$ is a $6$-weight $[(q-1)(q^{m-2}-q^{\ell+\epsilon-1})-1,m,(q-1)q^{m-2}-q^{m-2}]$ linear code over $\fq$ with the weight distribution
\begin{center}\footnotesize
\begin{tabular}{cc}
\hline
 \rm Weight  &  \rm Frequency                                     \\ \hline
$0$              & $1$               \\
$(q-1)^2(q^{m-3}-q^{\ell+\epsilon-2})$              &$q^m-q^{m-2\epsilon}$             \\
$(q-1)q^{m-2}-q^{m-2}$                      &$q-1$      \\
$(q-1)(q^{m-2}-q^{m-3}-q^{\ell+\epsilon-2})$              &$(q-1)q^{m-2\epsilon-2}$               \\
$(q-1)(q^{m-2}-q^{m-3})$                      &$q^{m-2\epsilon-2}-(q-1)q^{\ell-\epsilon-1}-q$      \\
$(q-1)(q^{m-2}-q^{m-3}-q^{\ell+\epsilon-1})$                      &$(q-1)(q^{m-2\epsilon-2}+q^{\ell-\epsilon-1})$      \\
$(q-1)(q^{m-2}-q^{m-3}-q^{\ell+\epsilon-1})+q^{\ell+\epsilon-2}$                      &$q^{m-2\epsilon}-2q^{m-2\epsilon-1}+q^{m-2\epsilon-2}$      \\ \hline
\end{tabular}
\end{center}
\end{theorem}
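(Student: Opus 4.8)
\emph{Overall approach.} The plan is to run the standard additive-character computation for codes of the form \eqref{000}, carried out in the two parallel cases $m_p\ne 0$ and $m_p=0$. First I would determine the length. Since $D_2=E_1\setminus(E_1\cap E_2)$, we have $n=|D_2|=|E_1|-|D_1|$, and expressing the conditions $\Tr(x^{q^e+1})=0$ and $\Tr(x)=v$ through the canonical additive character $\chi_1$ of $\fq$ reduces $|E_1|$ and $|D_1|$ to Weil sums $S(y,0)$ and $S(y,z)$ with $y\in\fq^*$, $z\in\fq$. By Lemma \ref{n1} we have $y^{(q^m-1)/(q^\alpha+1)}=1$ for $y\in\fq^*$, so Lemma \ref{m4} gives $S(y,0)=-q^{\ell+\epsilon}$ in every subcase, and the sums $S(y,z)$ are handled by the evaluation below.

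\emph{From weights to Weil sums.} For $b\in\fqm^*$ I would write $\mathrm{wt}(c_b)=n-N_0(b)$ with $N_0(b)=|\{x\in D_2:\Tr(bx)=0\}|$. Detecting the three linear conditions by characters gives
$$N_0(b)=\frac1{q^2}\sum_{y,\lambda\in\fq}\sum_{x\in\fqm^*}\chi(yx^{q^e+1}+\lambda bx)-\frac1{q^3}\sum_{y,\lambda,z\in\fq}\chi_1(-zv)\sum_{x\in\fqm^*}\chi\big(yx^{q^e+1}+(z+\lambda b)x\big),$$
and after isolating the $x=0$ and $y=0$ terms everything becomes a linear combination of Weil sums $S(y,b')$ with $y\in\fq^*$. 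The decisive simplification is that $y^{q^e}=y$ for $y\in\fq$, so the linearized equation $y^{q^e}X^{q^{2e}}+yX=-(b')^{q^e}$ is equivalent to $X^{q^{2e}}+X=-(b'/y)^{q^e}$; combining Lemmas \ref{m5}, \ref{m6}, \ref{m7} with the $\fq^*$-invariance of $T$ then yields
$$S(y,b')=\begin{cases}0,& \epsilon=\alpha\text{ and }b'\notin T,\\[1mm]-q^{\ell+\epsilon}\,\chi_1\!\big(\tfrac1{2y}\Tr(b'\gamma_{b'})\big),& \text{otherwise},\end{cases}$$
where I have used the identity $\Tr(\gamma_{b'}^{q^e+1})=-\tfrac12\Tr(b'\gamma_{b'})$ (multiply $\gamma_{b'}^{q^{2e}}+\gamma_{b'}=-(b')^{q^e}$ by $\gamma_{b'}^{q^e}$ and take traces) and the linearity $\gamma_{c+c'}=\gamma_c+\gamma_{c'}$, $\fq$-homogeneity $\gamma_{\lambda c}=\lambda\gamma_c$ of the solution map. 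Hence $Q(b):=\Tr(b\gamma_b)$ is a quadratic form on $\fqm\cong\fq^m$, with $\Tr(\gamma_b)=-\tfrac12\Tr(b)$; and since $\gamma_c=-c/2$ for $c\in\fq$, we get $Q(c)=-\tfrac{m_p}{2}c^2$ on $\fq$, which is exactly where the dichotomy $m_p\ne0$ versus $m_p=0$ originates.

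\emph{Substitution and the class count.} Plugging this formula back into $N_0(b)$, the inner sums over $\lambda,y$ in the first term are elementary, while in the second term the sum over $z$ is a Gauss sum over $\fq$ governed by the binary quadratic $-\tfrac{m_p}{2}z^2-z\lambda\,\Tr(b)+\lambda^2Q(b)$ twisted by $-zv$; completing the square and using Lemmas \ref{m1}--\ref{m2} shows that $\mathrm{wt}(c_b)$ depends only on whether $b\in\fq^*$, on the indicator $[b\in T]$ (relevant only when $\epsilon=\alpha$), and on the discriminant-type quantity $(\Tr b)^2+2m_p\,Q(b)$ (which collapses to $(\Tr b)^2$ when $m_p=0$). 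All $b\in\fq^*$ produce the common weight $n-|\{x\in\fqm^*:\Tr(x^{q^e+1})=0,\ \Tr(x)=0\}|$, accounting for the frequency-$(q-1)$ row. What is left is to count, for each admissible value of this data, the number of $b\in\fqm^*$ that realize it; this amounts to evaluating sums of the shape $\sum_b\chi_1\!\big(\mu Q(b)+\nu\Tr(b)\big)$, which I would compute via Lemma \ref{m2} (turning a quadratic Weil sum into a Gauss sum), Lemma \ref{m1} for the explicit value, and $|T|=q^{m-2\alpha}$ from Lemma \ref{mm1} when $\epsilon=\alpha$. Assembling these multiplicities produces the two tables, which I would finally cross-check against $\sum_iA_i=q^m$ and the first two power-moment identities.

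I expect the real obstacle to be this last counting step: describing the joint distribution over $\fqm^*$ of $\Tr(b)$, $Q(b)$ and the indicator $[b\in T]$ — equivalently, the rank and Witt type of the quadratic form $Q$, its restriction to the hyperplane $\Tr(x)=0$ and to the subspace $T$, and the bookkeeping of the $\fq^*$-scaling orbits — all of which bifurcates according to whether $p\mid m$. Every individual ingredient is supplied by Lemmas \ref{m1}--\ref{n1}, but arranging them so that the six weights and their frequencies come out in the stated closed form is the delicate part; the hypotheses $m>3$, and $m>2\epsilon+2$ when $m_p\ne0$, are what force all the displayed frequencies to be non-negative and the six weights to be pairwise distinct.
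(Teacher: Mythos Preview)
Your proposal is correct and follows essentially the same route as the paper, which likewise writes the weight in $\C_{D_2}$ as the difference of the weights in $\C_{E_1}$ and $\C_{D_1}$, evaluates the resulting Weil sums via Lemmas \ref{m4}--\ref{m7} (using Lemma \ref{n1}), case-splits on $b\in T$ and on $\Tr(\gamma)$, $\Tr(\gamma^{q^e+1})$ --- your $Q(b)$ and discriminant $(\Tr b)^2+2m_pQ(b)$ are exactly $-2\Tr(\gamma^{q^e+1})$ and $4A$ in the paper's notation --- and then pins down enough frequencies via Lemmas \ref{mm1}, \ref{lem1}, \ref{n12} to let the first two Pless moments finish the table. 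One small correction: the counting sums $\sum_b\chi_1(\mu Q(b)+\nu\Tr(b))$ are not univariate, so Lemma \ref{m2} does not apply to them directly; rather, the change of variable $b\leftrightarrow\gamma_b$ (a bijection when $\epsilon=0$, and $q^{2\alpha}$-to-$1$ onto $T$ when $\epsilon=\alpha$) turns them back into Weil sums $S(-2\mu,-2\nu)$, which is precisely how the paper's Lemma \ref{n12} handles this step.
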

%\begin{cor}
%{\color{red}Assume that $m/\gcd(m,e)$ is even, $u=0$, $v\neq0$, $m_p\neq0$ and $m>3$, then $m=2\epsilon
%+2$ if and only if $m=4$ and $\epsilon=1$ (i.e., $e=1,3$). Thus $C_{D_2}$ defined by \eqref{000} and \eqref{001} is $[q^2-1,2,2q]$ optimal linear code meeting the Griesmer bound, and it is a $1$-weight with weight enumerator $1+(q^2-1)z^{(q-1)q}$.}
%\end{cor}

\begin{example}
Let $(q,m,e)=(3,8,1)$ and $(u,v)=(0,1)$. Magma experiments show that ${\C}_{{D}_{2}}$ is a $[1376,8,648]$ linear code over $\mathbb{F}_{3}$ with the weight enumerator $1+2z^{648}+270z^{864}+160z^{891}+5904z^{918}+80z^{972}+144z^{999}$, which is consistent with our result in Theorem \ref{t10}.
\end{example}
\begin{example}
Let $(q,m,e)=(3,6,1)$ and $(u,v)=(0,1)$. Magma experiments show that ${\C}_{{D}_{2}}$ is a $[143,6,81]$ linear code over $\mathbb{F}_{3}$ with the weight enumerator $1+2z^{81}+180z^{90}+324z^{93}+162z^{102}+60z^{108}$, which is consistent with our result in Theorem \ref{t10}.
\end{example}

\begin{remark}
When $m_p=0$ and $\ell/\gcd(m,e)$ is odd, the code $\C_{D_2}$ in 2) of Theorem \ref{t10} is reduced to a $5$-weight linear code.
\end{remark}

%\begin{remark} \label{remark-1}
%Note that we focus solely on the case $u=0$ for the code $\C_{D_2}$ in Theorems \ref{t4} and \ref{t10}, due to the computational difficulties associated with the weight distribution of the code $\C_{D_2}$ when $u\ne 0$. It remains a problem to study the code $\C_{D_2}$ for the case $u\ne 0$, which is $t$-weight by magma experiments {\color{red} with $4\leq t \leq 8$.}
%\end{remark}
\begin{remark} \label{remark-1}
Note that we focus solely on the case $u=0$ for the code $\C_{D_2}$ in Theorems \ref{t4} and \ref{t10}, due to the computational difficulties associated with the weight distribution of the code $\C_{D_2}$ when $u\ne 0$. It remains a problem to study the code $\C_{D_2}$ for the case $u\ne 0$ when $m/\gcd(m,e)$ is even, which is $t$-weight with $6\leq t\leq8$ for $q=5$ and $5\leq m<10$ by Magma experiments.
%indicate that $\C_{D_2}$ is $t$-weight code for $q=5$, $4<m<10$ and $\frac{m}{\alpha}$ is even, where $6\leq t\leq8$.% which is $t$-weight by magma experiments {\color{red} with $4\leq t \leq 8$.}
\end{remark}

%\begin{pro}
%Determine the parameters and weight distribution of $\C_{D_2}$ defined by \eqref{000} and \eqref{001} for the case $u\in\fq^*$ when $m/\gcd(m,e)$ is even.
%\end{pro}

\subsection{Linear codes from the defining set $D_3$}  \label{sub3.3}
In this subsection, we will investigate the linear code $\C_{D_3}$ defined as in \eqref{000} and \eqref{001} by considering two cases: 1) $v=0$; 2) $v\in\fq^*$.

\begin{theorem}\label{t5}
Let $\C_{D_3}$ be defined by \eqref{000} and \eqref{001}, $m/\gcd(m,e)$ be even, $u=0$, $v=0$ and $\epsilon$ be given as in \eqref{e1}. Assume that $m>3$, and $m>2\epsilon+2$ if $m_p\neq0$.\\
1) If $m_p\neq0$, then ${\C}_{{D}_{3}}$ is a $6$-weight $[(q-1)(q^{m-2}-q^{\ell+\epsilon-1})+q^{m-1}-1,m,(q-1)(q^{m-2}-q^{\ell+\epsilon-1})]$ linear code over $\fq$  with the weight distribution
      \begin{center} \footnotesize
\begin{tabular}{cc}
\hline
\rm Weight  & \rm Frequency                                     \\ \hline
$0$              & $1$               \\
$(q-1)(q^{m-2}-q^{\ell+\epsilon-1})$              &$q-1$             \\
$(q-1)(2q^{m-2}-q^{m-3})$                      &$q^{m-2\epsilon-2}-1$      \\
$(q-1)(2q^{m-2}-q^{m-3}-q^{\ell+\epsilon-2})$              &$(q-1)(q^{m-2\epsilon-2}-q^{\ell-\epsilon-1})$               \\
$(q-1)(2q^{m-2}-q^{m-3}-q^{\ell+\epsilon-1})$                      &$(q-1)(q^{m-2\epsilon-2}-1)$      \\
$(q-1)(2q^{m-2}-q^{m-3}-q^{\ell+\epsilon-1}-q^{\ell+\epsilon-2})$                      &$\frac{q-1}{2}(q^{m-2\epsilon-1}-q^{\ell-\epsilon}+2q^{\ell-\epsilon-1}-2q^{m-2\epsilon-2})$      \\
$(q-1)(2q^{m-2}-q^{m-3}-q^{\ell+\epsilon-1}+q^{\ell+\epsilon-2})$                      &$q^m+\frac{1}{2}(q^{\ell-\epsilon+1}-q^{m-2\epsilon}-q^{m-2\epsilon-1}-q^{\ell-\epsilon})$      \\ \hline
\end{tabular}
\end{center}
2) If $m_p=0$, then ${\C}_{{D}_{3}}$ is a $4$-weight $[(q-1)q^{m-2}+q^{m-1}-1,m,(q-1)q^{m-2}]$ linear code over $\fq$  with the weight distribution
      \begin{center}\footnotesize
\begin{tabular}{cc}
\hline
\rm Weight  & \rm Frequency                                     \\ \hline
$0$              & $1$               \\
$(q-1)q^{m-2}$              &$q-1$             \\
$(q-1)(2q^{m-2}-q^{m-3}+q^{\ell+\epsilon-1}-q^{\ell+\epsilon-2})$                      &$(q-1)q^{m-2\epsilon-2}$      \\
$(q-1)(2q^{m-2}-q^{m-3}-q^{\ell+\epsilon-2})$              &$(q-1)^2q^{m-2\epsilon-2}$               \\
$(q-1)(2q^{m-2}-q^{m-3})$                      &$q^m-q^{m-2\epsilon}+q^{m-2\epsilon-1}-q$      \\  \hline
\end{tabular}
\end{center}
\end{theorem}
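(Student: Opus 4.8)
The plan is to compute the length, dimension and weight distribution of $\C_{D_3}$ by exploiting inclusion--exclusion relative to the codes $\C_{D_1}$ and $\C_{D_2}$ already analysed. Since $D_3 = E_1 \cup E_2$ and $D_1 = E_1\cap E_2$, we have $|D_3| = |E_1| + |E_2| - |D_1|$, and more importantly, for each fixed $b\in\fqm$ the Hamming weight $\mathrm{wt}_{D_3}(b)$ of the codeword $(\Tr(bx))_{x\in D_3}$ satisfies
\begin{equation}\label{eq-inc-exc}
n_{D_3}(b) = n_{E_1}(b) + n_{E_2}(b) - n_{D_1}(b),
\end{equation}
where $n_S(b) = |\{x\in S : \Tr(bx) = 0\}|$; the weight is then $|D_3| - n_{D_3}(b)$. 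First I would record the standard evaluations: $|E_2| = q^{m-1}-1$ for $v=0$, and $n_{E_2}(b)$ is $q^{m-1}-1$ if $b=0$, $q^{m-2}-1$ if $b\in\fqm^*$ with $b$ a scalar multiple of $1$ in a suitable sense (more precisely, via the character-sum expression $q^{-1}\sum_{y\in\fq}\sum_{x\in\fqm^*}\chi_1(y\Tr(bx))$, which collapses by orthogonality), and $q^{m-2}$ otherwise. The sets $E_1$ and $D_1$ (with $u=0$, $v=0$) are exactly the ones controlled in Theorem \ref{t1}, part 1) (and its building blocks), so the counts $n_{E_1}(b)$ and $n_{D_1}(b)$ — together with the frequencies of each value as $b$ ranges over $\fqm$ — are already available from the Weil-sum computations in Lemmas \ref{m4}--\ref{n1}.

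The core of the argument is to expand $n_{E_1}(b)$ and $n_{D_1}(b)$ as double character sums. Writing $n_{E_1}(b) = q^{-1}\sum_{z\in\fq}\sum_{x\in\fqm^*}\chi_1\big(z(\Tr(x^{q^e+1})-u)\big)\cdot q^{-1}$... more carefully, $|D_1| $ and $n_{D_1}(b)$ involve sums of the shape $\sum_{z_1,z_2\in\fq}\chi_1(-z_1 u - z_2 v)\sum_{x\in\fqm}\chi(z_1 x^{q^e+1} + (z_2 c + b')x)$, i.e. the Weil sums $S(a,\beta)$ with $a = z_1 c'$ for appropriate constants; these are evaluated by Lemmas \ref{m4}, \ref{m6}, \ref{m7}, and the number of $b$ giving each outcome is governed by Lemmas \ref{m5} and \ref{mm1}. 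The key structural fact driving the case split $m_p\neq 0$ vs.\ $m_p = 0$ is whether, for $z_1\in\fq^*$, the auxiliary equation $X^{q^{2e}}+X = -b^{q^e}$ (equivalently the linearized polynomial $z_1^{q^e}X^{q^{2e}}+z_1 X$) is a permutation — and $z_1^{(q^m-1)/(q^\alpha+1)} = 1$ by Lemma \ref{n1}, so the permutation behaviour is the same for all $z_1\in\fq^*$ and is determined by the parity of $\ell/\alpha$, i.e.\ by $\epsilon$. The term $\chi_1(-z_1 u - z_2 v) = \chi_1(0)=1$ (since $u=v=0$) simplifies things considerably here, and the only remaining $z_2$-dependence is through the linear shift of $b$, over which one sums; the residual quadratic-in-$b$ behaviour is what produces the $\eta_1$-free, purely power-of-$q$ weights seen in the statement.

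Once the multiset of triples $\big(n_{E_1}(b), n_{E_2}(b), n_{D_1}(b)\big)_{b\in\fqm}$ is in hand, \eqref{eq-inc-exc} gives the multiset of codeword weights of $\C_{D_3}$ directly, and collecting equal values yields the tables; the dimension is $m$ because the map $b\mapsto (\Tr(bx))_{x\in D_3}$ is injective (if $\Tr(bx)=0$ for all $x\in D_3\supseteq E_2$, and $E_2$ spans $\fqm$ over $\fq$ when $m>3$, then $b=0$), and one sanity-checks the result against the first two Pless power moments ($\sum_b n_{D_3}(b)$-type identities) and against the examples with $(q,m,e)\in\{(3,8,1),(3,6,1)\}$. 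The main obstacle I expect is bookkeeping rather than conceptual: correctly tracking how the frequency of each weight splits when $\epsilon = \alpha$ (the $\ell/\alpha$ even subcase), where Lemma \ref{m5} contributes the extra factor $q^{2\alpha}-1$ of solutions and the Weil sums jump by a factor $\pm q^{\alpha}$, and making sure the $m_p=0$ collapse (where certain frequencies vanish and the 6-weight code degenerates to 4-weight) is handled consistently with the analogous collapse already recorded in Theorems \ref{t1} and \ref{t4}.
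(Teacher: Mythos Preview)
Your inclusion--exclusion strategy is exactly the paper's: it writes $D_3=D_2\cup E_2$ (disjoint) with $D_2=E_1\setminus D_1$, so that $w_H^{D_3}(b)=w_H^{E_1}(b)+w_H^{E_2}(b)-w_H^{D_1}(b)$, and then reads off the weight multiset from the already-computed distributions for $\C_{E_1}$ and $\C_{D_1}$ together with the elementary one for $\C_{E_2}$.

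Two slips to fix. First, your values for $n_{E_2}(b)$ are shuffled: for $v=0$ one has $n_{E_2}(b)=q^{m-1}-1$ when $b\in\fq$ (not just $b=0$), because $\Tr(bx)=b\,\Tr(x)=0$ on $E_2$, and $n_{E_2}(b)=q^{m-2}-1$ when $b\notin\fq$; this is precisely why the weight $(q-1)(q^{m-2}-q^{\ell+\epsilon-1})$ (resp.\ $(q-1)q^{m-2}$) occurs with frequency $q-1$ in the table. Second, your dimension argument fails as written: $E_2=\{x:\Tr(x)=0\}\setminus\{0\}$ is a hyperplane and does \emph{not} span $\fqm$ over $\fq$, so $\Tr(bx)=0$ for all $x\in E_2$ only forces $b\in\fq$, not $b=0$. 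The dimension $m$ instead follows from the weight computation itself (or from observing that for $b\in\fq^*$ any $x\in D_2\subset D_3$ has $\Tr(bx)=b\,\Tr(x)\neq 0$). Also, the $m_p=0$ versus $m_p\neq 0$ dichotomy is not about the permutation behaviour of $X^{q^{2e}}+X$ (that is governed by $\epsilon$); it comes from the inner sum $\sum_{y_2}\chi_1(-m_p y_2^2/4y_1)$ in the evaluation of $\Theta_b(0,0)$, which collapses to $q$ when $m_p=0$ and to a Gauss sum otherwise.
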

\begin{example}
Let $(q,m,e)=(3,4,2)$ and $(u,v)=(0,0)$. Magma experiments show that ${\C}_{{D}_{3}}$ is a $[38,4,12]$ linear code over $\mathbb{F}_{3}$ with the weight enumerator $1+2z^{12}+6z^{22}+16z^{24}+36z^{26}+12z^{28}+8z^{30}$, which is consistent with our result in Theorem \ref{t5}.
\end{example}
\begin{example}
Let $(q,m,e)=(3,12,1)$ and $(u,v)=(0,0)$. Magma experiments show that ${\C}_{{D}_{3}}$ is a $[295244,12,118098]$ linear code over $\mathbb{F}_{3}$ with the weight enumerator $1+2z^{118098}+26244z^{196344}+492072z^{196830}+13122z^{197802}$, which is consistent with our result in Theorem \ref{t5}.
\end{example}
%\begin{cor}
%{\color{blue}Assume that $m/\gcd(m,e)$ is even, $u=0$, $v=0$, $m_p\neq0$ and $m>3$, then $m=2\epsilon
%+2$ if and only if $m=4$ and $\epsilon=1$ (i.e., $e=1,3$). Thus $C_{D_3}$ defined by \eqref{000} and \eqref{001} is $[q^3-1,3,(q-1)q^2]$ optimal linear code meeting the Griesmer bound, and it is a $1$-weight with weight enumerator $1+(q^2-1)z^{(q-1)q^2}$.}
%\end{cor}

\begin{theorem}\label{t11}
Let $\C_{D_3}$ be defined by \eqref{000} and \eqref{001}, $m/\gcd(m,e)$ be even, $u=0$, $v\in\fq^*$ and $\epsilon$ be given as in \eqref{e1}. Assume that $m>3$, and $m>2\epsilon+2$ if $m_p\neq0$.\\
1) If $m_p\neq0$, then ${\C}_{{D}_{3}}$ is a $6$-weight $[(q-1)(q^{m-2}-q^{\ell+\epsilon-1})+q^{m-1}+q^{\ell+\epsilon-1}-1,m,(q-1)^2(q^{m-3}-q^{\ell+\epsilon-2})+(q-1)q^{m-2}]$ linear code over $\fq$ with the weight distribution
 \begin{center}\footnotesize
\begin{tabular}{cc}
\hline
 \rm Weight  &  \rm Frequency                                     \\ \hline
$0$              & $1$               \\
$(q-1)(2q^{m-2}-q^{\ell+\epsilon-1})+q^{\ell+\epsilon-1}$              &$q-1$             \\
$(q-1)^2(q^{m-3}-q^{\ell+\epsilon-2})+(q-1)q^{m-2}$                      &$\frac{q-1}{2}(q^{\ell-\epsilon}-q^{m-2\epsilon-1})-q^{m-2\epsilon-1}+q^{m-2\epsilon}$      \\
$(q-1)(2q^{m-2}-q^{m-3}-q^{\ell+\epsilon-1})+q^{\ell+\epsilon-1}$     &$(q-1)(q^{m-2\epsilon-2}-1)$        \\
$(q-1)(2q^{m-2}-\,q^{m-3}\,-q^{\ell+\epsilon-1})+(q+1)q^{\ell+\epsilon-2}$                      &$\frac{q-1}{2}(q^{m-2\epsilon-1}-2q^{m-2\epsilon-2}-q^{\ell-\epsilon}+2q^{\ell-\epsilon-1})+q^m-q^{m-2\epsilon}$     \\
$(q-1)^2q^{m-3}+(q-1)q^{m-2}$                      &$q^{m-2\epsilon-2}-1$      \\
$(q-1)(2q^{m-2}-q^{m-3})+q^{\ell+\epsilon-2}$       &$(q-1)(q^{m-2\epsilon-2}-q^{\ell-\epsilon-1})$      \\  \hline
\end{tabular}
\end{center}
2) If $m_p=0$, then ${\C}_{{D}_{3}}$ is a $6$-weight $[(q-1)(q^{m-2}-q^{\ell+\epsilon-1})+q^{m-1}-1,m,(q-1)(2q^{m-2}-q^{m-3}-q^{\ell+\epsilon-1}]$ linear code over $\fq$ with the weight distribution
\begin{center}\footnotesize
\begin{tabular}{cc}
\hline
 \rm Weight  &  \rm Frequency                                     \\ \hline
$0$              & $1$               \\
$(q-1)^2(q^{m-3}-q^{\ell+\epsilon-2})+(q-1)q^{m-2}$              &$q^m-q^{m-2\epsilon}$             \\
$2(q-1)q^{m-2}$                      &$q-1$      \\
$(q-1)(2q^{m-2}-q^{m-3}-q^{\ell+\epsilon-2})$              &$(q-1)q^{m-2\epsilon-2}$               \\
$(q-1)(2q^{m-2}-q^{m-3})$                      &$q^{m-2\epsilon-2}-(q-1)q^{\ell-\epsilon-1}-q$      \\
$(q-1)(2q^{m-2}-q^{m-3}-q^{\ell+\epsilon-1})$                      &$(q-1)(q^{m-2\epsilon-2}+q^{\ell-\epsilon-1})$      \\
$(q-1)(2q^{m-2}-q^{m-3}-q^{\ell+\epsilon-1})+q^{\ell+\epsilon-2}$                      &$q^{m-2\epsilon}-2q^{m-2\epsilon-1}+q^{m-2\epsilon-2}$      \\ \hline
\end{tabular}
\end{center}
\end{theorem}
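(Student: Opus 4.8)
The plan is to deduce Theorem~\ref{t11} from Theorem~\ref{t10} together with a completely elementary analysis of the auxiliary code $\C_{E_2}$, by exploiting the fact that although $D_3=E_1\cup E_2$ is a union, it is in fact a \emph{disjoint} union: since $D_2=E_1\setminus E_2$ we have $D_3=D_2\sqcup E_2$. Consequently, for every $b\in\fqm$ the codeword of $\C_{D_3}$ indexed by $b$ is, up to a permutation of coordinates, the concatenation of the codeword of $\C_{D_2}$ indexed by $b$ and the codeword of $\C_{E_2}$ indexed by $b$; hence $|D_3|=|D_2|+|E_2|$ and $\mathrm{wt}_{D_3}(b)=\mathrm{wt}_{D_2}(b)+\mathrm{wt}_{E_2}(b)$. (Equivalently, one may use the inclusion--exclusion identity $\mathrm{wt}_{D_3}(b)=\mathrm{wt}_{E_1}(b)+\mathrm{wt}_{E_2}(b)-\mathrm{wt}_{D_1}(b)$, but the disjoint decomposition is cleaner because $\C_{D_2}$ for $u=0$ is precisely the content of Theorem~\ref{t10}.) Note also that $|D_3|>q^{m-1}-1$, so $D_3$ lies in no $\fq$-hyperplane of $\fqm$ and the dimension is $m$, consistent with the statement.

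First I would treat the code $\C_{E_2}$ with $E_2=\{x\in\fqm^*:\Tr(x)=v\}$, $v\in\fq^*$. Because $\Tr(0)=0\ne v$, one has $|E_2|=q^{m-1}$, so $|D_3|=|D_2|+q^{m-1}$ and the length in Theorem~\ref{t11} (both for $m_p\neq0$ and $m_p=0$) is immediate from the length of $\C_{D_2}$ recorded in Theorem~\ref{t10}. For the weight $\mathrm{wt}_{E_2}(b)=q^{m-1}-|\{x:\Tr(x)=v,\ \Tr(bx)=0\}|$ there are exactly three cases: $\mathrm{wt}_{E_2}(0)=0$; if $b\in\fq^*$ then $\Tr(bx)=b\,\Tr(x)=bv\ne0$ on all of $E_2$, so $\mathrm{wt}_{E_2}(b)=q^{m-1}$; and if $b\in\fqm\setminus\fq$ then $x\mapsto(\Tr(x),\Tr(bx))$ is an $\fq$-linear map of $\fqm$ onto $\fq^2$ (the functionals are independent since $1,b$ are independent over $\fq$), so every fibre has size $q^{m-2}$ and $\mathrm{wt}_{E_2}(b)=(q-1)q^{m-2}$. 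Thus $\mathrm{wt}_{E_2}(b)$ depends only on whether $b=0$, $b\in\fq^*$, or $b\in\fqm\setminus\fq$.

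Next I would overlay this with the weight distribution of $\C_{D_2}$ from Theorem~\ref{t10}, treating the case $m_p\ne0$ and the case $m_p=0$ separately, in correspondence with the two cases of Theorem~\ref{t11}. The proof of Theorem~\ref{t10} already sorts the index set $\fqm$ into exactly the three classes above: $b=0$ gives the zero codeword, $b\in\fq^*$ gives the weight of frequency $q-1$, and the remaining weights arise from $b\in\fqm\setminus\fq$. Adding $\mathrm{wt}_{E_2}(b)$ class by class: the zero codeword is unchanged; the frequency-$(q-1)$ weight of $\C_{D_2}$ is shifted by $q^{m-1}$; every other weight of $\C_{D_2}$ is shifted by $(q-1)q^{m-2}$, with all frequencies preserved. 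A short rewriting using $q^{m-1}=q\cdot q^{m-2}$ then matches the resulting list term by term with the tables of Theorem~\ref{t11}, and the minimum distance is read off as the smallest entry (the shift of the minimum weight of $\C_{D_2}$ by $(q-1)q^{m-2}$). I would carry out this bookkeeping once for $m_p\ne0$ and once for $m_p=0$.

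The main difficulty is organizational rather than analytic: I must be certain that the three-way partition of $b$ used above is genuinely the one underlying the proof of Theorem~\ref{t10}, i.e. that the weight of frequency $q-1$ there is attained \emph{only} by $b\in\fq^*$ (so the shift by $q^{m-1}$ applies to exactly those $q-1$ codewords) and that the remaining listed frequencies are precisely the frequencies over $\fqm\setminus\fq$. This is forced by the stated frequencies summing to $q^m$, but to be safe I would lift it directly from the Weil-sum evaluation (via Lemmas~\ref{m4}--\ref{n1}) used to prove Theorem~\ref{t10}. A residual check is that after the two shifts no two previously distinct weights collapse, and that none of the shifted $b\notin\fq$ weights coincides with the shifted $b\in\fq^*$ weight; this is a finite comparison of the closed-form expressions under the hypotheses $m>3$ and, when $m_p\ne0$, $m>2\epsilon+2$, and it also pins down the exact weight count $t$ in the degenerate subcases (e.g.\ $\ell/\gcd(m,e)$ odd), exactly as in the Remark following Theorem~\ref{t10}. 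If the reduction to Theorem~\ref{t10} were to be avoided, the same conclusion follows by expanding $\mathrm{wt}_{D_3}(b)=|D_3|-\tfrac1{q^3}\sum_{a,c,d\in\fq}\chi_1(-cv)\big(S(a,c+db)-1\big)$ (with the $a=0$ terms handled directly) and evaluating the Weil sums $S(a,c+db)$ through Lemmas~\ref{m4}--\ref{n1}, just as in the proofs of Theorems~\ref{t1} and \ref{t7}.
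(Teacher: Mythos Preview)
Your proposal is correct and follows essentially the same route as the paper. The paper sets up precisely the disjoint decomposition $D_3=D_2\cup E_2$ at the start of its proof section for Theorems~\ref{t4}, \ref{t10}, \ref{t5}, \ref{t11}, and states that Theorem~\ref{t11} ``can be similarly proved'' to Theorem~\ref{t4}; your reduction to Theorem~\ref{t10} together with the elementary analysis of $\C_{E_2}$ is exactly how that sketch is meant to be filled in, and your caveat about confirming that the frequency-$(q-1)$ weight in Theorem~\ref{t10} is attained precisely by $b\in\fq^*$ (so that the correct shift $q^{m-1}$ rather than $(q-1)q^{m-2}$ applies there) is the only point requiring care.
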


\begin{example}
Let $(q,m,e)=(3,8,1)$ and $(u,v)=(0,1)$. Magma experiments show that ${\C}_{{D}_{3}}$ is a $[3563,8,2322]$ linear code over $\mathbb{F}_{3}$ with the weight enumerator $1+270z^{2322}+160z^{2349}+5904z^{2376}+80z^{2430}+144z^{2457}+2z^{2835}$, which is consistent with our result in Theorem \ref{t11}.
\end{example}

\begin{example}
Let $(q,m,e)=(3,6,1)$ and $(u,v)=(0,1)$. Magma experiments show that ${\C}_{{D}_{3}}$ is a $[386,6,252]$ linear code over $\mathbb{F}_{3}$ with the weight enumerator $1+180z^{252}+324z^{255}+162z^{264}+60z^{270}+2z^{324}$, which is consistent with our result in Theorem \ref{t11}.
\end{example}

\begin{remark}
When $m_p=0$ and $\ell/\gcd(m,e)$ is odd, the code $\C_{D_3}$ in 2) of Theorem \ref{t11} is reduced to a $5$-weight linear code.
%When $m_p\neq0$, $m=4$ and $\ell/\gcd(m,e)$ is even, the code $\C_{D_3}$ in 1) of Theorem \ref{t11} is reduced to a $3$-weight linear code.
\end{remark}

%\begin{remark}
%Note that we also consider only the case $u=0$ for the code $\C_{D_3}$ in Theorems \ref{t5} and \ref{t11}, due to the same reasons as for the code $\C_{D_2}$ (see remark \ref{remark-1}). It also remains a problem to study the code $\C_{D_3}$ for the case $u\ne 0$, which is $t$-weight by magma experiments {\color{red}with $4\leq t \leq 8$.}
%\end{remark}

\begin{remark}
Note that we also consider only the case $u=0$ for the code $\C_{D_3}$ in Theorems \ref{t5} and \ref{t11}, due to the same reasons as for the code $\C_{D_2}$ (see remark \ref{remark-1}). It also remains a problem to study the code $\C_{D_3}$ for the case $u\ne 0$ when $m/\gcd(m,e)$ is even, which is $t$-weight with $6\leq t\leq8$ for $q=5$ and $5\leq m<10$ by Magma experiments.
\end{remark}

%\begin{pro}
%Determine the parameters and weight distribution of $\C_{D_3}$ defined by \eqref{000} and \eqref{001} for the case $u\in\fq^*$ when $m/\gcd(m,e)$ is even.
%\end{pro}

%\begin{remark}
%When $m/gcd(m,e)$ is odd, note that $gcd(q^e+1,q^m-1)=2$ leads to $\{x^{q^e+1}:x\in \fqm^*\}=\{x^2:x\in \fqm^*\}$. Thus, if $m/\alpha$ is odd, the punctured version of $\C_{D_1}$ and the code ${\C}_{D}$ in \cite{HCXL} are the same. Hence, we study the linear codes $\C_{D_1}$, $\C_{D_2}$, $\C_{D_3}$ for $m/\alpha$ is even.
%\end{remark}

In the following remark, we provide a comparison of our codes to the previous works.

\begin{remark}
In general, it is difficult to discuss the equivalence of codes. It is well-known
that equivalent codes have the same parameters and weight distribution, but the converse
is not necessarily true. Some interesting linear codes with few weights were presented in \cite{HCXL,DD,CTF,WLDL,zlfh,ms,ykt,TLQ,hs}. By comparing
our codes with those in the above references, we find that when $\ell/\gcd(m,e)$ is odd, the parameters of our codes in 1) of Theorem \ref{t1} are the same as those in \cite[Theorem 1] {DD}; the parameters of our codes in 1) and 2) of Theorem \ref{t1} are the same as those in \cite[Theorem 1] {ms}.
Notably, the other codes in this paper are different from the known ones in \cite{HCXL,DD,CTF,WLDL,zlfh,ms,ykt,TLQ,hs}.
\end{remark}

\section{Proofs of main results} \label{s4}
In this section, we give the proofs of our main results. To this end, we first provide some auxiliary results.

\subsection{Some auxiliary lemmas}
In this subsection, we give some lemmas to compute the parameters and weight distributions of the linear codes defined by \eqref{000} and \eqref{001}.

The following lemma will be used to compute the length of our codes.
\begin{lem}\label{lem1}
Let $u,v\in \fq$, $m/\gcd(m,e)$ be even and $\epsilon$ be given as in \eqref{e1}. Define
\begin{align}\label{n2}
N:=|\{x\in\fqm:{\Tr}(x^{q^e+1})=u,{\Tr}(x)=v\}|.
\end{align}
1) If $u=v=0$, then
  \begin{align}
  N&=\left\{\begin{array}{ll}
q^{m-2}, & {\rm if} \,\, m_p\neq0;\\
q^{m-2}-(q-1)q^{\ell+\epsilon-1}, & {\rm if} \,\,  m_p=0.
\end{array}\right.\nonumber
  \end{align}
2) If $u\neq0,v=0$, then
  \begin{align}
  N&=\left\{\begin{array}{ll}
q^{m-2}-q^{\ell+\epsilon-1}{\eta}_{1}(-um_p), & {\rm if}  \,\, m_p\neq0;\\
q^{m-2}+q^{\ell+\epsilon-1}, & {\rm if} \,\, m_p=0.
\end{array}\right.\nonumber
  \end{align}
3) If $v\neq0$, then
  \begin{align}
  N&=\left\{\begin{array}{ll}
q^{m-2}-q^{\ell+\epsilon-1}{\eta}_{1}(v^2-um_p), & {\rm if} \,\,  m_p\neq0,v^2-um_p\neq0;\\
q^{m-2}, & {\rm if} \,\,m_p=0\,\,{\rm or}\,\,v^2-um_p=0.
\end{array}\right.\nonumber
  \end{align}
\end{lem}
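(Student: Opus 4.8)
The plan is to compute $N$ by the standard additive-character (orthogonality) trick, reducing it to the Weil sums $S(a,b)$ whose values are supplied by Lemmas \ref{m4}--\ref{m7}, and then to count the relevant parameters $a$ via Lemmas \ref{m5}, \ref{mm1}, \ref{n1}. Concretely, for $u,v\in\fq$ write
\begin{align*}
N=\frac{1}{q^2}\sum_{x\in\fqm}\sum_{y\in\fq}\mathbb\chi_1\bigl(y({\Tr}(x^{q^e+1})-u)\bigr)\sum_{z\in\fq}\mathbb\chi_1\bigl(z({\Tr}(x)-v)\bigr).
\end{align*}
Using $\mathbb\chi_1({\Tr}^q_p(\cdot))$-compatibility, $\mathbb\chi_1(y\,{\Tr}(w))=\chi(yw)$ for $y\in\fq$, $w\in\fqm$, so the inner double sum over $x$ becomes $\sum_{x\in\fqm}\chi(yx^{q^e+1}+zx)=S(y,z)$ when $y\neq 0$ and equals $q^m$ times an indicator when $y=0$. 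Splitting off the $y=0$ term gives
\begin{align*}
N=q^{m-2}+\frac{1}{q^2}\sum_{y\in\fq^*}\mathbb\chi_1(-uy)\sum_{z\in\fq}\mathbb\chi_1(-vz)\,S(y,z).
\end{align*}
So everything reduces to understanding $S(y,z)$ for $y\in\fq^*$.

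The next step is to invoke the structure of $S(y,z)$ for $m/\alpha$ even. By Lemmas \ref{m6} and \ref{m7}, the value of $S(y,z)$ depends on whether $f_y(X)=y^{q^e}X^{q^{2e}}+yX$ is a permutation polynomial of $\fqm$; by Lemma \ref{m5} this is governed by whether $y^{(q^m-1)/(q^\alpha+1)}=(-1)^{\ell/\alpha}$. Crucially, since $y\in\fq^*$, Lemma \ref{n1} gives $y^{(q^m-1)/(q^\alpha+1)}=1$; hence $f_y$ is a permutation polynomial exactly when $(-1)^{\ell/\alpha}\neq 1$, i.e. when $\ell/\alpha$ is odd (the case $\epsilon=0$), and is non-permutational when $\ell/\alpha$ is even (the case $\epsilon=\gcd(m,e)$). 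In either case there is a unique/solvable $x_0$ with $f_y(x_0)=-z^{q^e}$ (when $f_y$ is non-permutational one must also check solvability, which holds here because $z^{q^e}\in\fq^{q^e}$ and the image of $f_y$ on the relevant subfield is full — this is where Lemma \ref{mm1} enters for the $v\neq0$ bookkeeping), and Lemmas \ref{m6}, \ref{m7} together give the uniform formula $S(y,z)=(-1)^{\ell/\alpha}q^{\ell+\epsilon}\chi(-y x_0^{q^e+1})\cdot(-1)^{[\epsilon\neq 0]}$; one checks the sign works out to $q^{\ell+\epsilon}\chi(-yx_0^{q^e+1})$ when $\epsilon=0$ comes with $(-1)^{\ell/\alpha}=-1$ and $\epsilon\neq0$ with $(-1)^{\ell/\alpha}=+1$ and the extra minus, so in both cases $S(y,z)=q^{\ell+\epsilon}\,\chi(-y x_0^{q^e+1})$ up to a sign I will pin down carefully. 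Plugging in and evaluating $x_0$ explicitly: for $z=0$ one has $x_0=0$ and $S(y,0)=\pm q^{\ell+\epsilon}$ per Lemma \ref{m4}; for $z\in\fq$, since $y\in\fq^*$ the solution $x_0$ lies in a subfield where $x^{q^e+1}=x^2$ behaves like a square, and $-y x_0^{q^e+1}$ simplifies to something of the form $c\,z^2 y^{-1}$ with $c$ a fixed constant depending only on $m$ (this is the computation that produces the $m_p$ dependence, since $X^{q^{2e}}+X$ acts as multiplication by $m_p$-type factor on the fixed field).

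After substituting these values, $N$ becomes $q^{m-2}$ plus $q^{\ell+\epsilon-2}$ times a two-variable character sum over $y\in\fq^*$, $z\in\fq$ of the shape $\mathbb\chi_1(-uy-vz)\mathbb\chi_1(c z^2 y^{-1})$ (with appropriate global sign). Completing the square in $z$ and applying Lemma \ref{m2} over $\fq$ (the quadratic Weil/Gauss sum evaluation) collapses the $z$-sum to a Gauss sum $G(\eta_1,\chi_1)$ times a character value, and the remaining $y$-sum is either a full additive-character sum (yielding $-1$ or $q-1$) or another quadratic Gauss sum; combining with $G(\eta_1,\chi_1)^2=\eta_1(-1)q$ from Lemma \ref{m1} turns the product of two Gauss sums into the clean power-of-$q$ terms $\pm(q-1)q^{\ell+\epsilon-1}$ or $\pm q^{\ell+\epsilon-1}\eta_1(\cdot)$ appearing in the statement. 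The three cases of the lemma correspond exactly to $(u=v=0)$, $(u\neq 0, v=0)$, $(v\neq 0)$, and within case 3 the split $v^2-um_p=0$ vs $\neq 0$ arises from whether the argument of the final quadratic character/Gauss sum vanishes.

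The main obstacle I anticipate is bookkeeping the signs and the $m_p$ dependence: one must correctly track $(-1)^{\ell/\alpha}$ through Lemmas \ref{m6}--\ref{m7}, correctly identify the constant $c$ (equivalently, show that $X^{q^{2e}}+X$ restricted to the fixed field $\fq$ acts as multiplication by $m_p$, so that $m_p=0$ is precisely the degenerate regime where $f_y$ fails to be solvable for some $z$ and the count jumps), and then correctly reduce the resulting product of two $\fq$-Gauss sums using Lemma \ref{m1}. The solvability count in the $m_p=0$ subcase — how many $z\in\fq$ admit a solution $x_0$, which is where Lemma \ref{mm1} contributes a factor $q^{m-2\alpha}$-type correction localized to the subfield — is the delicate combinatorial point; everything else is routine character-sum manipulation.
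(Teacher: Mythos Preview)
Your overall strategy matches the paper's: orthogonality gives $N=q^{m-2}+q^{-2}\sum_{y\in\fq^*}\sum_{z\in\fq}\chi_1(-uy-vz)\,S(y,z)$, then evaluate $S(y,z)$ via Lemmas~\ref{m6}--\ref{m7} together with Lemma~\ref{n1}, and finish the $z$- and $y$-sums with Lemma~\ref{m2} and $G_1(\eta_1)^2=\eta_1(-1)q$.

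Where you go wrong is the solvability discussion and the role of $m_p$. For $y\in\fq^*$ and $z\in\fq$, the element $x_0=-z/(2y)\in\fq$ is \emph{always} a solution of $y^{q^e}X^{q^{2e}}+yX=-z^{q^e}$: since $y,z,x_0\in\fq$ are fixed by every $q$-power, the equation collapses to $2yx_0=-z$. So there is no solvability obstruction in the non-permutation case, Lemma~\ref{mm1} is not used here at all, and your claim that ``$m_p=0$ is precisely the degenerate regime where $f_y$ fails to be solvable for some $z$'' is incorrect. If you pursued that line you would set $S(y,z)=0$ for the putatively non-solvable $z$ and get the wrong count. With the explicit $x_0$, the sign also settles uniformly: in both parities of $\ell/\alpha$ one gets $S(y,z)=-q^{\ell+\epsilon}\chi(-yx_0^{q^e+1})=-q^{\ell+\epsilon}\chi\bigl(-z^2/(4y)\bigr)$, since $x_0^{q^e+1}=x_0^2$.

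The $m_p$ dependence enters not through solvability but through the trace: for $c\in\fq$ one has $\chi(c)=\chi_1(m_p c)$, so $\chi(-z^2/(4y))=\chi_1(-m_p z^2/(4y))$. When $m_p=0$ this character is identically $1$, the $z$-sum is just $\sum_{z}\chi_1(-vz)$, and the case split in the statement follows immediately; when $m_p\neq0$ the $z$-sum is a genuine quadratic Gauss sum handled by Lemma~\ref{m2}. Once you replace your solvability/Lemma~\ref{mm1} detour by the explicit $x_0=-z/(2y)$, the rest of your sketch is exactly the paper's argument.
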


\begin{proof}
By the orthogonal relation of additive character, we have
\begin{align}\label{002}
N=&\frac{1}{q^2}\sum_{x\in\mathbb{F}_{q^m} }\sum_{y\in\mathbb{F}_{q}}\chi_1(y(\Tr(x^{q^{e}+1})-u))\sum_{z\in\mathbb{F}_{q}}
\chi_1(z(\Tr(x)-v))
\nonumber\\
=&\frac{1}{q^2}\sum_{x\in\mathbb{F}_{q^m} }\sum_{y\in\mathbb{F}_{q}}\chi(yx^{q^{e}+1})\sum_{z\in\mathbb{F}_{q}}\chi(zx)\mathbb{\chi}_{1}
(-uy-vz)\nonumber\\
=&\frac{1}{q^2}\sum_{x\in\mathbb{F}_{q^m} }\sum_{z\in\mathbb{F}_{q}}\chi(zx)\mathbb{\chi}_{1}(-vz)+\frac{1}{q^2}\sum_{x\in\mathbb{F}_{q^m} }\sum_{y\in\mathbb{F}_{q}^{*}}\sum_{z\in\mathbb{F}_{q}}\chi(yx^{q^{e}+1}+zx)\mathbb{\chi}_{1}(-uy-vz)\nonumber\\
=&q^{m-2}+\frac{1}{q^2}\Theta,
\end{align}
where
$$\Theta:=\sum_{y\in\mathbb{F}_{q}^{*}}\sum_{z\in\mathbb{F}_{q}}\mathbb{\chi}_{1}(-uy-vz)\sum_{x\in\fqm }\chi(yx^{q^{e}+1}+zx).$$
Observe that $S(y,z)=\sum_{x\in\fqm }\chi(yx^{q^{e}+1}+zx)$.
Note that for $y\in\fq^*$ and $z\in\fq$, $\frac{-z}{2y}$ is the solution of $y^{q^e}X^{q^{2e}}+yX=-z^{q^e}$. By Lemmas \ref{m6}, \ref{m7} and \ref{n1}, it leads to $$S(y,z)=-q^{\ell+\epsilon}\chi(-y(\frac{-z}{2y})^{q^e+1})=-q^{\ell+\epsilon}\chi(-\frac{z^2}{4y})
=-q^{\ell+\epsilon}\chi_{1}(-\frac{mz^2}{4y}).$$
This together with Lemma \ref{m2} gives
\begin{align}
\Theta&=\sum_{y\in\mathbb{F}_{q}^{*}}\sum_{z\in\mathbb{F}_{q}}-q^{\ell+\epsilon}\mathbb{\chi}_{1}(\frac{-m_pz^2}{4y}-uy-vz)\nonumber\\
&=\left\{\begin{array}{ll}
-q^{\ell+\epsilon}\sum\limits_{y\in\mathbb{F}_{q}^{*}}{G}_{1}({\eta}_{1})
\mathbb{\chi}_{1}(-uy+\frac{v^2y}{m_p}){\eta}_{1}(\frac{-m_p}{4y}), & {\rm if}  \,\, m_p\neq0;\\
-q^{\ell+\epsilon}\sum\limits_{y\in\mathbb{F}_{q}^{*}}\sum\limits_{z\in\mathbb{F}_{q}}\mathbb{\chi}_{1}(-uy-vz), & {\rm if}\,\, m_p=0.
\end{array}\right.\nonumber
\end{align}

Next we evaluate $\Theta$ by consider the following three cases.\\
1) If $u=v=0$, then
\begin{align*}
  \Theta&=\left\{\begin{array}{ll}
-q^{\ell+\epsilon}\sum\limits_{y\in\mathbb{F}_{q}^{*}}{G}_{1}({\eta}_{1}){\eta}_{1}(\frac{-m_p}{4y}), & {\rm if} \,\, m_p\neq0;\\
\sum\limits_{y\in\mathbb{F}_{q}^{*}}\sum\limits_{z\in\mathbb{F}_{q}}-q^{\ell+\epsilon}, & {\rm if}\,\, m_p=0;
\end{array}\right.\\
&=\left\{\begin{array}{ll}
0, & {\rm if} \,\, m_p\neq0;\\
-(q-1)q^{\ell+\epsilon+1}, & {\rm if}\,\, m_p=0.
\end{array}\right.\nonumber
  \end{align*}
2) If $u\neq0,v=0$, due to the facts that ${G}^2_{1}({\eta}_{1})={\eta}_{1}(-1)q$ and ${\eta}_{1}(\frac{-m_p}{4y})=\eta_1((-u+\frac{v^2}{m_p})y)\eta_1(-u+\frac{v^2}{m_p})\eta_1(-\frac{m_p}{4})$, we have
\begin{align}
  \Theta&=\left\{\begin{array}{ll}
-q^{\ell+\epsilon}{G}^2_{1}({\eta}_{1}){\eta}_{1}(um_p), & {\rm if} \,\, m_p\neq0;\\
-q^{\ell+\epsilon}\sum\limits_{y\in\mathbb{F}_{q}^{*}}\sum\limits_{z\in\mathbb{F}_{q}}\mathbb{\chi}_{1}(-uy), & {\rm if}\,\, m_p=0;
\end{array}\right.\nonumber\\
&=\left\{\begin{array}{ll}
-q^{\ell+\epsilon+1}{\eta}_{1}(-um_p), & {\rm if} \,\, m_p\neq0;\\
q^{\ell+\epsilon+1}, & {\rm if}\,\, m_p=0.
\end{array}\right.\nonumber
\end{align}
3) If $v\neq0$, similar to the proof of 2), it gives
\begin{align}
\Theta&=\left\{\begin{array}{ll}
-q^{\ell+\epsilon}{G}^2_{1}({\eta}_{1}){\eta}_{1}(um_p-v^2), & {\rm if}  \,\, m_p\neq0,v^2-um_p\neq0;\\
-q^{\ell+\epsilon}\sum\limits_{y\in\mathbb{F}_{q}^{*}}{G}_{1}({\eta}_{1}){\eta}_{1}(\frac{-m_p}{4y}),
& {\rm if}  \,\, m_p\neq0,v^2-um_p=0;\\
0, & {\rm if}\,\, m_p=0;
\end{array}\right.\nonumber\\
&=\left\{\begin{array}{ll}
-q^{\ell+\epsilon+1}{\eta}_{1}(v^2-um_p), & {\rm if} \,\, m_p\neq0,v^2-um_p\neq0;\\
0, & {\rm if} \,\,  m_p=0\,\, {\rm or}\,\,v^2-um_p=0.
\end{array}\right.\nonumber
  \end{align}

By (\ref{002}), this completes the proof.
\end{proof}

%According to Lemma \ref{lem1}, the length of the code $\C_{D_1}$ is given by
%\begin{align}\label{n}
 % n&=\left\{\begin{array}{ll}
%N-1, & {\rm if} \,\,v=0;\\
%N, & {\rm if} \,\,v\neq0;
%\end{array}\right.
 % \end{align}
%Denote the Hamming weight of the codeword $c_b$ in $\C_{D_1}$ defined as in \eqref{000} and \eqref{001} by $wt(c_b)$. Clearly, $wt(c_b)=0$ if $b=0$ for $v\neq0$, $wt(c_b)=0$ if $b\in \fqm\setminus\fq$ for $v=0$. For $b\in \fqm^*$, we have $$wt(c_b)=N-N_0,$$ where $N$ is defined as in (\ref{n2}) and $N_0$ is given by
%Denote $$N_0:=\#\{x\in \fqm: \Tr(x^{q^e+1})=u,\Tr(x)=v,\Tr(bx)=0 \}.$$
%Similar to the computation of $N$, it gives $$N_0=\frac{1}{p^3}(\Theta_1(u,v)+\Theta_2(v)),$$ where
To compute the weight distributions of our codes, we need to compute the following exponential sum  \begin{align}\label{nn}
\Theta_b(u,v):=\sum_{x\in\fqm}\sum_{{y}_{1}\in\mathbb{F}_{q}^{*}}\sum_{{y}_{2},{y}_{3}\in
\mathbb{F}_{q}}\chi({y}_{1}x^{q^e+1}+({y}_{2}+{y}_{3}b)x)\mathbb{\chi}_{1}(-u{y}_{1}-v{y}_{2}),
\end{align}
where $b\in \fqm^*$.
%and $${\Theta}_{2}(v)=\sum_{x\in\mathbb{F}_{q^m}}\sum_{{y}_{2}\in\mathbb{F}_{q}}\sum_{{y}_{3}\in\mathbb{F}_{q}}\chi(({y}_{2}+{y}_{3}b)x)\mathbb{\chi}_{1}(-v{y}_{2}).$$

Next we compute the values of exponential sums $\Theta_b(u,v)$.
%We first compute the value of $\Theta_2$ in the following lemma.
%\begin{lem}\label{n8}
%Let $b\in\mathbb{F}^*_{q^m}$, then
%\begin{align}
%{\Theta}_{2}&=\sum_{x\in\mathbb{F}_{q^m}}\sum_{{y}_{2}\in\mathbb{F}_{q}}\sum_{{y}_{3}\in\mathbb{F}_{q}}\chi(({y}_{2}+{y}_{3}b)x)\mathbb{\chi}_{1}(-v{y}_{2})\nonumber\\
%&=\left\{\begin{array}{ll}
%q^m, & {\rm if} \,\, b\not\in \fq;\\
%q^{m+1}, & {\rm if} \,\, b\in \mathbb{F}^*_{q},v=0��\\
%0, & {\rm if} \,\, b\in \mathbb{F}^*_{q},v\neq0;
%\end{array}\right.\nonumber
%\end{align}
%\end{lem}
%\begin{proof}
%By the orthogonal relation of additive characters, we directly obtain the conclusion.
%\end{proof}
\begin{lem}\label{n3}
Let $b\in\mathbb{F}_{q^m}^*$, $m/\gcd(m,e)$ be even and $\epsilon$ be given as in \eqref{e1}. Recall the notation $T$ and $\gamma$ as in the beginning of Section \ref{s2}. Then the value of ${\Theta}_{b}(0,0)$ in \eqref{nn} is given in the following\\
1) If $m_p\neq0$, then $${\Theta}_{b}(0,0)=\left\{\begin{array}{ll}
0, & {\rm if} \,\, b\not\in T \,\,{\rm or}\,\,b\in T,A=0;\\
-(q-1)q^{\ell+\epsilon+1}, & {\rm if}\,\,b\in T,{\Tr}(\gamma^{q^u+1})=0,{\Tr}({\gamma})\neq0;\\
-(q-1)q^{\ell+\epsilon+1}{\eta}_{1}(-m_p{\Tr}(\gamma^{q^e+1})), & {\rm if}\,\,b\in T,{\Tr}(\gamma^{q^e+1})\neq0,{\Tr}({\gamma})=0;\\
-(q-1)q^{\ell+\epsilon+1}{\eta}_{1}({\Tr}(\gamma)^2-m_p{\Tr}(\gamma^{q^e+1})),&{\rm if}\,\,b\in T,{\Tr}(\gamma^{q^e+1})\neq0,{\Tr}({\gamma})\neq0, \\&\,\,\,\, A\neq0.
\end{array}\right.$$
2)  If $m_p=0$, then $${\Theta}_{b}(0,0)=\left\{\begin{array}{ll}
-(q-1)q^{\ell+\epsilon+2}, & {\rm if} \,\, b\in T,{\Tr}(\gamma^{q^e+1})=0,{\Tr}(\gamma)=0;\\
-(q-1)q^{\ell+\epsilon+1}, & {\rm if} \,\,b\in T,{\Tr}({\gamma})\neq0\,\,{\rm or}\,\,b\not\in T;\\
0, & {\rm if} \,\, b\in T,{\Tr}(\gamma^{q^e+1})\neq0,{\Tr}({\gamma})=0;
\end{array}\right.$$
where $A={\Tr}(\gamma)^2-m_p{\Tr}(\gamma^{q^e+1})$.
\end{lem}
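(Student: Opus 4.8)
\textbf{Proof proposal for Lemma \ref{n3}.}

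The plan is to start from the definition \eqref{nn} and interchange the order of summation so that the innermost sum becomes a Weil sum $S(y_1, y_2 + y_3 b)$ of the type handled by Lemmas \ref{m6} and \ref{m7}. Writing $c = y_2 + y_3 b$ and using that $y_1 \in \fq^*$ and $\frac{m}{\alpha}$ is even, the key structural fact is that whether the associated linearized polynomial $f(X) = y_1^{q^e} X^{q^{2e}} + y_1 X$ is a permutation depends only on $\epsilon$: by Lemma \ref{n1}, $y_1^{(q^m-1)/(q^\alpha+1)} = 1$ for $y_1 \in \fq^*$, so $f$ is a permutation exactly when $\epsilon = 0$ (i.e. $\ell/\alpha$ odd) and fails to be a permutation exactly when $\epsilon = \alpha$ (i.e. $\ell/\alpha$ even). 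In the first case Lemma \ref{m6} gives $S(y_1, c) = (-1)^{\ell/\alpha} q^\ell \chi(-y_1 x_0^{q^e+1})$ with $x_0$ the unique solution of $f(X) = -c^{q^e}$; in the second case Lemma \ref{m7} gives $S(y_1, c) = -(-1)^{\ell/\alpha} q^{\ell+\alpha} \chi(-y_1 x_0^{q^e+1})$ whenever $f(X) = -c^{q^e}$ is solvable, and $0$ otherwise. In either case the solvability condition on $c = y_2 + y_3 b$ is governed by the set $T$: the equation $X^{q^{2e}} + X = -(y_2+y_3 b)^{q^e}$ is solvable iff $y_2 + y_3 b \in T$, and for $b \in T$ with solution $\gamma$ of $X^{q^{2e}}+X = -b^{q^e}$, one can exhibit an explicit solution in terms of $\gamma$ (here I would use linearity: $-\tfrac{y_2}{2}$ solves the $y_2$-part and $y_3 \gamma$ the $y_3 b$-part, after absorbing $y_1$, exactly as in the proof of Lemma \ref{lem1}). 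This reduces $-y_1 x_0^{q^e+1}$ to a quadratic expression in $y_2, y_3$ whose coefficients involve $\Tr(\gamma^{q^e+1})$, $\Tr(\gamma)$, and $m_p$.

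Next I would substitute this evaluation back and carry out the outer sums over $x$, $y_1$, $y_2$, $y_3$. The sum over $x$ of $\chi((y_2 + y_3 b)x)$ is not present after the reduction — rather, the point is that $\Theta_b(0,0) = \sum_{y_1 \in \fq^*} \sum_{y_2,y_3} S(y_1, y_2+y_3 b)$, and after plugging in the closed form for $S$ we are left with $\sum_{y_1 \in \fq^*}\sum_{y_2,y_3 \in \fq} (\pm) q^{\ell+\epsilon} \chi_1\big(\text{quadratic form in } y_2,y_3 \text{ over } \fq\big)$, using that $\chi$ restricted to $\fq$ is $\chi_1$ after multiplying the trace argument by $m_p$. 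Splitting off the case $b \notin T$ (where Lemma \ref{m7} forces all contributions to vanish when $\epsilon = \alpha$; when $\epsilon = 0$ one checks the remaining character sum collapses), I then handle $b \in T$ by diagonalising or directly summing the $\fq$-quadratic character sum in the two variables $y_2, y_3$. This is where Lemma \ref{m2} (Weil sum for a quadratic polynomial over $\fq$) and Lemma \ref{m1} (value of $G(\eta_1,\chi_1)$, with $G_1(\eta_1)^2 = \eta_1(-1) q$) enter, exactly mirroring the three-case computation of $\Theta$ in the proof of Lemma \ref{lem1}. The discriminant of that quadratic form is precisely $A = \Tr(\gamma)^2 - m_p \Tr(\gamma^{q^e+1})$, which is why the answer bifurcates on $A = 0$ versus $A \neq 0$, and on the vanishing of $\Tr(\gamma)$ and $\Tr(\gamma^{q^e+1})$ individually when $m_p = 0$.

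Finally I would organise the bookkeeping by the sign $(-1)^{\ell/\alpha}$ and the extra factor $q^\alpha$, unifying the two cases $\epsilon = 0$ and $\epsilon = \alpha$ into the single exponent $\ell + \epsilon$ appearing in the statement, and then read off the four subcases in part 1) ($m_p \neq 0$) according to whether $\Tr(\gamma^{q^e+1})$ and $\Tr(\gamma)$ vanish, and the three subcases in part 2) ($m_p = 0$). The main obstacle I anticipate is purely bookkeeping rather than conceptual: correctly tracking the explicit solution $x_0$ of $f(X) = -(y_2+y_3b)^{q^e}$ in terms of $\gamma$, $y_1$, $y_2$, $y_3$ — in particular getting the quadratic form $-y_1 x_0^{q^e+1}$ in the right normalisation so that its discriminant comes out as $A$ and the cross terms and the linear-in-$y_1$ dependence are handled correctly — and then doing the nested $\fq$-character sums over $y_1, y_2, y_3$ without sign or power-of-$q$ errors. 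The case $b \notin T$ with $\epsilon = 0$ also needs a small separate argument since Lemma \ref{m7} does not apply there; one checks directly that $\sum_{y_2, y_3} \chi(-y_1 x_0^{q^e+1})$ vanishes because $x_0$ ranges so that the character sum is orthogonal, giving $\Theta_b(0,0) = 0$ as claimed.
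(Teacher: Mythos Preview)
Your overall strategy matches the paper's: rewrite $\Theta_b(0,0)=\sum_{y_1\in\fq^*}\sum_{y_2,y_3\in\fq}S(y_1,y_2+y_3b)$, evaluate each $S$ via Lemmas \ref{m6}--\ref{m7} with the explicit solution $x_0=y_1^{-1}(y_3\gamma-\tfrac12 y_2)$, expand $\Tr\big((y_3\gamma-\tfrac12 y_2)^{q^e+1}\big)=y_3^2\Tr(\gamma^{q^e+1})-y_2y_3\Tr(\gamma)+\tfrac{m_p}{4}y_2^2$ as an $\fq$-quadratic form in $y_2,y_3$, and finish with Lemma \ref{m2} case by case. The unification of the two $\epsilon$-cases into the common factor $-q^{\ell+\epsilon}$ also works as you describe.

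There is, however, a genuine slip in your handling of $b\notin T$. First, when $\epsilon=0$ the map $X\mapsto X^{q^{2e}}+X$ is a permutation of $\fqm$ (Lemma \ref{m5}), so $T=\fqm$ and the ``$b\notin T$, $\epsilon=0$'' situation you discuss at the end is vacuous. Second, when $\epsilon=\alpha$ and $b\notin T$, Lemma \ref{m7} does \emph{not} force all contributions to vanish: it only kills the terms with $y_3\neq 0$ (one first needs to check, by a short contradiction argument as in the paper, that $y_1^{q^e}X^{q^{2e}}+y_1X=-(y_2+y_3b)^{q^e}$ is insolvable for $y_3\in\fq^*$). The $y_3=0$ terms $S(y_1,y_2)$ are always solvable via $x_0=-y_2/(2y_1)$, so what survives is
\[
\sum_{y_1\in\fq^*}\sum_{y_2\in\fq}(-q^{\ell+\epsilon})\,\chi_1\!\Big(\!-\frac{m_p y_2^2}{4y_1}\Big),
\]
which equals $0$ when $m_p\neq 0$ but equals $-(q-1)q^{\ell+\epsilon+1}$ when $m_p=0$. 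This is exactly why, in part 2) of the statement, $b\notin T$ lands in the nonzero branch $-(q-1)q^{\ell+\epsilon+1}$ rather than giving $0$. Once you repair the $b\notin T$ branch this way, the rest of your plan goes through just as in the paper.
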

\begin{proof}
At first, we prove that the equation $y^{q^e}_1X^{q^{2e}}+y_1X=-(y_2+y_3b)^{q^e}$ is insolvable if $b\notin T$, where $y_2\in \fq$ and $y_1,y_3\in \fq^*$. Suppose that  $y^{q^e}_1X^{q^{2e}}+y_1X=-(y_2+y_3b)^{q^e}$ has a solution $\varepsilon$ when $b\notin T$. Then $\varepsilon+\frac{y_2}{2y_1}$ is the solution of $y^{q^e}_1X^{q^{2e}}+y_1X=-(y_3b)^{q^e}$ since $\frac{y_2}{2y_1}$ is a solution of $y^{q^e}_1X^{q^{2e}}+y_1X=(y_2)^{q^e}$. Accordingly, it gives that $\varepsilon y_1y_3^{-1}+\frac{y_2}{2y_3}$ is the solution of $X^{q^{2e}}+X=-(b)^{q^e}$, which implies $b\in T$. This is a contradiction.
Therefore, the equation $y^{q^e}_1X^{q^{2e}}+y_1X=-(y_2+y_3b)^{q^e}$ is insolvable if $b\notin T$.
It then follows from Lemma \ref{m7} that for $b\notin T$, we have
$$\begin{aligned}
\Theta_b(0,0)&=\sum_{x\in\fqm}\sum_{{y}_{1},{y}_{3}\in\mathbb{F}_{q}^{*}}\sum_{{y}_{2}\in\mathbb{F}_{q}}\chi({y}_{1}x^{q^e+1}+({y}_{2}+{y}_{3}b)x)+\sum_{x\in\mathbb{F}_{q^m}}\sum_{{y}_{1}\in\mathbb{F}_{q}^{*}}\sum_{{y}_{2}\in\mathbb{F}_{q}}\chi({y}_{1}x^{q^e+1}+{y}_{2}x)\\
&=\sum_{{y}_{1}\in\mathbb{F}_{q}^{*}}\sum_{{y}_{2}\in\mathbb{F}_{q}}\sum_{x\in\mathbb{F}_{q^m}}
\chi({y}_{1}x^{q^e+1}+{y}_{2}x).\nonumber
\end{aligned}$$

When $b\in T$, for $y_1\in\fq^*$ and $y_2,y_3\in\fq$, $y^{-1}_1y_3\gamma$ is the solution of $y^{q^e}_1X^{q^{2e}}+y_1X=-(y_3b)^{q^e}$ and $-\frac{1}{2}y^{-1}_1y_2$ is the solution of $y^{q^e}_1X^{q^{2e}}+y_1X=-(y_2)^{q^e}$. Then $y^{-1}_1(y_3\gamma-\frac{1}{2}y_2)$ is the solution of $y^{q^e}_1X^{q^{2e}}+y_1X=-(y_2+y_3b)^{q^e}$.
By Lemmas \ref{m6}, \ref{m7} and \ref{n1}, for $b\in T$, we have $${\Theta}_{b}(0,0)=
\sum_{{y}_{1}\in\mathbb{F}_{q}^{*}}\sum_{{y}_{2},{y}_{3}\in\mathbb{F}_{q}}
-q^{\ell+\epsilon}\chi(-{y}_{1}({y}_{1}^{-1}({y}_{3}\gamma-\frac{1}{2}{y}_{2}))^{q^e+1}).$$

Next we further compute ${\Theta}_{b}(0,0)$ by considering the following two cases.

1) $m_p\neq0$. For $b\not\in T$, it follows from Lemmas \ref{m6}, \ref{m7}, \ref{n1} and \ref{m2} that
$$\begin{aligned}\label{0c1}
{\Theta}_{b}(0,0)=&\sum_{x\in\mathbb{F}_{q^m}}\sum_{{y}_{1}\in\mathbb{F}_{q}^{*}}\sum_{{y}_{2}\in\mathbb{F}_{q}}\chi({y}_{1}x^{q^e+1}+{y}_{2}x)=-q^{\ell+\epsilon}\sum_{{y}_{1}\in\mathbb{F}_{q}^{*}}\sum_{{y}_{2}\in\mathbb{F}_{q}}\mathbb{\chi}_{1}(\frac{-m_p{y}^2_{2}}{4{y}_{1}})\nonumber\\
=&-q^{\ell+\epsilon}\sum_{{y}_{1}\in\mathbb{F}_{q}^{*}}{G}_{1}({\eta}_{1}){\eta}_{1}(\frac{-m_p}{4{y}_{1}})=0.\nonumber\\
\end{aligned}$$
 For $b\in T$, it  gives
\begin{align*}
{\Theta}_{b}(0,0)=&\sum_{{y}_{1}\in\mathbb{F}_{q}^{*}}\sum_{{y}_{2},{y}_{3}\in\mathbb{F}_{q}}-q^{\ell+\epsilon}\chi(-{y}_{1}({y}_{1}^{-1}({y}_{3}\gamma-\frac{1}{2}{y}_{2}))^{q^e+1})\\
=&\sum_{{y}_{1}\in\mathbb{F}_{q}^{*}}\sum_{{y}_{2},{y}_{3}\in\mathbb{F}_{q}}-q^{\ell+\epsilon}\mathbb{\chi}_{1}(\frac{-1}{{y}_{1}}\Tr(({y}_{3}\gamma-\frac{1}{2}{y}_{2})^{q^{e}+1})).
\end{align*}
It's known that $\Tr(({y}_{3}\gamma-\frac{1}{2}{y}_{2})^{q^{e}+1})=\Tr(y^2_3\gamma^{q^{e}+1}+\frac{y^2_2}{4}-\frac{1}{2}y_2y_3
\gamma^{q^e}-\frac{1}{2}y_2y_3\gamma)={y}^2_{3}{\Tr}(\gamma^{q^e+1})-{y}_{2}{y}_{3}{\Tr}(\gamma)+
\frac{m_p{y}^2_{2}}{4}$. Then the value of ${\Theta}_{b}(0,0)$ is equal to
 \begin{eqnarray}
&&\sum\limits_{{y}_{1}\in\mathbb{F}_{q}^{*}}\sum\limits_{{y}_{2},{y}_{3}\in\mathbb{F}_{q}}-q^{\ell+\epsilon}\mathbb{\chi}_{1}(\frac{-1}{{y}_{1}}({y}^2_{3}{\Tr}(\gamma^{q^e+1})-{y}_{2}{y}_{3}{\Tr}(\gamma)+\frac{m_p{y}^2_{2}}{4}))\nonumber\\\nonumber
&=&\left\{\begin{array}{ll}
-q^{\ell+\epsilon}\sum\limits_{{y}_{1}\in\mathbb{F}_{q}^{*}}\sum\limits_{{y}_{2},{y}_{3}\in\mathbb{F}_{q}}\mathbb{\chi}_{1}(\frac{-m_p{y}^2_{2}}{4{y}_{1}}), & {\rm if} \,\, {\Tr}(\gamma^{q^e+1})=0,{\Tr}(\gamma)=0;\\
-q^{\ell+\epsilon}\sum\limits_{{y}_{1}\in\mathbb{F}_{q}^{*}}\sum\limits_{{y}_{2},{y}_{3}\in\mathbb{F}_{q}}\mathbb{\chi}_{1}(\frac{{y}_{2}{y}_{3}{\Tr}(\gamma)}{{y}_{1}}-\frac{m_p{y}^2_{2}}{4{y}_{1}}), & {\rm if} \,\,{\Tr}(\gamma^{q^e+1})=0,{\Tr}({\gamma})\neq0;\\
-q^{\ell+\epsilon}\sum\limits_{{y}_{1}\in\mathbb{F}_{q}^{*}}\sum\limits_{{y}_{2},{y}_{3}\in\mathbb{F}_{q}}\mathbb{\chi}_{1}(\frac{-{y}^2_{3}{\Tr}(\gamma^{q^e+1})}{{y}_{1}}-\frac{m_p{y}^2_{2}}{4{y}_{1}}), & {\rm if} \,\,{\Tr}(\gamma^{q^e+1})\neq0,{\Tr}({\gamma})=0;\\
-q^{\ell+\epsilon}\sum\limits_{{y}_{1}\in\mathbb{F}_{q}^{*}}\sum\limits_{{y}_{2},{y}_{3}\in\mathbb{F}_{q}}\mathbb{\chi}_{1}(\frac{-{y}^2_{3}{\Tr}(\gamma^{q^e+1})}{{y}_{1}}+\frac{{y}_{2}{y}_{3}{\Tr}(\gamma)}{{y}_{1}}-\frac{m_p{y}^2_{2}}{4{y}_{1}}), &{\rm if}\,\,{\Tr}(\gamma^{q^e+1})\neq0,{\Tr}({\gamma})\neq0.
\end{array}\right.\nonumber
\end{eqnarray}
Let $A={\Tr}(\gamma)^2-m_p{\Tr}(\gamma^{q^e+1})$. Then $A=0$ implies ${\Tr}(\gamma^{q^e+1})=0,{\Tr}({\gamma})=0$ or ${\Tr}(\gamma^{q^e+1})\neq0,{\Tr}({\gamma})\neq0$. By Lemma \ref{m2}, the value of ${\Theta}_{b}(0,0)$ is equal to
\begin{eqnarray}\nonumber
& &\left\{\begin{array}{ll}
-q^{\ell+\epsilon}\sum\limits_{{y}_{1}\in\mathbb{F}_{q}^{*}}\sum\limits_{{y}_{3}\in\mathbb{F}_{q}}{G}_{1}({\eta}_{1}){\eta}_{1}(\frac{-m_p}{4{y}_{1}}), & {\rm if} \,\, {\Tr}(\gamma^{q^e+1})=0,{\Tr}(\gamma)=0; \\
-q^{\ell+\epsilon}\sum\limits_{{y}_{1}\in\mathbb{F}_{q}^{*}}\sum\limits_{{y}_{3}\in\mathbb{F}_{q}}{G}_{1}({\eta}_{1}){\eta}_{1}(\frac{-m_p}{4{y}_{1}})\mathbb{\chi}_{1}(\frac{{y}^2_{3}{{\Tr}(\gamma)}^2}{m_p{y}_{1}}), & {\rm if} \,\,{\Tr}(\gamma^{q^e+1})=0,{\Tr}({\gamma})\neq0; \\
-q^{\ell+\epsilon}\sum\limits_{{y}_{1}\in\mathbb{F}_{q}^{*}}{G}^2_{1}({\eta}_{1}){\eta}_{1}(\frac{-{\Tr}(\gamma^{q^e+1})}{{y}_{1}}){\eta}_{1}(\frac{-m_p}{4{y}_{1}}), & {\rm if} \,\,{\Tr}(\gamma^{q^e+1})\neq0,{\Tr}({\gamma})=0; \\
-q^{\ell+\epsilon+1}\sum\limits_{{y}_{1}\in\mathbb{F}_{q}^{*}}{G}_{1}({\eta}_{1}){\eta}_{1}(\frac{-m_p}{4{y}_{1}}), & {\rm if}\,\,{\Tr}(\gamma^{q^e+1})\neq0,{\Tr}({\gamma})\neq0,A=0; \\
-q^{\ell+\epsilon}\sum\limits_{{y}_{1}\in\mathbb{F}_{q}^{*}}\sum\limits_{{y}_{3}\in\mathbb{F}_{q}}{G}_{1}({\eta}_{1}){\eta}_{1}(\frac{-m_p}{4{y}_{1}})\mathbb{\chi}_{1}(\frac{{y}^2_{3}A}{m_p{y}_{1}}), &{\rm if} \,\,{\Tr}(\gamma^{q^e+1})\neq0,{\Tr}({\gamma})\neq0 ,A\neq0.
\end{array}\right.\\\nonumber
\end{eqnarray}
This together with the fact ${G}^2_{1}({\eta}_{1})={\eta}_{1}(-1)q$ gives
\begin{eqnarray}\nonumber
{\Theta}_{b}(0,0)&=&\left\{\begin{array}{ll}
0, & {\rm if} \,\, A=0;\\
-q^{\ell+\epsilon}\sum\limits_{{y}_{1}\in\mathbb{F}_{q}^{*}}{G}^2_{1}({\eta}_{1}){\eta}_{1}(m_p{{y}_{1}}){\eta}_{1}(\frac{-m_p}{4{y}_{1}}), & {\rm if} \,\,{\Tr}(\gamma^{q^e+1})=0,{\Tr}({\gamma})\neq0;\\
-q^{\ell+\epsilon+1}\sum\limits_{{y}_{1}\in\mathbb{F}_{q}^{*}}{\eta}_{1}(-m_p{\Tr}(\gamma^{q^e+1})), & {\rm if} \,\,{\Tr}(\gamma^{q^e+1})\neq0,{\Tr}({\gamma})=0;\\
-q^{\ell+\epsilon}\sum\limits_{{y}_{1}\in\mathbb{F}_{q}^{*}}{G}^2_{1}({\eta}_{1}){\eta}_{1}(\frac{-m_p}{4{y}_{1}}){\eta}_{1}(\frac{A}{m_p{y}_{1}}), & {\rm if} \,\,{\Tr}(\gamma^{q^e+1})\neq0,{\Tr}({\gamma})\neq0, A\neq0.
\end{array}\right.\\\nonumber
\end{eqnarray}
A direct computation leads to
\begin{eqnarray}\nonumber
&=&\left\{\begin{array}{ll}
0, & {\rm if} \,\,A=0;\\
-(q-1)q^{\ell+\epsilon+1}, & {\rm if}\,\,{\Tr}(\gamma^{q^u+1})=0,{\Tr}({\gamma})\neq0;\\
-(q-1)q^{\ell+\epsilon+1}{\eta}_{1}(-m_p{\Tr}(\gamma^{q^e+1})), & {\rm if}\, \,{\Tr}(\gamma^{q^e+1})\neq0,{\Tr}({\gamma})=0;\\
-(q-1)q^{\ell+\epsilon+1}{\eta}_{1}({\Tr}(\gamma)^2-m_p{\Tr}(\gamma^{q^e+1})),&{\rm if}\,\, {\Tr}(\gamma^{q^e+1})\neq0,{\Tr}({\gamma})\neq0, A\neq0.
\end{array}\right.\nonumber
\end{eqnarray}

2) $m_p=0$. For $b\not\in T$, by Lemmas \ref{m6}, \ref{m7} and \ref{n1}, we have $$\begin{aligned}\label{0c1}
{\Theta}_{b}(0,0)=&\sum_{x\in\mathbb{F}_{q^m}}\sum_{{y}_{1}\in\mathbb{F}_{q}^{*}}\sum_{{y}_{2}\in\mathbb{F}_{q}}\chi({y}_{1}x^{q^e+1}+{y}_{2}x)\nonumber\\
=&-q^{\ell+\epsilon}\sum_{{y}_{1}\in\mathbb{F}_{q}^{*}}\sum_{{y}_{2}\in\mathbb{F}_{q}}\mathbb{\chi}_{1}
(\frac{-m_p{y}^2_{2}}{4{y}_{1}})=-(q-1)q^{\ell+\epsilon+1}.\nonumber\\
\end{aligned}$$
For $b\in T$, by Lemma \ref{m2} and the fact ${G}^2_{1}({\eta}_{1})={\eta}_{1}(-1)q$, we have
\begin{eqnarray}\nonumber
& &{\Theta}_{b}(0,0)=\sum_{{y}_{1}\in\mathbb{F}_{q}^{*}}\sum_{{y}_{2},{y}_{3}\in\mathbb{F}_{q}}-q^{\ell+\epsilon}\mathbb{\chi}_{1}
(\frac{-1}{{y}_{1}}({y}^2_{3}{\Tr}(\gamma^{q^e+1})-{y}_{2}{y}_{3}{\Tr}(\gamma)))\\
 &=&\left\{\begin{array}{ll}
-(q-1)q^{\ell+\epsilon+2}, & {\rm if}\, {\Tr}(\gamma^{q^e+1})=0,{\Tr}(\gamma)=0;\\
-q^{\ell+\epsilon}\sum\limits_{{y}_{1}\in\mathbb{F}_{q}^{*}}\sum\limits_{{y}_{2},{y}_{3}\in\mathbb{F}_{q}}\mathbb{\chi}_{1}(\frac{{y}_{2}{y}_{3}{\Tr}(\gamma)}{{y}_{1}}), &{\rm if}\,\,{\Tr}(\gamma^{q^e+1})=0,{\Tr}({\gamma})\neq0;\\
-q^{\ell+\epsilon}\sum\limits_{{y}_{1}\in\mathbb{F}_{q}^{*}}\sum\limits_{{y}_{2},{y}_{3}\in\mathbb{F}_{q}}\mathbb{\chi}_{1}(\frac{-{y}^2_{3}{\Tr}(\gamma^{q^e+1})}{{y}_{1}}), & {\rm if}\,\, {\Tr}(\gamma^{q^e+1})\neq0,{\Tr}({\gamma})=0;\\
-q^{\ell+\epsilon}\sum\limits_{{y}_{1}\in\mathbb{F}_{q}^{*}}\sum\limits_{{y}_{2}\in\mathbb{F}_{q}}{G}_{1}({\eta}_{1}){\eta}_{1}(\frac{-{\Tr}(\gamma^{q^e+1})}{{y}_{1}})\mathbb{\chi}_{1}(\frac{{y}^2_{2}{{\Tr}(\gamma)}^2}{4{y}_{1}{\Tr}(\gamma^{q^e+1})}), &{\rm if}\,\,{\Tr}(\gamma^{q^e+1})\neq0,{\Tr}({\gamma})\neq0;
\end{array}\right.\nonumber  \\
 &=& \left\{\begin{array}{ll}
-(q-1)q^{\ell+\epsilon+2}, & {\rm if} \,\, {\Tr}(\gamma^{q^e+1})=0,{\Tr}(\gamma)=0;\\
-(q-1)q^{\ell+\epsilon+1}, & {\rm if} \,\,{\Tr}(\gamma^{q^e+1})=0,{\Tr}({\gamma})\neq0;\\
-q^{\ell+\epsilon+1}\sum_{{y}_{1}\in\mathbb{F}_{q}^{*}}{G}_{1}({\eta}_{1}){\eta}_{1}(\frac{-{\Tr}(\gamma^{q^e+1})}{{y}_{1}}), & {\rm if} \,\, {\Tr}(\gamma^{q^e+1})\neq0,{\Tr}({\gamma})=0;\\
-q^{\ell+\epsilon}\sum_{{y}_{1}\in\mathbb{F}_{q}^{*}}{G}^2_{1}({\eta}_{1}){\eta}_{1}(-1), & {\rm if} \,\, {\Tr}(\gamma^{q^e+1})\neq0,{\Tr}({\gamma})\neq0.
\end{array}\right. \nonumber
\end{eqnarray}
This completes the proof.
\end{proof}

Similar to the computation of ${\Theta}_{b}(0,0)$ in Lemma \ref{n3}, the value of ${\Theta}_{b}(u,0)$, ${\Theta}_{b}(0,v)$ and ${\Theta}_{b}(u,v)$ can be given as in the following lemmas, where $u,v\in \fq^*$.
\begin{lem}\label{n5}
Let $b\in\mathbb{F}_{q^m}^*$, $u\in\fq^*$, $m/\gcd(m,e)$ be even and $\epsilon$ be given as in \eqref{e1}. Recall the notation $T$ and $\gamma$ as in the beginning of Section \ref{s2}. Then the value of ${\Theta}_{b}(u,0)$ in \eqref{nn} is given in the following\\
1)  If $m_p\neq0$, then$${\Theta}_{b}(u,0)=\left\{\begin{array}{ll}
-q^{\ell+\epsilon+1}{\eta}_{1}(-um_p), & {\rm if} \,\, b\not\in T;\\
-q^{\ell+\epsilon+2}{\eta}_{1}(-um_p), & {\rm if} \,\, b\in T, {\Tr}(\gamma)^2-m_p{\Tr}(\gamma^{q^e+1})=0;\\
q^{\ell+\epsilon+1}, & {\rm if} \,\, b\in T,{\Tr}(\gamma^{q^u+1})=0,{\Tr}({\gamma})\neq0;\\
q^{\ell+\epsilon+1}{\eta}_{1}(-m_p{\Tr}(\gamma^{q^e+1})), & {\rm if} \,\,  b\in T,{\Tr}(\gamma^{q^e+1})\neq0,{\Tr}({\gamma})=0;\\
q^{\ell+\epsilon+1}{\eta}_{1}({\Tr}(\gamma)^2-m_p{\Tr}(\gamma^{q^e+1})), & {\rm if} \,\,  b\in T, {\Tr}(\gamma^{q^e+1})\neq0,{\Tr}({\gamma})\neq0, \\& \,\,\,\,{\Tr}(\gamma)^2-m_p{\Tr}(\gamma^{q^e+1})\neq0.
\end{array}\right.$$
2) If $m_p=0$, then $${\Theta}_{b}(u,0)=\left\{\begin{array}{ll}
q^{\ell+\epsilon+2}, & {\rm if} \,\, b\in T,{\Tr}(\gamma)=0,{\Tr}(\gamma^{q^e+1})=0;\\
q^{\ell+\epsilon+1}, & {\rm if} \,\,b\in T,{\Tr}({\gamma})\neq0\,\,{\rm or}\,\,b\not\in T;\\
-q^{\ell+\epsilon+2}{\eta}_{1}(-u{\Tr}(\gamma^{q^e+1})), & {\rm if} \,\, b\in T,{\Tr}(\gamma^{q^e+1})\neq0,{\Tr}({\gamma})=0.
\end{array}\right.$$

\end{lem}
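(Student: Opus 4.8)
The plan is to follow the proof of Lemma~\ref{n3} essentially verbatim, now keeping the extra character value $\chi_1(-uy_1)$ that appears because $u\neq0$. Setting $v=0$ in \eqref{nn} and summing over $x$ first gives
$$\Theta_b(u,0)=\sum_{y_1\in\fq^*}\sum_{y_2,y_3\in\fq}\chi_1(-uy_1)\,S(y_1,y_2+y_3b),\qquad S(y_1,y_2+y_3b)=\sum_{x\in\fqm}\chi(y_1x^{q^e+1}+(y_2+y_3b)x).$$
First I would dispose of the case $b\notin T$: by the same translation-and-rescaling argument as in Lemma~\ref{n3} (translate by $\tfrac{y_2}{2y_1}$, then rescale), the equation $y_1^{q^e}X^{q^{2e}}+y_1X=-(y_2+y_3b)^{q^e}$ is insolvable whenever $y_3\neq0$, so by Lemma~\ref{m7} those terms vanish and $\Theta_b(u,0)=\sum_{y_1\in\fq^*}\sum_{y_2\in\fq}\chi_1(-uy_1)S(y_1,y_2)$. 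For $b\in T$ I would again use that $y_1^{-1}(y_3\gamma-\tfrac12 y_2)$ solves $y_1^{q^e}X^{q^{2e}}+y_1X=-(y_2+y_3b)^{q^e}$, and then apply Lemmas~\ref{m6}, \ref{m7} and \ref{n1} to get $S(y_1,y_2+y_3b)=-q^{\ell+\epsilon}\chi(-y_1(y_1^{-1}(y_3\gamma-\tfrac12 y_2))^{q^e+1})$.

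The next step is the trace identity already established in Lemma~\ref{n3},
$$\Tr\big((y_3\gamma-\tfrac12 y_2)^{q^e+1}\big)=y_3^2\,\Tr(\gamma^{q^e+1})-y_2y_3\,\Tr(\gamma)+\tfrac{m_p}{4}\,y_2^2,$$
which turns $\Theta_b(u,0)$ into a sum, weighted by $\chi_1(-uy_1)$, of $\chi_1$ evaluated at a quadratic form in $(y_2,y_3)$ with coefficients proportional to $y_1^{-1}$. I would then split into the same subcases as in Lemma~\ref{n3}, according to whether $\Tr(\gamma^{q^e+1})$ and $\Tr(\gamma)$ vanish and whether $A=\Tr(\gamma)^2-m_p\Tr(\gamma^{q^e+1})$ vanishes, treating the branch $m_p=0$ separately (there the $y_2^2$ term disappears and several quadratic sums degenerate to full sums over $\fq$, which is what shifts the powers of $q$). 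In each subcase the inner sums over $y_2$ and $y_3$ are either quadratic Gauss sums evaluated by Lemma~\ref{m2} or elementary sums forcing one variable to be $0$; after using $G_1(\eta_1)^2=\eta_1(-1)q$ and $\eta_1(4)=1$ the leftover dependence on $y_1$ is either trivial or a single factor $\eta_1(y_1)$, so the outer sum is one of
$$\sum_{y_1\in\fq^*}\chi_1(-uy_1)=-1\qquad\text{or}\qquad\sum_{y_1\in\fq^*}\eta_1(y_1)\chi_1(-uy_1)=\eta_1(-u)\,G_1(\eta_1).$$
Assembling these with $G_1(\eta_1)^2=\eta_1(-1)q$ produces the factors $-1$, $\eta_1(-um_p)$ and $\eta_1(-u\Tr(\gamma^{q^e+1}))$ appearing in the statement.

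All of this is bookkeeping, so I expect the only genuine subtlety to be reconciling the closed forms across subcases. The representative case is $b\in T$ with $\Tr(\gamma^{q^e+1})\neq0$, $\Tr(\gamma)\neq0$ and $A=0$: the direct computation yields $-q^{\ell+\epsilon+2}\eta_1(-u\Tr(\gamma^{q^e+1}))$, and one must notice that $A=0$ forces $m_p\Tr(\gamma^{q^e+1})=\Tr(\gamma)^2$ to be a nonzero square, whence $\eta_1(m_p)=\eta_1(\Tr(\gamma^{q^e+1}))$ and this value coincides with the stated $-q^{\ell+\epsilon+2}\eta_1(-um_p)$; the subcase $\Tr(\gamma^{q^e+1})=\Tr(\gamma)=0$ gives the same value directly, which is why the two are merged in part~1). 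The other point demanding care is keeping the $m_p=0$ branch, with its shifted exponents, cleanly separate from the $m_p\neq0$ branch; apart from that the argument is a mechanical adaptation of Lemma~\ref{n3}.
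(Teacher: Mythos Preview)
Your proposal is correct and follows exactly the approach the paper intends: the paper does not give a separate proof of Lemma~\ref{n5} but simply says it is proved ``similar to the computation of $\Theta_b(0,0)$ in Lemma~\ref{n3}'', and your plan carries out that adaptation in detail. One minor remark: in the $m_p\neq0$, $A=0$, both-traces-nonzero subcase, if you sum over $y_2$ first (as Lemma~\ref{n3} does whenever $m_p\neq0$) rather than over $y_3$ first, the value $-q^{\ell+\epsilon+2}\eta_1(-um_p)$ appears directly and the reconciliation via $m_p\Tr(\gamma^{q^e+1})=\Tr(\gamma)^2$ becomes unnecessary, though your route is of course equally valid.
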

\begin{lem}
Let $b\in\mathbb{F}_{q^m}^*$, $v\in\fq^*$, $m/\gcd(m,e)$ be even and $\epsilon$ be given as in \eqref{e1}. Recall the notation $T$ and $\gamma$ as in the beginning of Section \ref{s2}. Then the value of ${\Theta}_{b}(0,v)$ in \eqref{nn} is given in the following:\\
1) If $m_p\neq0$, then$${\Theta}_{b}(0,v)=\left\{\begin{array}{ll}
-q^{\ell+\epsilon+1}, & {\rm if} \,\,b\not\in T;\\
-q^{\ell+\epsilon+2}, & {\rm if} \,\,b\in T, {\Tr}(\gamma^{q^e+1})=0,{\Tr}(\gamma)=0;\\
-(q-1)q^{\ell+\epsilon+1}, & {\rm if} \,\,b\in T,{\Tr}(\gamma^{q^e+1})=0,{\Tr}({\gamma})\neq0;\\
q^{\ell+\epsilon+1}{\eta}_{1}(-m_p{\Tr}(\gamma^{q^e+1})), & {\rm if} \,\,b\in T, {\Tr}(\gamma^{q^e+1})\neq0,{\Tr}({\gamma})=0;\\
0, & {\rm if} \,\, b\in T, {\Tr}(\gamma^{q^e+1})\neq0,{\Tr}({\gamma})\neq0, \\& \,\,\,\, {\Tr}(\gamma)^2-m_p{\Tr}(\gamma^{q^e+1})=0;\\
q^{\ell+\epsilon+1}{\eta}_{1}({\Tr}(\gamma)^2-m_p{\Tr}(\gamma^{q^e+1})), & {\rm if} \,\,b\in T, {\Tr}(\gamma^{q^e+1})\neq0,{\Tr}({\gamma})\neq0, \\& \,\,\,\,{\Tr}(\gamma)^2-m_p{\Tr}(\gamma^{q^e+1})\neq0.
\end{array}\right.$$
2) If $m_p=0$, then$${\Theta}_{b}(0,v)=\left\{\begin{array}{ll}
0, & {\rm if} \,\,b\in T, {\Tr}(\gamma)=0 \,\,{\rm or}\,\, b\not\in T;\\
-(q-1)q^{\ell+\epsilon+1}, & {\rm if} \,\,b\in T,{\Tr}(\gamma^{q^e+1})=0,{\Tr}({\gamma})\neq0;\\
q^{\ell+\epsilon+1}, & {\rm if} \,\,b\in T,{\Tr}(\gamma^{q^e+1})\neq0,{\Tr}({\gamma})\neq0.
\end{array}\right.$$

\end{lem}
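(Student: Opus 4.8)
The plan is to imitate the argument for $\Theta_b(0,0)$ in Lemma \ref{n3}, the only new feature being the linear character value $\chi_1(-vy_2)$ with $v\neq0$. First I would perform the inner sum over $x$ in \eqref{nn}, which turns the expression into
$$\Theta_b(0,v)=\sum_{y_1\in\fq^*}\sum_{y_2,y_3\in\fq}\chi_1(-vy_2)\,S(y_1,y_2+y_3b),$$
where $S(\cdot,\cdot)$ is the Weil sum of Section \ref{s2}. Since $y_1\in\fq^*$ gives $y_1^{(q^m-1)/(q^\alpha+1)}=1$ by Lemma \ref{n1}, Lemmas \ref{m6} and \ref{m7} apply exactly as in the proof of Lemma \ref{n3}: if $b\notin T$, the equation $y_1^{q^e}X^{q^{2e}}+y_1X=-(y_2+y_3b)^{q^e}$ is unsolvable whenever $y_3\neq0$, so those terms vanish and only the $y_3=0$ part $S(y_1,y_2)=-q^{\ell+\epsilon}\chi_1(-m_py_2^2/(4y_1))$ survives; if $b\in T$ with $\gamma$ as fixed in Section \ref{s2}, the relevant solution is $x_0=y_1^{-1}(y_3\gamma-\tfrac12y_2)$ and, using $\Tr(y_1x_0^{q^e+1})=y_1^{-1}\bigl(y_3^2\Tr(\gamma^{q^e+1})-y_2y_3\Tr(\gamma)+\tfrac{m_p}{4}y_2^2\bigr)$, one gets $S(y_1,y_2+y_3b)=-q^{\ell+\epsilon}\chi_1(-y_1^{-1}(y_3^2\Tr(\gamma^{q^e+1})-y_2y_3\Tr(\gamma)+\tfrac{m_p}{4}y_2^2))$.

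Substituting, $\Theta_b(0,v)$ becomes $-q^{\ell+\epsilon}$ times a triple sum of $\chi_1$ evaluated at a polynomial in $y_2,y_3$ that is quadratic in $(y_2,y_3)$ plus the linear term $-vy_2$. I would then split into cases according as $b\in T$ or $b\notin T$, whether $\Tr(\gamma^{q^e+1})$ and $\Tr(\gamma)$ vanish, and whether $m_p=0$. In each branch the double sum over $y_2,y_3$ is evaluated by completing the square in whichever variable has nonzero quadratic coefficient (invoking Lemma \ref{m2} to convert the inner Gaussian sum into $G_1(\eta_1)$ times a quadratic character) or, when a variable appears only linearly, by using that $\sum_{t\in\fq}\chi_1(ct)$ equals $q$ if $c=0$ and $0$ otherwise. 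After the $y_2,y_3$-sums, one is left with a sum over $y_1\in\fq^*$ of $\chi_1(cy_1)$ times $\eta_1(\pm1/y_1)$-type factors; absorbing $\eta_1$ of squares into $1$, applying $G_1(\eta_1)^2=\eta_1(-1)q$, and using $\sum_{y_1\in\fq^*}\chi_1(cy_1)=-1$ for $c\neq0$ (and $q-1$ for $c=0$), one obtains the stated closed forms. The cases where the answer is $0$ are precisely those in which, after eliminating $y_3$, the surviving $y_2$-sum collapses to $\sum_{y_2\in\fq}\chi_1(-vy_2)=0$ because $v\neq0$; this is what makes the table for $\Theta_b(0,v)$ differ from that of $\Theta_b(0,0)$.

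The delicate part is the sign bookkeeping. For $b\in T$ with $\Tr(\gamma^{q^e+1})\neq0$ and $\Tr(\gamma)\neq0$ one must check that completing the square in $y_3$ leaves the coefficient of $y_2^2$ equal to $A/(4\Tr(\gamma^{q^e+1})y_1)$ with $A=\Tr(\gamma)^2-m_p\Tr(\gamma^{q^e+1})$, so that $A=0$ forces the residual $y_2$-sum (and hence $\Theta_b(0,v)$) to vanish, whereas for $A\neq0$ Lemma \ref{m2} together with $G_1(\eta_1)^2=\eta_1(-1)q$ yields the factor $q\,\eta_1(A)$ and the subsequent $y_1$-sum the factor $-1$, giving $q^{\ell+\epsilon+1}\eta_1(A)$ after multiplying by $-q^{\ell+\epsilon}$. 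I expect the main obstacle to be carrying these quadratic-Gauss-sum normalizations uniformly through the $m_p\neq0$ and $m_p=0$ subcases (where several $m_p\neq0$ branches coalesce once $\eta_1$ of a square is used), exactly as in Lemma \ref{n3}; once the branching is organized, each case is a short computation.
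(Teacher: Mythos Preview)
Your proposal is correct and follows exactly the approach the paper intends: the paper does not spell out this proof but states that ${\Theta}_{b}(0,v)$ is computed ``similar to the computation of ${\Theta}_{b}(0,0)$ in Lemma \ref{n3}'', and your outline reproduces that template step by step, with the extra factor $\chi_1(-vy_2)$ producing the vanishing in precisely the branches you indicate.
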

\begin{lem} \label{n10}
Let $b\in\mathbb{F}_{q^m}^*$, $u,v\neq0$, $v^2-um_p\neq0$, $m/\gcd(m,e)$ be even and $\epsilon$ be given as in \eqref{e1}. Recall the notation $T$ and $\gamma$ as in the beginning of Section \ref{s2}. Then the value of ${\Theta}_{b}(u,v)$ in \eqref{nn} is given in the following:\\
1)  If $m_p\neq0$, then$${\Theta}_{b}(u,v)=\left\{\begin{array}{ll}
-q^{\ell+\epsilon+1}{\eta}_{1}(v^2-um_p), & {\rm if} \,\,b\not\in T;\\
-q^{\ell+\epsilon+2}{\eta}_{1}(v^2-um_p), & {\rm if} \,\, b\in T, {\Tr}(\gamma)=0,{\Tr}(\gamma^{q^e+1})=0;\\
q^{\ell+\epsilon+1}, & {\rm if} \,\,b\in T,{\Tr}(\gamma^{q^e+1})=0,{\Tr}({\gamma})\neq0;\\
q^{\ell+\epsilon+1}{\eta}_{1}(-m_p{\Tr}(\gamma^{q^e+1})), & {\rm if} \,\,b\in T,{\Tr}(\gamma^{q^e+1})\neq0,{\Tr}({\gamma})=0;\\
0, & {\rm if} \,\,b\in T,{\Tr}(\gamma^{q^e+1})\neq0,{\Tr}({\gamma})\neq0,A=0;\\
q^{\ell+\epsilon+1}{\eta}_{1}({\Tr}(\gamma)^2-m_p{\Tr}(\gamma^{q^e+1})), & {\rm if} \,\, b\in T,{\Tr}(\gamma^{q^e+1})\neq0,{\Tr}({\gamma})\neq0, \\&\,\,\,\,A\neq0,B\neq0;\\
-(q-1)q^{\ell+\epsilon+1}{\eta}_{1}(v^2-um_p), & {\rm if} \,\, b\in T,{\Tr}(\gamma^{q^e+1})\neq0,{\Tr}({\gamma})\neq0, \\&\,\,\,\,A\neq0,B=0.
\end{array}\right.$$
2) If $m_p=0$, then $${\Theta}_{b}(u,v)=\left\{\begin{array}{ll}
0, & {\rm if} \,\, b\in T,{\Tr}(\gamma)=0\,\,{\rm or}\,\,b\not\in T;\\
q^{\ell+\epsilon+1}, & {\rm if} \,\,b\in T,{\Tr}(\gamma^{q^e+1})=0,{\Tr}({\gamma})\neq0\,\,\\&{\rm or}\,\,{\Tr}(\gamma^{q^e+1})\neq0,{\Tr}({\gamma})\neq0,C\neq0;\\
-(q-1)q^{\ell+\epsilon+1}, & {\rm if} \,\,b\in T,{\Tr}(\gamma^{q^e+1})\neq0,{\Tr}({\gamma})\neq0,C=0,
\end{array}\right.$$
where $A={\Tr}(\gamma)^2-m_p{\Tr}(\gamma^{q^e+1})$, $B=\frac{u}{v^2-u m_p}{\Tr}(\gamma)^2+{\Tr}(\gamma^{q^e+1})$ and $C=\frac{u}{v^2}{\Tr}(\gamma)^2+{\Tr}(\gamma^{q^e+1})$.
\end{lem}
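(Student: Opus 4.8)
The plan is to follow exactly the same strategy used for ${\Theta}_{b}(0,0)$ in Lemma \ref{n3}, but now carrying the linear-character twists $\mathbb{\chi}_{1}(-u{y}_{1}-v{y}_{2})$ through every step. First I would split the outer sum in \eqref{nn} according to whether ${y}_{3}=0$ or ${y}_{3}\in\fq^*$. The term with ${y}_{3}=0$ reduces to a sum over ${y}_{1}\in\fq^*$, ${y}_{2}\in\fq$ of $S({y}_{1},{y}_{2})\mathbb{\chi}_{1}(-u{y}_{1}-v{y}_{2})$, which by Lemmas \ref{m6}, \ref{m7} and \ref{n1} equals $-q^{\ell+\epsilon}\mathbb{\chi}_{1}(-\frac{m_p{y}_{2}^2}{4{y}_{1}}-u{y}_{1}-v{y}_{2})$ and is handled precisely as the sum $\Theta$ in Lemma \ref{lem1} (case $v\neq0$, $m_p\neq0$, $v^2-um_p\neq0$); this is what produces the ``$b\notin T$'' value $-q^{\ell+\epsilon+1}{\eta}_{1}(v^2-um_p)$. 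For the ${y}_{3}\in\fq^*$ part I would first establish, exactly as in Lemma \ref{n3}, that the relevant linearized equation $y_1^{q^e}X^{q^{2e}}+y_1X=-(y_2+y_3b)^{q^e}$ is solvable iff $b\in T$, with explicit solution $y_1^{-1}(y_3\gamma-\tfrac12 y_2)$, so that for $b\notin T$ the ${y}_{3}\neq0$ contribution vanishes by Lemma \ref{m7}.

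For $b\in T$, I would substitute the explicit solution into Lemma \ref{m6}/\ref{m7} to get
\[
{\Theta}_{b}(u,v)=-q^{\ell+\epsilon}\sum_{{y}_{1}\in\fq^*}\sum_{{y}_{2},{y}_{3}\in\fq}\mathbb{\chi}_{1}\!\Big(-\tfrac{1}{{y}_{1}}\big({y}_{3}^2{\Tr}(\gamma^{q^e+1})-{y}_{2}{y}_{3}{\Tr}(\gamma)+\tfrac{m_p{y}_{2}^2}{4}\big)-u{y}_{1}-v{y}_{2}\Big),
\]
using the same expansion ${\Tr}((y_3\gamma-\tfrac12 y_2)^{q^e+1})={y}_{3}^2{\Tr}(\gamma^{q^e+1})-{y}_{2}{y}_{3}{\Tr}(\gamma)+\tfrac{m_p{y}_{2}^2}{4}$ as before. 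Then I split into the four sub-cases by the vanishing of ${\Tr}(\gamma^{q^e+1})$ and ${\Tr}(\gamma)$. In each sub-case the inner double sum over $({y}_{2},{y}_{3})$ is a quadratic form in $({y}_{2},{y}_{3})$ (with the extra linear term $-v{y}_{2}$), so I evaluate it by completing the square and applying Lemma \ref{m2} one variable at a time, exactly mirroring the $m_p\neq0$ computation in Lemma \ref{n3} but keeping track of how $u,v$ enter. The quantities $A={\Tr}(\gamma)^2-m_p{\Tr}(\gamma^{q^e+1})$ and $B=\frac{u}{v^2-um_p}{\Tr}(\gamma)^2+{\Tr}(\gamma^{q^e+1})$ will naturally appear as the discriminant-type expressions governing whether the resulting single-variable Gauss-sum sum over ${y}_{1}\in\fq^*$ collapses (giving a $\pm(q-1)q^{\ell+\epsilon+1}$ term) or telescopes to zero; the case $A=0$ again forces either both traces zero or both nonzero, and when both are nonzero one further distinguishes $B=0$ versus $B\neq0$. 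Finally I use ${G}_{1}^2({\eta}_{1})={\eta}_{1}(-1)q$ and the standard evaluation $\sum_{{y}_{1}\in\fq^*}{\eta}_{1}(c{y}_{1})=0$, $\sum_{{y}_{1}\in\fq^*}\mathbb{\chi}_{1}(c{y}_{1})=-1$ for $c\neq0$ to simplify to the stated closed forms. The case $m_p=0$ is treated the same way: the ${y}_{2}^2$ term disappears, so the inner sum over ${y}_{2}$ becomes an orthogonality relation that pins down a linear constraint, leaving a sum over ${y}_{3}$ (or over ${y}_{1}$) that is evaluated by Lemma \ref{m2}; here the governing quantity is $C=\frac{u}{v^2}{\Tr}(\gamma)^2+{\Tr}(\gamma^{q^e+1})$, which is just the $m_p\to0$ specialization of $B$.

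The main obstacle I anticipate is bookkeeping rather than conceptual: in the sub-case ${\Tr}(\gamma^{q^e+1})\neq0$, ${\Tr}({\gamma})\neq0$ one has a genuine binary quadratic form in $({y}_{2},{y}_{3})$ plus a linear shift, and one must complete the square in the right order and correctly identify the residual coefficient so that the dependence on ${y}_{1}$ in the final one-dimensional sum is either a pure multiplicative character (yielding $0$) or a character of a perfect square (yielding a nonzero geometric-type sum). Getting the precise constant $B$ — and verifying that $B=0$ is exactly the degenerate locus where the ${y}_{1}$-sum fails to cancel — is the delicate point; I would double-check it against the examples (e.g. $(q,m,e)=(5,4,1)$, $(u,v)=(1,1)$ and $(1,2)$ in Corollary \ref{b3}) to make sure the sign of ${\eta}_{1}$ and the coefficient $\frac{u}{v^2-um_p}$ come out correctly. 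Everything else is a direct, if lengthy, repetition of the Gauss-sum manipulations already carried out in the proof of Lemma \ref{n3}.
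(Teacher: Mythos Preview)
Your plan is correct and matches the paper's approach: the paper explicitly states that Lemmas \ref{n5}--\ref{n11} are obtained ``similar to the computation of ${\Theta}_{b}(0,0)$ in Lemma \ref{n3}'' and omits the details, so your proposal is precisely the intended argument carried out in full. The splitting according to $y_3=0$ versus $y_3\neq0$, the solvability criterion for $b\notin T$, the substitution of the explicit root $y_1^{-1}(y_3\gamma-\tfrac12 y_2)$ for $b\in T$, and the iterated application of Lemma \ref{m2} to the resulting binary quadratic form in $(y_2,y_3)$ are all exactly as in the proof of Lemma \ref{n3}; the quantities $A$, $B$, $C$ emerge as you describe, and the check that $B=0$ forces $\eta_1(A)=\eta_1(v^2-um_p)$ (needed to match the last displayed value) follows from $A=\frac{v^2{\Tr}(\gamma)^2}{v^2-um_p}$ on that locus.
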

\begin{lem} \label{n11}
Let $b\in\mathbb{F}_{q^m}^*$, $u,v\neq0$, $v^2-um_p=0$, $m/\gcd(m,e)$ be even and $\epsilon$ be given as in \eqref{e1}. Recall the notation $T$ and $\gamma$ as in the beginning of Section \ref{s2}. Then the value of ${\Theta}_{b}(u,v)$ in \eqref{nn} is given by
$${\Theta}_{b}(u,v)=\left\{\begin{array}{ll}
0, & {\rm if}\,\,b\in T,{\Tr}(\gamma)^2-m_p{\Tr}(\gamma^{q^e+1})=0\,{\rm or}\, b\not\in T;\\
q^{\ell+\epsilon+1}, & {\rm if} \,\,b\in T,{\Tr}(\gamma^{q^e+1})=0,{\Tr}({\gamma})\neq0;\\
-(q-1)q^{\ell+\epsilon+1}{\eta}_{1}(-m_p{\Tr}(\gamma^{q^e+1})), & {\rm if} \,\, b\in T,{\Tr}(\gamma^{q^e+1})\neq0,{\Tr}({\gamma})=0;\\
q^{\ell+\epsilon+1}{\eta}_{1}({\Tr}(\gamma)^2-m_p{\Tr}(\gamma^{q^e+1})), & {\rm if} \,\, b\in T,{\Tr}(\gamma^{q^e+1})\neq0,{\Tr}({\gamma})\neq0,\\&\,\,\,\,{\Tr}(\gamma)^2-m_p{\Tr}(\gamma^{q^e+1})\neq0.
\end{array}\right.$$
\end{lem}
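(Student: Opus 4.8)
The plan is to mimic the computation of $\Theta_b(0,0)$ carried out in the proof of Lemma \ref{n3}, now specialized to $u,v\in\fq^*$ with $v^2=um_p$; I note first that, since $v\neq0$, the hypothesis $v^2-um_p=0$ forces $m_p\neq0$, which is why a single formula covers every case. I would begin by performing the inner sum over $x$ in \eqref{nn}, which is $S(y_1,y_2+y_3b)$. As in the proof of Lemma \ref{n3}, for $y_1\in\fq^*$ the auxiliary equation $y_1^{q^e}X^{q^{2e}}+y_1X=-(y_2+y_3b)^{q^e}$ is solvable exactly when $b\in T$ or $y_3=0$, with solution $y_1^{-1}(y_3\gamma-\tfrac{1}{2}y_2)$ in the first case and $-\tfrac{1}{2}y_1^{-1}y_2$ in the second. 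Lemmas \ref{m6}, \ref{m7} and \ref{n1} then give $S(y_1,y_2+y_3b)=-q^{\ell+\epsilon}\chi(-y_1x_0^{q^e+1})$ in every solvable case (the Frobenius exponent $\epsilon$ absorbing the permutation/non\nobreakdash-permutation dichotomy) and $0$ otherwise; together with the identity $\Tr\big((y_3\gamma-\tfrac{1}{2}y_2)^{q^e+1}\big)=y_3^2\Tr(\gamma^{q^e+1})-y_2y_3\Tr(\gamma)+\tfrac{m_p}{4}y_2^2$ already recorded in the proof of Lemma \ref{n3}, this rewrites $\Theta_b(u,v)$ as $-q^{\ell+\epsilon}$ times a sum over $(y_1,y_2,y_3)\in\fq^*\times\fq\times\fq$ of $\chi_1$ of an explicit rational quadratic form; for $b\notin T$ only the slice $y_3=0$ survives.

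Next I would evaluate the resulting sums by completing squares and invoking Lemma \ref{m2} and $G_1(\eta_1)^2=\eta_1(-1)q$. For $b\notin T$ one is left with $-q^{\ell+\epsilon}G_1(\eta_1)\sum_{y_1\in\fq^*}\chi_1\big((\tfrac{v^2}{m_p}-u)y_1\big)\eta_1(-\tfrac{m_p}{4y_1})$, and since $v^2-um_p=0$ the character factor is $1$, so the remaining sum is a multiple of $\sum_{y_1\in\fq^*}\eta_1(y_1)=0$, giving $\Theta_b(u,v)=0$. For $b\in T$ I split according to the vanishing of $\Tr(\gamma^{q^e+1})$ and $\Tr(\gamma)$. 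If both vanish, the $(y_2,y_3)$-sum is $q$ times a Gauss sum in $y_2$ and the same $\sum\eta_1(y_1)=0$ cancellation forces $\Theta_b(u,v)=0$. If $\Tr(\gamma^{q^e+1})=0$ and $\Tr(\gamma)\neq0$, summing over $y_3$ pins $y_2=0$, the inner sum equals $q$, and $-q^{\ell+\epsilon+1}\sum_{y_1\in\fq^*}\chi_1(-uy_1)=q^{\ell+\epsilon+1}$. If $\Tr(\gamma^{q^e+1})\neq0$ and $\Tr(\gamma)=0$, the inner sum factors into two Gauss sums, producing $\eta_1(-1)q$ and an overall factor $\eta_1(-m_p\Tr(\gamma^{q^e+1}))$, while $\sum_{y_1\in\fq^*}\chi_1(-uy_1)\chi_1(uy_1)=q-1$, yielding $-(q-1)q^{\ell+\epsilon+1}\eta_1(-m_p\Tr(\gamma^{q^e+1}))$.

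The one genuinely delicate case is $b\in T$ with $\Tr(\gamma^{q^e+1})\neq0$ and $\Tr(\gamma)\neq0$. Here I would complete the square first in $y_3$, then in $y_2$, which produces $G_1(\eta_1)^2=\eta_1(-1)q$, an $\eta_1$-factor that simplifies to $\eta_1(A)$ with $A=\Tr(\gamma)^2-m_p\Tr(\gamma^{q^e+1})$, and a leftover sum over $y_2$. When $A=0$ that leftover is $\sum_{y_2}\chi_1(-vy_2)=0$, so $\Theta_b(u,v)=0$; when $A\neq0$ it is a nonzero Gauss sum, and after the square in $y_2$ the residual $y_1$-sum becomes $\sum_{y_1\in\fq^*}\chi_1\big(-(u+\tfrac{v^2\Tr(\gamma^{q^e+1})}{A})y_1\big)$. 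The crux — and the single place where $v^2=um_p$ is essential — is the identity $uA+v^2\Tr(\gamma^{q^e+1})=u\Tr(\gamma)^2\neq0$, which makes the coefficient of $y_1$ a nonzero element of $\fq^*$, so the $y_1$-sum equals $-1$ and $\Theta_b(u,v)=q^{\ell+\epsilon+1}\eta_1(A)$. Collecting the cases, and observing that $A=0$ corresponds exactly to the two subcases ($\Tr(\gamma^{q^e+1})=\Tr(\gamma)=0$) and ($\Tr(\gamma^{q^e+1}),\Tr(\gamma)\neq0$, $A=0$) that are absorbed into the first line of the statement, gives the claimed piecewise formula. I expect no conceptual obstacle: the whole difficulty is the bookkeeping of the quadratic-form completions and of the $\eta_1$ and $G_1(\eta_1)$ factors, with care taken to track exactly where $v^2=um_p$ collapses a character sum.
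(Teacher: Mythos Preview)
Your proposal is correct and follows essentially the same approach as the paper, which does not prove Lemma \ref{n11} explicitly but states that its computation is similar to that of $\Theta_b(0,0)$ in Lemma \ref{n3}; you carry out precisely that analogous calculation, with the extra $\chi_1(-uy_1-vy_2)$ factor tracked through the square completions. The only cosmetic difference is that in the delicate case $\Tr(\gamma^{q^e+1})\neq0$, $\Tr(\gamma)\neq0$ you complete the square in $y_3$ first and then $y_2$, whereas the paper's template in Lemma \ref{n3} does $y_2$ first; both orders lead to the same residual $y_1$-sum and the same use of the key identity $uA+v^2\Tr(\gamma^{q^e+1})=u\Tr(\gamma)^2$ coming from $v^2=um_p$.
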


\begin{lem}\label{n12}
Let ${a_1}, {a_2} \in\fq^*$, $m/\gcd(m,e)$ be even and $\epsilon$ be given as in \eqref{e1}. Then
$$\begin{aligned}
|\{x\in \fqm:{\Tr}(x^{q^{e}+1})+\frac{{a_1}}{{a_2}}{{\Tr}(x)}^2=0\}|=q^{m-1}-(q-1)q^{\ell+\epsilon-1}{\eta}_{1}(1+\frac{{a_1} m_p}{{a_2}}).\nonumber\\
\end{aligned}$$
\end{lem}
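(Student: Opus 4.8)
The plan is to reduce the count to Lemma \ref{lem1} by partitioning the set according to the value $t:=\Tr(x)\in\fq$. First I would note that for every $x$ the quantity $\frac{a_1}{a_2}\Tr(x)^2$ lies in $\fq$, so the defining condition is simply $\Tr(x^{q^e+1})=-\frac{a_1}{a_2}t^2$; hence the cardinality in question equals $\sum_{t\in\fq}N_t$, where $N_t=|\{x\in\fqm:\Tr(x)=t,\ \Tr(x^{q^e+1})=-\frac{a_1}{a_2}t^2\}|$ is exactly the quantity $N$ of Lemma \ref{lem1} with $v=t$ and $u=-\frac{a_1}{a_2}t^2$.

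Next I would substitute the values from Lemma \ref{lem1}: the term $t=0$ is covered by part 1), and every term with $t\in\fq^*$ by part 3) (since then $v=t\neq0$). The key simplification is that for $t\neq0$ one has $v^2-um_p=t^2\bigl(1+\frac{a_1m_p}{a_2}\bigr)$ and $\eta_1(t^2)=1$, so $\eta_1(v^2-um_p)=\eta_1\bigl(1+\frac{a_1m_p}{a_2}\bigr)$ is independent of $t$; thus all $q-1$ nonzero terms contribute the same amount. Adding up the $q$ terms then gives $q^{m-2}+(q-1)\bigl(q^{m-2}-q^{\ell+\epsilon-1}\eta_1(1+\frac{a_1m_p}{a_2})\bigr)$ in the generic case, which collapses to the claimed $q^{m-1}-(q-1)q^{\ell+\epsilon-1}\eta_1\bigl(1+\frac{a_1m_p}{a_2}\bigr)$.

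The remaining work is bookkeeping over the sub-cases of Lemma \ref{lem1}, split according to whether $m_p=0$ and whether $1+\frac{a_1m_p}{a_2}=0$; in the degenerate cases the stated formula still holds because $\eta_1(0)=0$ and $\eta_1(1)=1$. I do not expect a genuine obstacle here — the only mild care needed is to handle the case $m_p\neq0$, $1+\frac{a_1m_p}{a_2}=0$ (where both $\eta_1$ vanishes and the $t=0$ term equals $q^{m-2}$) consistently with the other two. As an alternative one could run the computation directly via additive-character orthogonality, linearizing $\chi_1\bigl(\frac{a_1}{a_2}\Tr(x)^2\bigr)$ through a quadratic Gauss sum as in the proof of Lemma \ref{lem1}, but the reduction above is the cleanest route.
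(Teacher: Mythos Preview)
Your proposal is correct, and it is a genuinely different route from the paper's. The paper proves Lemma~\ref{n12} by the direct character-sum computation you mention as an alternative: it introduces an auxiliary variable $\pi$ via $\frac{1}{q}\sum_{w\in\fq}\chi_1(w(\pi-\Tr(x)))$ to pin $\pi=\Tr(x)$, expands into a double sum over $y,w,\pi$, evaluates the inner Weil sum $\sum_x\chi(yx^{q^e+1}-wx)$ with Lemmas~\ref{m6}--\ref{n1}, and finishes with two applications of Lemma~\ref{m2} plus $G_1(\eta_1)^2=\eta_1(-1)q$. Your argument instead recognises that fixing $\Tr(x)=t$ reduces the count to exactly the $N$ of Lemma~\ref{lem1} with $(u,v)=(-\tfrac{a_1}{a_2}t^2,t)$, and then sums the already-tabulated values. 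Since Lemma~\ref{lem1} is proved earlier and independently, there is no circularity. What your approach buys is modularity and brevity: the case analysis (including the degenerate cases $m_p=0$ and $v^2-um_p=0$, where $\eta_1(0)=0$) has already been done once, so you avoid repeating any exponential-sum work. What the paper's direct computation buys is self-containment --- it never appeals back to Lemma~\ref{lem1} --- at the cost of redoing essentially the same Gauss-sum manipulation.
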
\begin{proof}
By the orthogonal relation of additive character, for any $x\in\fqm$, we have
\begin{align*}
&\frac{1}{q^2}\sum_{\pi\in\fq}(\sum_{w\in\fq}\chi_1(w(\pi-\Tr(x))))(\sum_{y\in\fq}\chi_1(y(\Tr(x^{q^{e}+1})+\frac{a_1\pi^2}{a_2}))))\\
%=&\left\{\begin{array}{ll}
%1, & \mbox{ if ${\Tr}(x^{q^{e}+1})+\frac{{a_1}}{{a_2}}{\pi}^2=0$, $\pi=\Tr(x)$};\\
%0, & \mbox{ otherwise}.
%\end{array}\right.\\
=&\left\{\begin{array}{ll}
1, & \mbox{ if ${\Tr}(x^{q^{e}+1})+\frac{{a_1}}{{a_2}}{{\Tr}(x)}^2=0$};\\
0, & \mbox{ otherwise},
\end{array}\right.
\end{align*}
which is similar to the proof of \cite[Lemma 16]{TFF}.
Therefore, we have
\begin{align}\label{equa1}
N_1:=&\frac{1}{q^2}\sum_{x\in\fqm}\sum_{\pi\in\fq}(\sum_{w\in\fq}\chi_1(w(\pi-\Tr(x))))(\sum_{y\in\fq}\chi_1(y(\Tr(x^{q^{e}+1})+\frac{a_1\pi^2}{a_2}))))\nonumber\\
=&\frac{1}{q^2}\sum_{x\in\fqm}\sum_{\pi,w\in\fq}\chi_1(w(\pi-\Tr(x)))\nonumber\\
&+\frac{1}{q^2}\sum_{y\in\fq^*}\sum_{\pi,w\in\fq}\sum_{x\in\fqm}\chi_1(y\Tr(x^{q^{e}+1})+\frac{a_1y\pi^2}{a_2}+w(\pi-\Tr(x)))\nonumber\\
=&q^{m-1}+\Omega,
\end{align}
where $$\Omega:=\frac{1}{q^2}\sum_{y\in\fq^*}\sum_{\pi,w\in\fq}
\chi_1(\frac{a_1y\pi^2}{a_2}+w\pi)\sum_{x\in\fqm}
\chi(yx^{q^{e}+1}-wx).$$

By Lemmas \ref{m6}, \ref{m7} and \ref{n1}, we have
\[
\Omega=-q^{\ell+\epsilon-2}\sum_{y\in\fq^*}\sum_{w\in\fq}\sum_{\pi\in\fq}
\chi_1(\frac{a_1y\pi^2}{a_2}+w\pi-\frac{m_pw^2}{4y}).
\]
This together with Lemma \ref{m2} and the fact ${G}^2_{1}({\eta}_{1})={\eta}_{1}(-1)q$ gives
\begin{align}\label{equa2}
\Omega=&-q^{\ell+\epsilon-2}\sum_{y\in\fq^*}
\sum_{w\in\fq}G_1(\eta_1)\eta_1(\frac{a_1y}{a_2})
\chi_1(\frac{-a_2w^2}{4a_1y}-\frac{m_pw^2}{4y})\nonumber\\
=&-q^{\ell+\epsilon-2}G_1^2(\eta_1)\sum_{y\in\fq^*}\eta_1(\frac{a_1y}{a_2})\eta_1(\frac{-a_2}{4a_1y}-\frac{m_p}{4y})\nonumber\\
=&-(q-1)q^{\ell+\epsilon-1}{\eta}_{1}(1+\frac{{a_1} m_p}{{a_2}}).
\end{align}

By \eqref{equa1} and \eqref{equa2}, it completes the proof.
\end{proof}

\subsection{The proofs of Theorems \ref{t1} and \ref{t7}}
Here we only give the proofs of 1) and 2) of Theorem \ref{t1}, where $(u,v)=(0,0)$, and the remaining parts of Theorem \ref{t1} and Theorem \ref{t7} can be similarly proved.

It's obvious that $\C_{D_1}$ has length $n_{D_1}=N-1$, where $N$ is given as in \eqref{n2}. Then by Lemma \ref{lem1}, we have
%According to Lemma \ref{lem1}, the length of $\C_{D_1}$ is given by $n=N-1$, where $N$ is given as in \eqref{n2}. Then
\begin{align}\label{D1}
n_{D_1}&=\left\{\begin{array}{ll}
q^{m-2}-1, & {\rm if} \,\, m_p\neq0;\\
q^{m-2}-(q-1)q^{\ell+\epsilon-1}-1, & {\rm if} \,\,  m_p=0.
\end{array}\right.
\end{align}
Next we compute the Hamming weight $w_{H}(c_{b})$ of the codewords $c_{b}$ in $\C_{D_1}$. It is clear that $$w_H({c}_{b})=N-N_2,$$
where $$N_2:=|\{x\in \fqm: \Tr(x^{q^e+1})=u,\Tr(x)=v,\Tr(bx)=0 \}|.$$ Similar to the computation of $N$, it gives
\begin{align*}
N_2=&\frac{1}{q^3}\sum_{x\in\mathbb{F}_{q^m} }\sum_{y_1\in\mathbb{F}_{q}}\chi_1(y_1(\Tr(x^{q^{e}+1})-u))\sum_{y_2\in\mathbb{F}_{q}}
\chi_1(y_2(\Tr(x)-v))\sum_{y_3\in\mathbb{F}_{q}}
\chi_1(y_3\Tr(bx))
\nonumber\\
=&\frac{1}{q^3}\sum_{x\in\fqm}\sum_{{y}_{1}\in\mathbb{F}_{q}}\sum_{{y}_{2}\in
\mathbb{F}_{q}}\sum_{{y}_{3}\in
\mathbb{F}_{q}}\chi({y}_{1}x^{q^e+1}+({y}_{2}+{y}_{3}b)x)\mathbb{\chi}_{1}(-u{y}_{1}-v{y}_{2})\nonumber\\
=&\frac{1}{q^3}(\Theta_b(u,v)+\Omega_b(v)),
\end{align*}
where $\Theta_b(u,v)$ is defined as in \eqref{nn} and $${\Omega}_{b}(v):=\sum_{x\in\mathbb{F}_{q^m}}\sum_{{y}_{2}\in\mathbb{F}_{q}}\sum_{{y}_{3}
\in\mathbb{F}_{q}}\chi(({y}_{2}+{y}_{3}b)x)\mathbb{\chi}_{1}(-v{y}_{2}).$$

Assume that $(u,v)=(0,0)$ in the following proof. Recall that the codeword in ${\C}_{{D}_{1}}$ is given by ${c}_{b}={({\Tr}(bx))}_{x\in {D}_{1}}$ where $b\in \fqm$.
It is obvious that ${c}_{b}$ is a zero codeword if $b\in \fq$ since $\Tr(x)=0$ for $x\in D_1$. %{\color{red}Therefore, the dimension of ${\C}_{{D}_{1}}$ is $m-1$.}
For $b\in \fqm\setminus\fq$, the Hamming weight $w_H({c}_{b})$ of $c_b$ in $\C_{D_1}$ is given by
\begin{eqnarray}\label{eq1}
w_H({c}_{b})=N-N_2=N-\frac{1}{q^3}(\Theta_b(u,v)+{\Omega}_{b}(v)),%\frac{1}{q^3}({\Theta}_{b}(0,0)+{\Theta}_{2}(0))
\end{eqnarray}
where $N$ is determined by Lemma \ref{lem1}.
For $b\in \fqm\setminus\fq$, it gives ${\Omega}_{b}(0)=q^m$. For 1) and 2) of Theorem \ref{t1}, we study the weight distribution of ${\C}_{{D}_{1}}$ by considering the following two cases.

1) $m_p\neq0$. By \eqref{eq1} and Lemmas \ref{lem1} and \ref{n3}, the Hamming weight $w_H({c}_{b})$ is equal to
$$\left\{\begin{array}{ll}
0, & {\rm if} \,\,b\in\fq;\\
(q-1)q^{m-3}, & {\rm if}\,\,b\in T,b\notin\fq,\,A=0\,\,{\rm or}\,\,b\not\in T;\\
(q-1)(q^{m-3}+q^{\ell+\epsilon-2}), & {\rm if}\,\,b\in T,b\notin\fq,{\Tr}(\gamma^{q^u+1})=0,{\Tr}({\gamma})\neq0;\\
(q-1)(q^{m-3}+q^{\ell+\epsilon-2}\eta_1(-m_p{\Tr}(\gamma^{q^u+1}))), & {\rm if}\,\,b\in T,b\notin\fq,{\Tr}(\gamma^{q^u+1})\neq0,{\Tr}({\gamma})=0;\\
(q-1)(q^{m-3}+q^{\ell+\epsilon-2}\eta_1(A)), & {\rm if}\,\, b\in T,b\notin\fq,{\Tr}(\gamma^{q^u+1})\neq0,{\Tr}({\gamma})\neq0,A\neq0,
\end{array}\right.$$
where $A={\Tr}(\gamma)^2-m_p{\Tr}(\gamma^{q^e+1})$.
Let  ${w}_{1}:=(q-1)q^{m-3}$, ${w}_{2}:=(q-1)(q^{m-3}+q^{\ell+\epsilon-2})$ and ${w}_{3}:=(q-1)(q^{m-3}-q^{\ell+\epsilon-2})$.
Note that ${w}_{2}>{w}_{1}>{w}_{3}$, and $w_3>0$ due to $m>2\epsilon+2$. This shows that the minimum distance $d$ of $\C_{D_1}$ is equal to $(q-1)(q^{m-3}-q^{\ell+\epsilon-2})$ and the dimension of $\C_{D_1}$ is equal to $m-1$. By Lemmas \ref{mm1} and \ref{n12}, we have
$${A}_{{w}_{1}}=q^{m-1}-q^{m-2\epsilon-1}+q^{m-2\epsilon-2}-1.$$
Since $0\notin D_1$, the minimum distance of $\C^\perp_{D_1}$ is bigger than one, i.e., $A^\perp_1=0$. Then by Pless power moments (see \cite{HP}, p.256), we have $$\left\{\begin{array}{ll}
{A}_{{w}_{1}}+{A}_{{w}_{2}}+{A}_{{w}_{3}}=q^{m-1}-1,\\
{w}_{1}{A}_{{w}_{1}}+{w}_{2}{A}_{{w}_{2}}+{w}_{3}{A}_{{w}_{3}}=q^{m-2}(q-1)n_{D_1}.
\end{array}\right.$$
Then it gives that ${A}_{{w}_{2}}=\frac{(q-1)}{2}(q^{m-2\epsilon-2}-q^{\ell-\epsilon-1})$ and ${A}_{{w}_{3}}=\frac{(q-1)}{2}(q^{m-2\epsilon-2}+q^{\ell-\epsilon-1})$.
%$$\left\{\begin{array}{ll}
%{A}_{{w}_{2}}=\frac{(q-1)}{2}(q^{m-2\epsilon-2}-q^{\ell-\epsilon-1})\\
%{A}_{{w}_{3}}=\frac{(q-1)}{2}(q^{m-2\epsilon-2}+q^{\ell-\epsilon-1})
%\end{array}\right.$$

2) $m_p=0$. By \eqref{eq1} and Lemmas \ref{lem1} and \ref{n3}, we have  $$w_H({c}_{b})=\left\{\begin{array}{ll}
0, & {\rm if} \,\, b\in\fq;\\
{w}_{1}:=(q-1)q^{m-3}, & {\rm if}\,\,b\in T,b\notin\fq,{\Tr}(\gamma)=0,{\Tr}(\gamma^{q^e+1})=0;\\
{w}_{2}:=(q-1)(q^{m-3}-q^{\ell+\epsilon-1}), & {\rm if}\,\,b\in T,b\notin \fq,{\Tr}(\gamma)=0, {\Tr}(\gamma^{q^e+1})\neq0;\\
{w}_{3}:=(q-1)(q^{m-3}+q^{\ell+\epsilon-2}-q^{\ell+\epsilon-1}), &{\rm if}\,\,b\in T,b\notin\fq,{\Tr}(\gamma)\neq0 \,\,{\rm or}\,\,b\not\in T.
\end{array}\right.$$
Note that ${w}_{1}>{w}_{3}>{w}_{2}$, and $w_2>0$ due to $m>3$. This shows that the minimum distance $d$ of $\C_{D_1}$ is $(q-1)(q^{m-3}-q^{\ell+\epsilon-1})$ and the dimension of $\C_{D_1}$ is $m-1$. By Lemmas \ref{mm1} and \ref{lem1}, we can get that ${A}_{{w}_{1}}=q^{m-2\epsilon-3}-(q-1)q^{\ell-\epsilon-2}-1$ and ${A}_{{w}_{3}}=q^{m-1}-q^{m-2\epsilon-2}$.
%$$\left\{\begin{array}{ll}
%{A}_{{w}_{1}}=q^{m-2\epsilon-3}-(q-1)q^{\ell-\epsilon-2}-1\\
%{A}_{{w}_{3}}=q^{m-1}-q^{m-2\epsilon-2}
%\end{array}\right.$$
From Pless power moments (see \cite{HP}, p.256), it leads to ${A}_{{w}_{2}}=(q-1)(q^{m-2\epsilon-3}+q^{\ell-\epsilon-2})$.
This completes the proof.

\subsection{The proofs of Theorems \ref{t4}, \ref{t10}, \ref{t5} and \ref{t11}}
In this subsection, we investigate the linear codes  $\C_{D_2}$ and $\C_{D_3}$ of the form \eqref{000}  with defining sets ${D_2}$ and ${D_3}$. Here we only give the proof of Theorem \ref{t4}, where $(u,v)=(0,0)$, and Theorems \ref{t10}, \ref{t5} and \ref{t11} can be similarly proved. Let ${E}_{1}=\{x\in \mathbb{F}^*_{q^m}:{\Tr}(x^{q^e+1})=u\}$ and ${E}_{2}=\{x\in \mathbb{F}^*_{q^m}:{\Tr}(x)=v\}$, which implies that ${D}_{2}={E}_{1}\setminus{D}_{1}$ and ${D}_{3}={D}_{2}\cup{E}_{2}$. Assume that $(u,v)=(0,0)$ in the following proof.

By the definition of $E_1$, the length of $\C_{E_1}$ is given by ${n}_{{E}_{1}}=|\{x\in \mathbb{F}^*_{q^m}:{\Tr}(x^{q^e+1})=0\}|$. By Lemmas \ref{m4} and \ref{n1}, we have
\begin{align}\label{D2}
{n}_{{E}_{1}}=&\frac{1}{q}\sum_{x\in\fqm }\sum_{y\in\mathbb{F}_{q}}\chi(yx^{q^e+1})-1\nonumber\\=&q^{m-1}+\frac{1}{q}\sum_{x\in\fqm }\sum_{y\in\mathbb{F}^*_{q}}\chi(yx^{q^e+1})-1\nonumber\\=&q^{m-1}-(q-1)q^{\ell+\epsilon-1}-1.
\end{align}
By \eqref{D1} and \eqref{D2}, the length of ${\C}_{{D}_{2}}$ is
$${n}_{{D}_{2}}={n}_{{E}_{1}}-{n}_{{D}_{1}}=\left\{\begin{array}{ll}
(q-1)(q^{m-2}-q^{\ell+\epsilon-1}), & {\rm if} \,\, m_p\neq0;\\
q^{m-1}-q^{m-2}, & {\rm if} \,\, m_p=0.\\
\end{array}\right.$$
For $b\in\fqm^*$, define $\Psi_b=|\{x\in\fqm:{\Tr}(x^{q^e+1})=0,{\Tr}(bx)=0\}|$. Similar to the proof of Lemma \ref{n3}, we have
$$\Psi_b=\left\{\begin{array}{ll}
q^{m-2}-(q-1)q^{\ell+\epsilon-1}, & {\rm if} \,\, b\in T,{\Tr}(\gamma^{q^e+1})=0;\\
q^{m-2}, & {\rm if} \,\, b\in T,{\Tr}(\gamma^{q^e+1})\neq0;\\
q^{m-2}-(q-1)q^{\ell+\epsilon-2}, & {\rm if} \,\, b\notin T.\\
\end{array}\right.$$
Let ${w}_{H}({\hat{c}}_{b})$ denote the Hamming weight of the codeword ${\hat{c}}_{b}$ in $\C_{E_1}$. It is clear that
 $${w}_{H}({\hat{c}}_{b})={n}_{{E}_{1}}+1-\Psi_b=\left\{\begin{array}{ll}
(q-1)q^{m-2}, & {\rm if} \,\, b\in T,{\Tr}(\gamma^{q^e+1})=0;\\
(q-1)(q^{m-2}-q^{\ell+\epsilon-1}), & {\rm if} \,\, b\in T,{\Tr}(\gamma^{q^e+1})\neq0;\\
(q-1)q^{m-2}-(q-1)^2q^{\ell+\epsilon-2}, & {\rm if} \,\, b\notin T.\\
\end{array}\right.$$

Let ${w}_{H}({\bar{c}}_{b})$ denote the Hamming weight of the codeword ${\bar{c}}_{b}$ in $\C_{D_2}$. Due to ${D}_{2}={E}_{1}\setminus{D}_{1}$, the Hamming weights of ${c}_{b},{\bar{c}}_{b}$ and  ${\hat{c}}_{b}$ satisfy that
 $${w}_{H}({\bar{c}}_{b})={w}_{H}({\hat{c}}_{b})-{w}_{H}({c}_{b}).$$
Next we compute the weight distribution of $\C_{D_2}$ by considering the following two cases.

1) $m_p\neq0$. The Hamming weight $w_H({\bar{c}}_{b})$ is equal to
$$\left\{\begin{array}{ll}
0, & {\rm if} \,\, b=0;\\
{w}_{1}:=(q-1)(q^{m-2}-q^{\ell+\epsilon-1}), & {\rm if} \,\, b\in\fq^*;\\
{w}_{2}:=(q-1)^2q^{m-3}, &{\rm if} \,\,b\in T,b\notin\fq,{\Tr}(\gamma^{q^e+1})=0,{\Tr}(\gamma)=0;\\
{w}_{3}:=(q-1)(q^{m-2}-q^{m-3}-q^{\ell+\epsilon-2}), & {\rm if}\,\,b\in T,b\notin\fq,{\Tr}(\gamma^{q^e+1})=0,{\Tr}(\gamma)\neq0;\\
{w}_{4}:=(q-1)(q^{m-2}-q^{m-3}-q^{\ell+\epsilon-1}), & {\rm if}\,\,b\in T,b\notin\fq,{\Tr}(\gamma^{q^e+1})\neq0,{\Tr}(\gamma)\neq0,\\&\,\,\,\,{\Tr}(\gamma)^2-m_p{\Tr}(\gamma^{q^e+1})=0 ;\\
{w}_{5}:=(q-1)(q^{m-2}-q^{\ell+\epsilon-1}-q^{m-3}-q^{\ell+\epsilon-2}),& {\rm if} \,\,b\in T,b\notin\fq,{\Tr}(\gamma^{q^e+1})\neq0,{\Tr}(\gamma)=0\\&\,\,\,\,\,\,{\eta}_{1}(-m_p{\Tr}(\gamma^{q^e+1}))=1\,\,{\rm or}\\&\,\,\,\,{\Tr}(\gamma^{q^e+1})\neq0,{\Tr}(\gamma)\neq0,{\eta}_{1}(A)=1;\\
{w}_{6}:=(q-1)^2(q^{m-3}-q^{\ell+\epsilon-2}), & \mbox{otherwise},
\end{array}\right.$$
where $A={\Tr}(\gamma)^2-m_p{\Tr}(\gamma^{q^e+1})$. Note that $w_1>w_2>w_3>w_6>w_4>w_5$, and $w_5>0$ due to $m>2\epsilon+2$. This shows that the minimum distance $d$ of $\C_{D_2}$ is $(q-1)(q^{m-2}-q^{\ell+\epsilon-1}-q^{m-3}-q^{\ell+\epsilon-2})$ and the dimension of ${\C}_{{D}_{2}}$ is $m$.

By Lemmas \ref{lem1} and \ref{n12}, it's clear that ${A}_{{w}_{1}}=q-1$, ${A}_{{w}_{2}}=q^{m-2\epsilon-2}-1$, ${A}_{{w}_{3}}=(q-1)(q^{m-2\epsilon-2}-q^{\ell-\epsilon-1})$ and
${A}_{{w}_{4}}=|\{b\in\fqm:b\in T,{\Tr}(\gamma)^2-m_p{\Tr}(\gamma^{q^e+1})=0\}|-{A}_{{w}_{2}}-|\{b\in\fq\}|=(q-1)(q^{m-2\epsilon-2}-1)$.
From Pless power moments (see \cite{HP}, p.256), we can get that ${A}_{{w}_{5}}=\frac{q-1}{2}(2q^{\ell-\epsilon-1}-2q^{m-2\epsilon-2}+q^{m-2\epsilon-1}-q^{\ell-\epsilon})$ and ${A}_{{w}_{6}}=q^m+\frac{1}{2}(q^{\ell-\epsilon+1}-q^{m-2\epsilon}-q^{m-2\epsilon-1}-q^{\ell-\epsilon})$.

2) $m_p=0$. The Hamming weight $w_H({\bar{c}}_{b})$ is equal to
$$w_H({\bar{c}}_{b})=\left\{\begin{array}{ll}
0, & {\rm if} \, b=0;\\
{w}_{1}:=(q-1)q^{m-2}, & {\rm if} \,\, b\in\mathbb{F}^*_{q};\\
{w}_{2}:=(q-1)^2(q^{m-3}+q^{\ell+\epsilon-2}), & {\rm if} \,\,b\in T,b\notin\fq,{\Tr}(\gamma)\neq0,{\Tr}(\gamma^{q^e+1})=0 ;\\
{w}_{3}:=(q-1)(q^{m-2}-q^{m-3}-q^{\ell+\epsilon-2}), & {\rm if}\,\,b\in T,b\notin\fq,{\Tr}(\gamma)\neq0, {\Tr}(\gamma^{q^e+1})\neq0;\\
{w}_{4}:=(q-1)^2q^{m-3}, & {\rm if}\,\,b\in T,b\not\in \fq,{\Tr}(\gamma)=0 \,\,{\rm or}\,\,b\not\in T.
\end{array}\right.$$
Note that $w_1>w_2>w_4>w_3$, and $w_3>0$ due to $m>3$. This shows that the minimum distance $d$ of $\C_{D_2}$ is equal to $(q-1)(q^{m-2}-q^{m-3}-q^{\ell+\epsilon-2})$ and the dimension of ${\C}_{{D}_{2}}$ is $m$.
It is clear that ${A}_{{w}_{1}}=q-1$, ${A}_{{w}_{2}}=(q-1)q^{m-2\epsilon-2}$, and ${A}_{{w}_{4}}=q^m-q^{m-2\epsilon}+q^{m-2\epsilon-1}-q$.
%\begin{align*}
%{A}_{{w}_{4}}=&|\{b\in\fqm:b\not\in T\}|+|\{b\in\fqm:b\in T,{\Tr}(\gamma)=0,{\Tr}(\gamma^{q^e+1})\neq0\}|\nonumber  \\
%& +|\{b\in\fqm:b\in T,{\Tr}(\gamma)=0,{\Tr}(\gamma^{q^e+1})=0\}|-|\{b\in\fq\}|\nonumber  \\
%    =&q^m-q^{m-2\epsilon}+q^{m-2\epsilon-1}-q.\nonumber
%\end{align*}
Then ${A}_{{w}_{3}}=(q-1)^2q^{m-2\epsilon-2}$ from Pless power moments (see \cite{HP}, p.256). This completes the proof.

\begin{table}\footnotesize
\caption{Some good $[n,k,d]$ linear codes over $\fq$ obtained in this paper}\label{tabb}
\newcommand{\tabincell}[2]{\begin{tabular}{@{}#1@{}}#2\end{tabular}}
\centering
\begin{tabular}{c c c c c}
\hline
$(q,m,e)$   & $(u,v)$ & Parameters  & Optimal?  &This paper                                \\ \hline
$(3,4,1)$       &$(0,0)$    &$(8,2,6)$    &yes  &Theorem \ref{t1}\\
$(5,4,1)$      &$(0,0)$    &$(24,2,20)$  &yes  &Theorem \ref{t1}     \\
$(7,4,1)$    &$(0,0)$    &$(48,2,42)$  &yes  &Theorem \ref{t1}     \\
$(9,4,1)$    &$(0,0)$    &$(80,2,72)$  &yes   &Theorem \ref{t1}    \\
$(3,4,2)$    &$(0,0)$    &$(8,3,4)$  &almost optimal &Theorem \ref{t1}    \\
$(5,4,2)$    &$(1,0)$    &$(20,3,14)$  &almost optimal   &Theorem \ref{t1}    \\
$(9,4,2)$    &$(1,0)$    &$(72,3,62)$  &almost optimal   &Theorem \ref{t1}  \\
$(3,4,1)$    &$(1,0)$    &$(18,3,12)$  &yes     &Theorem \ref{t1}  \\
$(5,4,1)$    &$(2,0)$    &$(50,3,40)$  &yes     &Theorem \ref{t1}  \\
$(7,4,1)$    &$(1,0)$    &$(98,3,84)$  &yes     &Theorem \ref{t1}  \\
$(3,4,2)$    &$(0,2)$    &$(6,4,2)$  &yes     &Theorem \ref{t7}  \\
$(5,4,2)$    &$(0,2)$    &$(20,4,14)$  &yes     &Theorem \ref{t7}  \\
$(7,4,2)$    &$(0,2)$    &$(42,4,34)$  &yes     &Theorem \ref{t7}  \\
$(9,4,2)$    &$(0,2)$    &$(72,4,62)$  &yes     &Theorem \ref{t7}  \\
$(3,6,1)$    &$(0,2)$    &$(81,6,51)$  &yes     &Theorem \ref{t7}  \\
$(3,4,2)$    &$(2,2)$    &$(12,4,6)$  &yes     &Theorem \ref{t7}  \\
$(3,4,2)$    &$(1,1)$    &$(9,4,4)$  &almost optimal     &Theorem \ref{t7}  \\
$(3,4,1)$    &$(1,1)$    &$(9,3,6)$  &yes     &Theorem \ref{t7}  \\
$(9,4,1)$    &$(1,1)$    &$(81,3,72)$  &yes     &Theorem \ref{t7}  \\ \hline
\end{tabular}
\end{table}

\section{Concluding remarks} \label{s5}
In this paper, we investigated the $q$-ary linear codes defined by \eqref{000} and \eqref{001}. With detailed computation, we obtained several classes of $t$-weight linear codes over $\fq$ with flexible parameters, where $t=3,4,5,6$. The parameters and weight distributions of these codes were completely determined by using Weil sums and Gauss sums. Moreover, from our constructions, several classes of optimal linear codes meeting the Griesmer bound were derived (see Corollaries \ref{b1} and \ref{b3}), and some (almost) optimal codes can be produced as shown in Table \ref{tabb}.
%Some of the linear codes presented in this paper are minimal and can be used to construct secret sharing schemes.
%Minimal linear codes can be used to construct secret sharing schemes \cite{CDY,DD,YD}. According to Ashikhmin-Barg Theorem \cite{ab}, namely, a linear code over $\fq$ is minimal if  ${w}_{min}/{w}_{max}>(q-1)/q$, where ${w}_{min}$ and ${w}_{max}$ denote the minimum and maximum nonzero weight of a linear code, it can be readily verified that ours codes in Theorem \ref{t1}, Theorem \ref{t10} and Theorem \ref{t11} are minimal for large enough $m$.

\section*{Acknowledgements}
This work was supported by the Major Program(JD) of Hubei Province (No. 2023BAA027), the National Natural Science Foundation of China (No. 12401688), the Natural Science Foundation of Hubei Province of China (No. 2024AFB419) and the innovation group project of the natural science foundation of Hubei Province of China (No. 2023AFA021).

\end{document}